\newcommand\ie{{\em i.e.}}
\def\A{\mathcal{A}}
\def\B{\mathcal B}
\def\C{\mathbb C}
\def\D{\mathcal D}
\def\d{\mathrm d}
\def\e{\mathop{\mathrm{e}}\nolimits}
\def\F{\mathscr F}
\def\G{\mathcal G}
\def\H{\mathcal H}
\def\K{\mathcal K}
\def\M{\mathscr M}
\def\MM{\mathcal M}
\def\N{\mathbb N}
\def\P{\mathcal P}
\def\R{\mathbb R}
\def\S{\mathbb S}
\def\SS{\mathcal S}
\def\T{\mathbb T}
\def\U{\mathcal U}
\def\UU{\mathscr U}
\def\V{\mathscr V}
\def\X{\mathcal X}
\def\Y{\mathcal Y}
\def\bW{\mathbf W}
\def\bG{\mathbf \Gamma}
\def\bE{\mathbf E_{\rm p}}
\def\Z{\mathbb Z}
\def\Hrond{\mathscr H}
\def\Pv{\mathrm{Pv}}
\def\Ran{{\mathrm{Ran}}}
\def\Wind{{\mathrm{Wind}}}
\def\Ker{{\mathrm{Ker}}}
\def\Im{{\mathrm{Im}}}
\def\Re{{\mathrm{Re}}}
\def\det{{\mathrm{det}}}
\def\sgn{{\rm sgn}}
\def\Tr{{\rm Tr}}
\def\tr{{\rm tr}}
\def\ind{{\rm ind}}
\def\bpm{\begin{smallmatrix}}
\def\epm{\end{smallmatrix}}
\def\dom{\mathcal D}
\def\f{f}
\def\HS{\mathfrak h}
\def\ltwo{\mathsf{L}^{\:\!\!2}}
\def\linf{\mathsf{L}^{\:\!\!\infty}}
\def\12{{\textstyle\frac12}}
\def\i2{{\textstyle\frac{i}{2}}}
\def\({\scalebox{1.15}{$\hspace{-0.0mm}\big($}}
\def\){\scalebox{1.15}{$\hspace{-0.25mm}\big)$}}
\newcommand{\bR}{[-\infty,+\infty]}
\newcommand{\bRp}{[0,+\infty]}
\def\sgn{\mathop{\mathrm{sgn}}\nolimits}
\def\Gammaq{\Gamma_{\!\hbox{\scriptsize $\square$}}}
\def\AB{A\!B}
\def\CD{C\!D}
\def\2{\mathfrak{int}}
\def\ud{{\textstyle \frac{1}{2}}}
\def\b{{\tt b}}
\def\dd{{\tt d}}
\def\dim{{\rm dim}}
\def\EE{\mathcal E}
\def\QQ{\mathcal Q}
\def\JJ{\mathcal J}
\newtheorem{Theorem}{Theorem}[section]
\newtheorem{Remark}[Theorem]{Remark}
\newtheorem{Lemma}[Theorem]{Lemma}
\newtheorem{Assumption}[Theorem]{Assumption}
\newtheorem{Corollary}[Theorem]{Corollary}
\newtheorem{Proposition}[Theorem]{Proposition}
\newtheorem{Definition}[Theorem]{Definition}
\newtheorem{Example}[Theorem]{Example}
\begin{document}


\title{Levinson's theorem: an index theorem in scattering theory}

\author{S. Richard\footnote{Supported by JSPS Grant-in-Aid for Young Scientists A no 26707005.}}

\date{\small}
\maketitle \vspace{-1cm}

\begin{quote}
\emph{
\begin{itemize}
\item[] Graduate school of mathematics, Nagoya University,
Chikusa-ku, Nagoya 464-8602, Japan; On leave of absence from
Universit\'e de Lyon; Universit\'e
Lyon 1; CNRS, UMR5208, Institut Camille Jordan,
43 blvd du 11 novembre 1918, F-69622
Villeur\-ban\-ne-Cedex, France
\item[] \emph{E-mail:} richard@math.nagoya-u.ac.jp
\end{itemize}
}
\end{quote}


\begin{abstract}
A topological version of Levinson's theorem is presented.
Its proof relies on a $C^*$-algebraic framework which is introduced in detail.
Various scattering systems are considered in this framework, and more coherent
explanations for the corrections due to thresholds effects or for the regularization
procedure are provided.
Potential scattering, point interactions, Friedrichs model and Aharonov-Bohm operators
are part of the examples which are presented.
Every concepts from scattering theory or from $K$-theory are introduced from scratch.
\end{abstract}

\textbf{2010 Mathematics Subject Classification:}  47A40, 19K56, 81U05

\smallskip

\textbf{Keywords:} Levinson's theorem, scattering theory, wave operators, K-theory, winding number,
Connes' pairing

{\small
\tableofcontents
}
\section{Introduction}\label{sec_intro}
\setcounter{equation}{0}

Levinson's theorem is a relation between the number of bound states of a quantum mechanical
system and an expression related to the scattering part of that system.
Its original formulation was established by N. Levinson in \cite{Lev} in the context
of a Schr\"odinger operator with a spherically symmetric potential, but subsequently
numerous authors extend the validity of such a relation in various contexts or for various models.
It is certainly impossible to quote all papers or books containing either Levinson's theorem in
their title or in the subtitle of a section, but let us mention a few key references
\cite{Bolle,Jauch,Ma,New1,Newbook,New2,OB,Rob,ReedS}.
Various methods have also been used for the proof of this relation,
as for example the Jost functions, the Green functions, the Sturm-Liouville theorem,
and most prominently the spectral shift function.
Note that expressions like the phase shift, the Friedel sum rule or some trace formulas
are also associated with Levinon's theorem.

Our aim in this review paper\footnote{
This paper is an extended version of a mini-course given at the \emph{International conference on spectral theory
and mathematical physics} which took place in Santiago (Chile) in November 2014. The author
takes this opportunity to thank the organizers of the conference for their kind invitation and support.}
is to present a radically different approach for Levinson's theorem.
Indeed, during the last couple of years it has been shown that once recast in a $C^*$-algebraic framework
this relation can be understood as an index theorem in scattering theory.
This new approach does not only shed new light on this theorem, but also provides a more coherent and natural way to take
various corrections or regularization processes into account.
In brief, the key point in our proof of Levinson's theorem consists in
evaluating the index of the wave operator by the winding number of an
expression involving not only the scattering operator, but also new operators that describe the
system at thresholds energies.

From this short description, it clearly appears that this new approach relies on two distinct fields of mathematics.
On the one hand, the wave operators and the scattering operator belong to the framework of
spectral and scattering theory, two rather well-known subjects in the mathematical physics community. On the other hand,
index theorem, winding number, and beyond them index map, K-theory and Connes' pairing are familiar tools for
operator algebraists.
For this reason, a special attention has been given to briefly introduce all
concepts which belong only to one of these communities.
One of our motivations in writing this survey is make this approach of Levinson's theorem accessible to both readerships.

Let us now be more precise about the organization of this paper. In Section \ref{sec_baby}
we introduce a so-called ``baby model" on which the essence of our approach can be fully presented.
No prior knowledge on scattering theory or on K-theory is necessary, and all computations can be
explicitly performed. The construction might look quite {\it ad hoc}, but this feeling will hopefully
disappear once the full framework is established.

Section \ref{sec_scat} contains a very short introduction to scattering theory, with the main requirements
imposed on the subsequent scattering systems gathered in Assumption \ref{Assum_1}.
In Section \ref{sec_Kth} we gradually introduce the $C^*$-algebraic framework, starting with a brief
introduction to $K$-theory followed by the introduction of the index map. An abstract
topological Levinson's theorem in then proposed in Theorem \ref{thm_Lev}.
Since this statement still contains an implicit condition, we illustrate our purpose
by introducing in Section \ref{subsec_example} various isomorphic versions of the algebra which is going to play
a key role in the subsequent examples. In the last part of this section we show how
the previous computations performed on the baby model can be explained in this algebraic framework.
Clearly, Sections \ref{sec_scat} and \ref{sec_Kth} can be skipped by experts in these respective fields,
or very rapidly consulted for the notations.

In Section \ref{Sec_1D} we gather several examples of scattering systems which are either one dimensional or essentially
one dimensional. With the word ``essential" we mean that a rather simple reduction of the system under consideration leads
to a system which is not trivial only in a space of dimension one.
Potential scattering on $\R$ is presented and an explanation of the usual $\frac12$-correction is provided.
With another example, we show that embedded or non-embedded eigenvalues play exactly the same role for Levinson's theorem,
a question which had led to some controversies in the past \cite{Dreyfus}.
A sketchy presentation of a few other models is also proposed, and references to the corresponding papers are provided.

With Section \ref{Sec_3D} we start the most analytical section of this review paper. Indeed,
a key role in our approach is played by the wave operators, and a good understanding of them is thus necessary.
Prior to our investigations such a knowledge on these operators was not available in the literature, and
part of our work has consisted in deriving new explicit formulas for these operators. In the previous section
the resulting formulas are presented but not their proofs. In Section \ref{subsec_Wave3} we provide
a rather detailed derivation of these formulas for a system of potential scattering in $\R^3$,
and the corresponding computations are based on a stationary approach of scattering theory.
On the algebraic side this model is also richer than the ones contained in Section \ref{Sec_1D} in the sense that
a slight extension of the algebraic framework introduced in Section \ref{sec_Kth} together with a regularization procedure are necessary.
More precisely, we provide a regularized formula for the computation of the winding number of suitable elements of $C\big(\S;\K_p(\HS)\big)$,
the algebra of continuous functions on the unit circle with value in the $p$-Schatten class of a Hilbert space $\HS$.

A very brief description of the wave operators for potential scattering in $\R^2$ is provided in Section \ref{Sec_2D}.
However, note that for this model a full understanding of the wave operators is not available yet, and that
further investigations are necessary when resonances or eigenvalues take place at the threshold energy $0$.
Accordingly, a full description of a topological Levinson's theorem does not exist yet.

In Section \ref{Sec_AB} we extend the $C^*$-algebraic framework in a different direction, namely to index theorems for families.
First of all, we introduce a rather large family of self-adjoint operators corresponding to so-called Aharonov-Bohm operators.
These operators are obtained as self-adjoint extensions of a closed operator with deficiency indices $(2,2)$.
A Levinson's theorem is then provided for each of them, once suitably compared with the usual Laplace operator on $\R^2$.
For this model, explicit formulas for the wave operators and for the scattering operator are provided,
and a thorough description of the computation of the winding number is also given.
These expressions and computations are presented in Sections \ref{sec_AB_model} and \ref{sec_ped}.

In order to present a Levinson's theorem for families, additional information on cyclic cohomology, $n$-traces
and Connes' pairing are necessary. A very brief survey is provided in Section \ref{sec_cycle}.
A glimpse on dual boundary maps is also given in Section \ref{sec_dual}.
With these information at hands, we derive in Section \ref{sec_higher} a so-called higher degree Levinson's theorem.
The resulting relation corresponds to the equality between the Chern number of a vector bundle given by the
projections on the bound states of the Aharonov-Bohm operators, and a $3$-trace applied to the scattering part of the system.
Even if a physical interpretation of this equality is still lacking, it is likely that it can play a role in the theory of
topological transport and/or adiabatic processes.

Let us now end this Introduction with some final comments.
As illustrated by the multiplicity of the examples, the underlying $C^*$-algebraic
framework for our approach of Levinson's theorem is very flexible and rich.
Beside the extensions already presented in Sections \ref{Sec_3D} and \ref{Sec_AB},
others are appealing. For example, it would certainly be interesting to recast the generalized
Levinson's theorem exhibited in \cite{Rai,Tie} in our framework.
Another challenging extension would be to find out the suitable algebraic framework for
dealing with scattering systems described in a two-Hilbert spaces setting.
Finally, let us mention similar investigations \cite{BS,SchB} which have been performed on discrete
systems with the same $C^*$-algebraic framework in the background.

\section{The baby model}\label{sec_baby}
\setcounter{equation}{0}

In this section we introduce an example of a scattering system for which everything can
be computed explicitly. It will allow us to describe more precisely the kind of results we are looking for,
without having to introduce any $C^*$-algebraic framework or too much information on scattering theory.
In fact, we shall keep the content of this section as simple as possible.

Let us start by considering the Hilbert space $\ltwo(\R_+)$ and the Dirichlet Laplacian $H_{\rm D}$ on $\R_+$. More precisely,
we set $H_{\rm D}=-\frac{\d^2}{\d x^2}$ with the domain $\dom(H_{\rm D})=\{f\in \H^2(\R_+)\mid f(0)=0\}$.
Here $\H^2(\R_+)$ means the usual Sobolev space on $\R_+$ of order $2$.
For any $\alpha \in \R$, let us also consider the operator $H^\alpha$ defined by $H^\alpha = -\frac{\d^2}{\d x^2}$
with $\dom(H^\alpha)=\{f\in \H^2(\R_+)\mid f'(0)=\alpha f(0)\}$.
It is well-known that if $\alpha<0$ the operator $H^\alpha$ possesses only one eigenvalue, namely $-\alpha^2$,
and the corresponding eigenspace is generated by the function $x\mapsto \e^{\alpha x}$.
On the other hand, for $\alpha\geq 0$ the operators $H^\alpha$ have no eigenvalue, and so does $H_{\rm D}$.

As explained in the next section, a common object of scattering theory is defined by the following formula:
$$
W_\pm^\alpha:=s-\lim_{t\to \pm \infty}\e^{itH^\alpha} \e^{-itH_{\rm D}},
$$
and this limit in the strong sense is known to exist for this model, see for example \cite[Sec.~3.1]{Yaf}.
Moreover, we shall provide below a very explicit formula for these operators.
For that purpose, we need to introduce one more operator which is going to play a key role in the sequel.
More precisely, we consider the unitary group $\{U_t\}_{t\in \R}$ acting on any $f\in \ltwo(\R_+)$ as
\begin{equation}\label{eq_dil_group}
[U_tf](x) = \e^{t/2} f\big(\e^t x\big), \qquad \forall x\in \R_+
\end{equation}
which is usually called \emph{the unitary group of dilations},
and denote its self-adjoint generator by $A$ and call it \emph{the generator of dilations}.

Our first result for this model then reads.

\begin{Lemma}\label{lem_baby}
For any $\alpha \in \R$, the following formula holds:
\begin{equation}\label{eq_wave_baby}
W_-^\alpha = 1 + \12\big(1+\tanh(\pi A)-i\cosh(\pi A)^{-1}\big) \Big[\frac{\alpha+i\sqrt{H_{\rm D}}}{\alpha-i\sqrt{H_{\rm D}}}-1\Big].
\end{equation}
\end{Lemma}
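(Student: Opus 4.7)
The plan is to compute $W_-^\alpha$ by the stationary approach, then recognise the resulting integral kernel as the product of a function of $A$ with a function of $H_{\rm D}$. Since $H_{\rm D}$ is diagonalised by the sine transform $\F_s$ on $\ltwo(\R_+)$, I would first solve the eigenvalue equation $-\phi''=k^2\phi$ on $\R_+$ subject to $\phi'(0)=\alpha\phi(0)$, obtaining the distorted ``incoming'' plane waves $\phi_\alpha^-(\cdot,k)$ of $H^\alpha$ as an explicit linear combination of $\e^{\pm ikx}$ whose coefficients encode the on-shell scattering matrix $S^\alpha(k)=\tfrac{\alpha+ik}{\alpha-ik}$ (relative to the free Dirichlet problem). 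In particular, the quantity $\tfrac{\alpha+i\sqrt{H_{\rm D}}}{\alpha-i\sqrt{H_{\rm D}}}-1$ appearing in \eqref{eq_wave_baby} is, by the functional calculus of $\sqrt{H_{\rm D}}$, precisely the operator $S^\alpha-1$ acting multiplicatively in the spectral representation of $H_{\rm D}$.

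Using the stationary formula $W_-^\alpha=\Phi_\alpha^-\,\F_s$, where $\Phi_\alpha^-$ denotes the spectral transform of $H^\alpha$ built out of the $\phi_\alpha^-$, I would then write $W_-^\alpha-1$ as an explicit integral operator on $\ltwo(\R_+)$. The part of $\phi_\alpha^-$ proportional to $\sin(kx)$ reconstructs $\F_s^*\F_s=1$ and cancels the identity; the remaining part yields a kernel in which the coefficient $\tfrac{\alpha+ik}{\alpha-ik}-1$ multiplies an operator whose kernel depends only on the product $kx$. This is the key structural observation: the remaining operator is homogeneous of degree zero and therefore commutes with the dilations $\{U_t\}$ of \eqref{eq_dil_group}, hence is a function of $A$ by the spectral theorem.

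To determine which function of $A$, I would apply the Mellin transform $\MM$, which diagonalises $A$ and turns every dilation-invariant kernel on $\R_+$ into a multiplication operator on $\ltwo(\R,\d\lambda)$. The Mellin multiplier is then computed from
\[
\int_0^\infty x^{i\lambda-1/2}\,\e^{\pm ix}\,\d x=\Gamma\bigl(\tfrac12+i\lambda\bigr)\,\e^{\pm i\pi(1/4+i\lambda/2)},
\]
together with the reflection identity $\Gamma(\tfrac12+i\lambda)\Gamma(\tfrac12-i\lambda)=\pi/\cosh(\pi\lambda)$. Combining the two contributions coming from $\e^{+ikx}$ and $\e^{-ikx}$ in $\phi_\alpha^-$, the resulting multiplier collapses to exactly $\tfrac12\bigl(1+\tanh(\pi\lambda)-i\cosh(\pi\lambda)^{-1}\bigr)$, and substituting $\lambda\rightsquigarrow A$ via functional calculus produces \eqref{eq_wave_baby}.

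The main obstacle I anticipate is not conceptual but computational: the bookkeeping of normalisation constants ($\sqrt{2/\pi}$, $1/\sqrt{2\pi}$, sign conventions in the definition of $\phi_\alpha^\pm$) must be impeccable for the hyperbolic combination to emerge with precisely the coefficients $\tfrac12$, $+\tanh(\pi A)$, $-i\cosh(\pi A)^{-1}$, rather than some similar but incorrect combination. Existence of the strong limit defining $W_-^\alpha$ is a classical fact cited in the text (see \cite[Sec.~3.1]{Yaf}), so no further analytic input is needed; once the Gamma-function identity is executed cleanly and the Mellin-side multiplier is simplified, \eqref{eq_wave_baby} drops out.
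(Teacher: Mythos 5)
Your proposal is correct and would yield \eqref{eq_wave_baby}, but it executes both halves of the argument differently from the paper. The paper does not rebuild the wave operator from generalized eigenfunctions: it starts from the ready-made stationary expressions of \cite[Eq.~3.1.15 \& 3.1.20]{Yaf} and, by a short algebraic manipulation with the operators $\Pi_\pm$ and $\F_-$ appearing there, arrives at the intermediate identity $W_-^\alpha=1+\12\big(1-i\F_{\rm c}^*\F_{\rm s}\big)[S^\alpha-1]$ (Lemma \ref{lem_baby_W}); your route instead solves $-\phi''=k^2\phi$ with $\phi'(0)=\alpha\phi(0)$ and uses $W_-^\alpha=\Phi_\alpha^-\F_{\rm s}$, which produces the same factorization since the distorted wave differs from $\sin(kx)$ by $\tfrac{1}{2i}(S^\alpha(k)-1)\e^{ikx}$. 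For the second half, the paper computes the kernel of $\F_{\rm c}^*\F_{\rm s}$ by an $\varepsilon$-regularization, recognizes it as dilation invariant, and reads off the function of $A$ from the known Fourier transforms of $1/\cosh(s/2)$ and $\Pv\,1/\sinh(s/2)$ (Lemma \ref{lemmaA}); you instead compute the Mellin multiplier via $\int_0^\infty x^{i\lambda-1/2}\e^{\pm ix}\d x=\Gamma(\tfrac12+i\lambda)\e^{\pm i\pi(1/4+i\lambda/2)}$ and the reflection formula, which indeed collapses to $\cosh(\pi\lambda)^{-1}+i\tanh(\pi\lambda)$ for $\F_{\rm c}^*\F_{\rm s}$ and hence to the claimed coefficient $\12\big(1+\tanh(\pi A)-i\cosh(\pi A)^{-1}\big)$. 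The two computations are equivalent in content; yours is more self-contained and makes the appearance of $S^\alpha$ transparent, while the paper's is shorter because it delegates the stationary input to Yafaev. The only points that are slightly more than bookkeeping in your plan are (i) the justification of the representation $W_-^\alpha=\Phi_\alpha^-\F_{\rm s}$ with the \emph{incoming} waves (citing \cite[Sec.~3.1]{Yaf} suffices for this solvable model, but choosing the wrong family $\phi_\alpha^+$ would produce the $W_+^\alpha$ formula instead), and (ii) the Mellin sign convention, since a flip $\lambda\mapsto-\lambda$ would change $+\tanh$ into $-\tanh$; you flag both, and with the conventions fixed as in Section \ref{subsec_example} the computation closes as you describe.
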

Note that a similar formula for $W_+^\alpha$ also holds for this model, see Lemma \ref{lem_baby_W}.
Since the proof of this lemma has never appeared in the literature,
we provide it in the Appendix.
Motivated by the above formula, let us now introduce the function
\begin{equation*}
\Gamma^\alpha_\blacksquare: [0,+\infty]\times [-\infty,+\infty] \ni (x,y) \mapsto  1 + \12\big(1+\tanh(\pi y)-i\cosh(\pi y)^{-1}\big)
\Big[\frac{\alpha+i\sqrt{x}}{\alpha-i\sqrt{x}}-1\Big]\in \C.
\end{equation*}
Since this function is continuous on the square $\blacksquare:=[0,+\infty]\times [-\infty,+\infty]$,
its restriction on the boundary $\square$ of the square is also well defined and continuous.
Note that this boundary is made of four parts: $\square = B_1\cup B_2 \cup B_3 \cup B_4$
with $B_1 = \{0\}\times[-\infty,+\infty]$, $B_2 =[0,+\infty]\times\{+\infty\}$, $B_3 = \{+\infty\}\times[-\infty,+\infty]$,
and $B_4=[0,+\infty]\times\{-\infty\}$. Thus, the algebra $C(\square)$ of continuous functions on $\square$
can be viewed as a subalgebra of
\begin{equation}\label{eq_sum_alg1}
C\big([-\infty,+\infty]\big)\oplus C\big([0,+\infty]\big)\oplus C\big([-\infty,+\infty]\big)\oplus C\big([0,+\infty]\big)
\end{equation}
given by elements $\Gamma=(\Gamma_1, \Gamma_2, \Gamma_3, \Gamma_4)$ which coincide at the corresponding end points, that
is,
$$\Gamma_1(+\infty) = \Gamma_2(0), \ \Gamma_2(+\infty) = \Gamma_3(+\infty), \ \Gamma_3(-\infty) = \Gamma_4(+\infty),
\hbox{ and }\Gamma_4(0) = \Gamma_1(-\infty).
$$
With these notations, the restriction function $\Gammaq^\alpha :=\Gamma^\alpha_\blacksquare\big|_\square$ is given
for $\alpha \neq 0$ by
\begin{equation}\label{eq_baby_Gamma_1}
\Gammaq^\alpha =\Big(1, \frac{\alpha+i\sqrt{\cdot}}{\alpha-i\sqrt{\cdot}},-\tanh(\pi \cdot)+i\cosh(\pi \cdot)^{-1},1\Big)
\end{equation}
and for $\alpha =0$ by
\begin{equation}\label{eq_baby_Gamma_2}
\Gammaq^0 :=\Big(-\tanh(\pi \cdot)+i\cosh(\pi \cdot)^{-1}, -1,-\tanh(\pi \cdot)+i\cosh(\pi \cdot)^{-1},1\Big).
\end{equation}
For simplicity, we have directly written this function in the representation provided by \eqref{eq_sum_alg1}.

Let us now observe that the boundary $\square$  of $\blacksquare$ is homeomorphic to the circle $\S$.
Observe in addition that the function $\Gammaq^\alpha$ takes its values in the unit circle $\T$ of $\C$.
Then, since $\Gammaq^\alpha$ is a continuous function on the closed curve $\square$ and takes
values in $\T$, its winding number $\Wind(\Gammaq^\alpha)$ is well defined and can easily be computed.
So, let us compute separately the contribution $w_j(\Gammaq^\alpha)$ to this winding number on each component $B_j$ of $\square$.
By convention, we shall turn around $\square$ clockwise, starting from the left-down corner,
and the increase in the winding number is also counted clockwise. Let us stress that the contribution on $B_3$
has to be computed from $+\infty$ to $-\infty$, and the contribution on $B_4$ from $+\infty$ to $0$.
Without difficulty one gets:

\begin{center}
\begin{tabular}{c|c|c|c|c|c|}
& $w_1(\Gammaq^\alpha)$ & $w_2(\Gammaq^\alpha)$ & $w_3(\Gammaq^\alpha)$ & $w_4(\Gammaq^\alpha)$ & $\Wind(\Gammaq^\alpha)$ \\
\hline\hline
$\alpha < 0$ & $0$ & $1/2$ & $1/2$ & $0$ & $1$ \\
\hline
$\alpha = 0$ & $-1/2$ & $0$ & $1/2$ & $0$ & $0$ \\
\hline
$\alpha > 0$ & $0$ &$-1/2$ & $1/2$ & $0$ & $0$ \\
\hline
\end{tabular}
\end{center}

By comparing the last column of this table with the information on the eigenvalues of $H^\alpha$ mentioned at the beginning of the section one gets:

\begin{Proposition}\label{prop_baby}
For any $\alpha \in \R$ the following equality holds:
\begin{equation}\label{eq_Lev_1}
\Wind(\Gammaq^\alpha) = \hbox{ number of eigenvalues of } H^\alpha.
\end{equation}
\end{Proposition}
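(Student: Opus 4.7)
The plan is to verify the table of partial winding numbers displayed just above the statement and then compare the last column with the eigenvalue count for $H^\alpha$ recalled at the beginning of Section~\ref{sec_baby}. Since the explicit expressions \eqref{eq_baby_Gamma_1} and \eqref{eq_baby_Gamma_2} are already available, the proof reduces to a direct computation on each of the four edges $B_1, B_2, B_3, B_4$.

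First I would check that $\Gammaq^\alpha$ genuinely takes values in $\T$, so that the winding number is intrinsically defined. This is immediate from $\bigl|\frac{\alpha+i\sqrt{x}}{\alpha-i\sqrt{x}}\bigr|=1$ for every $x\ge 0$ and every real $\alpha$, together with the identity $\tanh^2(\pi y)+\cosh^{-2}(\pi y)=1$. Since $\square$ is homeomorphic to $\S$, the total winding number is additive over the four pieces, each $B_j$ being traversed in the orientation fixed by the conventions stated in the text (clockwise around $\square$, so $B_3$ from $+\infty$ to $-\infty$ and $B_4$ from $+\infty$ to $0$).

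Next I would compute each $w_j(\Gammaq^\alpha)$ edge by edge. On $B_2$ for $\alpha\neq 0$ the identity
\[
\frac{\alpha+i\sqrt{x}}{\alpha-i\sqrt{x}}=\exp\bigl(2i\arctan(\sqrt{x}/\alpha)\bigr)
\]
shows that the argument sweeps monotonically from $0$ to $\pi\,\sgn(\alpha)$ as $x$ runs from $0$ to $+\infty$; read with the clockwise convention this gives $-\tfrac12$ for $\alpha>0$ and $+\tfrac12$ for $\alpha<0$, while for $\alpha=0$ the constant value $-1$ yields $w_2=0$. On $B_3$, the curve $y\mapsto -\tanh(\pi y)+i\cosh(\pi y)^{-1}$ has strictly positive imaginary part, goes from $-1$ at $y=+\infty$ to $+1$ at $y=-\infty$, and hence sweeps the upper semicircle once clockwise when traversed in the prescribed direction, contributing $+\tfrac12$ for every $\alpha$. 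On $B_1$ and $B_4$ the function is constantly $1$ when $\alpha\neq 0$, so both partial winding numbers vanish; for $\alpha=0$ only $B_4$ remains constant, whereas $B_1$ carries the same upper-semicircle curve as $B_3$, but now traversed from $y=-\infty$ (value $1$) to $y=+\infty$ (value $-1$), i.e.\ counterclockwise, producing $-\tfrac12$.

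Summing these four contributions reproduces exactly the last column of the table: $\Wind(\Gammaq^\alpha)=1$ for $\alpha<0$ and $0$ for $\alpha\ge 0$. Matching this with the recollection that $H^\alpha$ has the single bound state $-\alpha^2$ when $\alpha<0$ and no bound state otherwise establishes \eqref{eq_Lev_1}. The only real subtlety I expect to be the main obstacle is the careful bookkeeping of orientations: each edge must be traversed in the direction induced by the clockwise walk around $\square$, and the sign of each $w_j$ depends on whether the corresponding arc in $\T$ is swept clockwise or counterclockwise; beyond this, the argument is mechanical.
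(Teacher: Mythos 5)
Your proposal is correct and follows essentially the same route as the paper: the paper's argument is precisely the edge-by-edge computation of the contributions $w_j(\Gammaq^\alpha)$ (recorded in the table, with the stated clockwise orientation conventions) followed by comparison with the known fact that $H^\alpha$ has exactly one eigenvalue for $\alpha<0$ and none for $\alpha\geq 0$. Your edge computations and sign bookkeeping reproduce the paper's table exactly, so nothing is missing.
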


The content of this proposition is an example of Levinson's theorem. Indeed, it relates
the number of bound states of the operator $H^\alpha$ to a quantity computed on the
scattering part of the system. Let us already mention that the contribution $w_2(\Gammaq^\alpha)$ is
the only one usually considered in the literature.
However, we can immediately observe that if $w_1(\Gammaq^\alpha)$ and $w_3(\Gammaq^\alpha)$
are disregarded, then no meaningful statement can be obtained.

Obviously, the above result should now be recast in a more general framework. Indeed, except for very specific models,
it is usually not possible to compute precisely both sides of \eqref{eq_Lev_1}, but our aim is
to show that such an equality still holds in a much more general setting.
For that purpose, a $C^*$-algebraic framework will be constructed in Section \ref{sec_Kth}.

\section{Scattering theory: a brief introduction}\label{sec_scat}
\setcounter{equation}{0}

In this section, we introduce the main objects of spectral and scattering theory which will be used throughout this paper.

Let us start by recalling a few basic facts from spectral theory. We consider a separable Hilbert space $\H$,
with its scalar product denoted by $\langle \cdot, \cdot \rangle$ and its norm by $\|\cdot\|$.
The set of bounded linear operators on $\H$ is denoted by $\B(\H)$.
Now, if $\B(\R)$ denotes the set of Borel sets in $\R$ and if $\P(\H)$ denotes the set of orthogonal projections on $\H$,
then \emph{a spectral measure} is a map $E : \B(\R) \to  \P(\H)$ satisfying the following
properties:
\begin{enumerate}
\item[(i)] $E(\emptyset) = 0$ and $E(\R) = 1$,
\item[(ii)] If $\{\vartheta_n\}_{n\in \N}$ is a family of disjoint Borel sets, then $E(\cup_n \vartheta_n) =
\sum_n E(\vartheta_n)\ $ (convergence in the strong topology).
\end{enumerate}
The importance of spectral measures comes from their relation with the set of self-adjoint operators in $\H$.
More precisely, let $H$ be a self-adjoint operator acting in $\H$, with its domain denoted by $\dom(H)$.
Then, there exists a unique spectral measure $E(\cdot)$ such that $H=\int_\R\lambda E(\d \lambda)$.
Note that this integral has to be understood in the strong sense, and only on elements of $\dom(H)$.

This measure can now be decomposed into three parts, namely its absolutely continuous part, its singular continuous part,
and its pure point part. More precisely, there exists a decomposition of the Hilbert space
$\H=\H_{\rm ac}(H)\oplus\H_{\rm sc}(H)\oplus \H_{\rm p}(H)$ (which depends on $H$) such that
for any $f\in \H_\bullet(H)$, the measure
$$
\B(\R)\ni \vartheta \mapsto \langle E(\vartheta)f,f\rangle \in \R
$$
is of the type $\bullet$, \ie~absolutely continuous, singular continuous or pure point.
It follows that the operator $H$ is reduced by this decomposition of the Hilbert space,
\ie~$H=H_{\rm ac}\oplus H_{\rm sc}\oplus H_{\rm p}$.
In other words, if one sets $E_{\rm ac}(H)$, $E_{\rm sc}(H)$ and $E_{\rm p}(H)$ for the orthogonal projections
on $\H_{\rm ac}(H)$, $\H_{\rm sc}(H)$ and $\H_{\rm p}(H)$ respectively, then these projections commute with $H$
and one has $H_{\rm ac}=H E_{\rm ac}(H)$, $H_{\rm sc}=H E_{\rm sc}(H)$ and $H_{\rm p}=H E_{\rm p}(H)$.
In addition, if $\sigma(H)$ denotes the spectrum of the operator $H$, we then set $\sigma_{\rm ac}(H):=\sigma(H_{\rm ac})$,
$\sigma_{\rm sc}(H):=\sigma(H_{\rm sc})$, and if $\sigma_{\rm p}(H)$ denotes the set of eigenvalues of $H$,
then the equality $\overline{\sigma_{\rm p}(H)} =\sigma(H_{\rm p})$ holds.
In this framework the operator $H$ is said to be purely absolutely continuous if $\H_{\rm sc}(H)= \H_{\rm p}(H) = \{0\}$,
or is said to have a finite point spectrum (counting multiplicity) if $\dim\big(\H_{\rm p}(H)\big)<\infty$.
In this case, we also write $\sharp\;\! \sigma_p(H)<\infty$.

Let us now move to scattering theory. It is a comparison theory, therefore we have to consider two self-adjoint operators
$H_0$ and $H$ in the Hilbert space $\H$. A few requirements will be imposed on these operators and on their relationships.
Let us first state these conditions, and discuss them afterwards.

\begin{Assumption}\label{Assum_1}
The following conditions hold for $H_0$ and $H$:
\begin{enumerate}
\item[(i)] $H_0$ is purely absolutely continuous,
\item[(ii)] $\sharp\;\! \sigma_p(H)<\infty$,
\item[(iii)] the wave operators
$\quad W_\pm:=s-\lim_{t \to\pm \infty} \e^{itH}\e^{-itH_0}\quad$ exist,
\item[(iv)] $\Ran(W_-)=\Ran(W_+)= \H_p(H)^\bot = \big(1-E_{\rm p}(H)\big)\H$.
\end{enumerate}
\end{Assumption}

The assumption \emph{(i)} is a rather common condition in scattering theory. Indeed, since
$H_0$ is often thought as a comparison operator, we expect it to be as simple as possible.
For that reason, any eigenvalue for $H_0$ is automatically ruled out.
In addition, since scattering theory does not really see any singular continuous spectrum,
we can assume without much loss of generality that $H_0$ does not possess such a component.
On the other hand, assumption \emph{(ii)}, which imposes that the point spectrum of $H$ is finite (multiplicity included)
is certainly restrictive, but is natural for our purpose.
Indeed, since at the end of the day we are looking for a relation involving the number of bound states,
the resulting equality is meaningful only if such a number is finite.

Assumption \emph{(iii)} is the main condition on the relation between $H_0$ and $H$.
In fact, this assumption does not directly compare these two operators, but compare their respective evolution group
$\{\e^{-itH_0}\}_{t\in \R}$ and $\{\e^{-itH}\}_{t\in \R}$ for $|t|$ large enough.
This condition is usually rephrased as \emph{the existence of the wave operators}.
Note that $s-\lim$ means the limit in the strong sense, \ie~when
these operators are applied on an element of the Hilbert space.
For a concrete model, checking this existence is a central part
of scattering theory, and can be a rather complicated task. We shall see in the examples developed later on
that this condition can be satisfied if $H$ corresponds to a suitable perturbation of $H_0$.
For the time being, imposing this existence corresponds in fact to the weakest condition necessary for the subsequent construction.
Finally, assumption \emph{(iv)} is usually called \emph{the asymptotic completeness} of the wave operators.
It is a rather natural expectation in the setting of scattering theory.
In addition, since $\Ran(W_\pm)\subset \H_{\rm ac}(H)$ always holds, this assumption
implies in particular that $H$ has no singular continuous spectrum, \ie~$\H_{\rm sc}(H)=\{0\}$.
The main idea behind this notion of asymptotic completeness will be explained in Remark \ref{rem_comp}.

Let us now stress some important consequences of Assumption \ref{Assum_1}.
Firstly, the wave operators $W_\pm$ are isometries, with
\begin{equation}\label{eq_range_W}
W_\pm^*W_\pm =1\quad \hbox{ and } \quad W_\pm W_\pm^*=1-E_{\rm p}(H),
\end{equation}
where ${}^*$ means the adjoint operator.
Secondly, $W_\pm$ are Fredholm operators and
satisfy the so-called \emph{intertwining relation}, namely $W_\pm \e^{-itH_0} = \e^{-itH}W_\pm$
for any $t\in \R$.
Another crucial consequence of our assumptions is that \emph{the scattering operator}
$$
S:=W_+^* W_-
$$
is unitary and commute with $H_0$, \ie~the relation $S\e^{-itH_0}=\e^{-itH_0}S$ holds for any $t\in \R$.
Note that this latter property means that $S$ and $H_0$ can be diagonalized simultaneously.
More precisely, from the general theory of self-adjoint operators, there exists
a unitary map $\F_0:\H\to\int_{\sigma(H_0)}^\oplus \H(\lambda)\;\!\d \lambda$ from $\H$ to a direct integral Hilbert space
such that $\F_0H_0\F_0^*=\int_{\sigma(H_0)}^\oplus \lambda \;\!\d \lambda$.
Then, the mentioned commutation relation implies that
$$
\F_0S\F_0^* = \int_{\sigma(H_0)}^\oplus S(\lambda) \;\!\d \lambda
$$
with $S(\lambda)$ a unitary operator in the Hilbert space $\H(\lambda)$ for almost every $\lambda$.
The operator $S(\lambda)$ is usually called \emph{the scattering matrix at energy $\lambda$}, even when this operator is not
a matrix but an operator acting in a infinite dimensional Hilbert space.

\begin{Remark}\label{rem_comp}
In order to understand the idea behind the asymptotic completeness, let us assume it and consider any $f\in \H_{\rm ac}(H)$.
We then set $f_\pm :=W_\pm^* f$ and observe that
\begin{align*}
\lim_{t\to\pm \infty}\big\| \e^{-itH}f-\e^{-itH_0}f_\pm \big\|
& = \lim_{t\to\pm \infty} \big\|f-\e^{itH}\e^{-itH_0}f_\pm\big\| \\
& = \lim_{t\to \pm \infty} \big\|f-\e^{itH}\e^{-itH_0}W_\pm^*f\big\| \\
& = 0,
\end{align*}
where the second equality in \eqref{eq_range_W} together with the equality $1-E_{\rm p}(H)=E_{\rm ac}(H)$
have been used for the last equality. Thus, the asymptotic completeness of the wave operators
means that for any $f \in \H_{\rm p}(H)^\bot$ the element $\e^{-itH}f$ can be well approximated by the simpler expression $\e^{-itH_0}f_\pm$
for $t$ going to $\pm \infty$. As already mentioned, one usually considers the operator $H_0$
simpler than $H$, and for that reason the evolution group $\{\e^{-itH_0}\}_{t\in \R}$
is considered simpler than the evolution group $\{\e^{-itH}\}_{t\in \R}$.
\end{Remark}

\section{The $C^*$-algebraic framework}\label{sec_Kth}
\setcounter{equation}{0}

In this section we introduce the $C^*$-algebraic framework which is necessary for interpreting Levinson's theorem as an index theorem.
We start by defining the $K$-groups for a $C^*$-algebra.

\subsection{The $K$-groups}\label{subsec_Kgroups}

Our presentation of the $K$-groups is mainly based the first chapters of the book \cite{RLL} to which we refer for details.

For any $C^*$-algebra $\EE$, let us denote by $\MM_n(\EE)$ the set of all $n\times n$ matrices
with entries in $\EE$. Addition, multiplication and involution for such matrices are mimicked
from the scalar case, \ie~when $\EE=\C$.
For defining a $C^*$-norm on $\MM_n(\EE)$, consider any injective $*$-morphism $\phi:\EE\to \B(\H)$
for some Hilbert space $\H$, and extend this morphism to a morphism $\phi:\MM_n(\EE)\to \B(\H^n)$ by defining
\begin{equation}\label{eq_extMn}
\phi
\left(\begin{matrix}
a_{11} & \dots & a_{1n} \\
\vdots  & \ddots & \vdots \\
a_{n1} & \dots &  a_{nn}
\end{matrix}\right)
\left(\begin{matrix}
f_1 \\ \vdots \\ f_n
\end{matrix}\right)
=
\left(\begin{matrix}
\phi(a_{11})f_1 + \dots +\phi(a_{1n})f_n \\
\vdots  \\
\phi(a_{n1})f_1 + \dots +\phi(a_{nn})f_n \\
\end{matrix}\right)
\end{equation}
for any ${}^t(f_1, \dots,  f_n)\in \H^n$ (the notation ${}^t(\dots)$ means the transpose of a vector).
Then a $C^*$-norm on $\MM_n(\EE)$ is obtained by setting $\|a\|:=\|\phi(a)\|$ for any $a\in \MM_n(\EE)$,
and this norm is independent of the choice of $\phi$.

In order to construct the first $K$-group
associated with $\EE$, let us consider the set
$$
\P_\infty(\EE) = \bigcup_{n\in \N} \P_n(\EE)
$$
with $\P_n(\EE):=\{p\in \MM_n(\EE)\mid p=p^* = p^2\}$. Such an element $p$ is called a projection.
$\P_\infty(\EE)$ is then endowed with a relation, namely for $p\in \P_n(\EE)$ and $q\in \P_m(\EE)$ one writes $p\sim_0 q$
if there exists $v\in \MM_{m,n}(\EE)$ such that $p=v^*v$ and $q=vv^*$.
Clearly, $\MM_{m,n}(\EE)$ denotes the set of $m\times n$ matrices with
entries in $\EE$, and the adjoint $v^*$ of $v\in \MM_{m,n}(\EE)$ is obtained by taking the transpose of the matrix,
and then the adjoint of each entry. This relation defines an equivalence relation
which combines the Murray-von Neumann equivalence relation together with an identification of projections in
different sized matrix algebras over $\EE$.
We also endow $\P_\infty(\EE)$ with a binary operation, namely if $p,q\in \P_\infty(\EE)$ we set
$p\oplus q= \left(\begin{smallmatrix} p & 0 \\ 0 & q\end{smallmatrix}\right)$ which
is again an element of $\P_\infty(\EE)$.

We can then define the quotient space
$$
\D(\EE):=\P_\infty(\EE)/\sim_0
$$
with its elements denoted by $[p]$ (the equivalence class containing $p\in \P_\infty(\EE)$).
One also sets
$$
[p]+[q]:= [p\oplus q]
$$
for any $p,q\in \P_\infty(\EE)$, and it turns out that the pair $\big(\D(\EE),+\big)$ defines an Abelian semigroup.

In order to obtain an Abelian group from the semigroup, let us recall that there exists a canonical construction which allows one to add ``the opposites''
to any Abelian semigroup and which is called \emph{the Grothendieck construction}.
More precisely, for an Abelian semigroup $(\D,+)$ we consider on $\D\times \D$ an equivalence relation,
namely $(a_1,b_1)\sim(a_2,b_2)$ if there exists $c\in \D$ such that $a_1 + b_2 +c= a_2 + b_1 + c$.
The elements of the quotient $\D\times \D/\sim$ are denoted by $\langle a,b\rangle$
and this quotient corresponds to an Abelian group with the addition
$$
\langle a_1,b_1\rangle + \langle a_2,b_2\rangle := \langle a_1 + a_2, b_1 + b_2\rangle.
$$
One readily checks that the equalities $-\langle a,b\rangle= \langle b,a\rangle$ and $\langle a,a\rangle = 0$ hold.
This group is called \emph{the Grothendieck group} associated with $(\D,+)$ and is denoted by $\big(\G(\D),+\big)$.

Coming back to a unital $C^*$-algebra $\EE$, we set
$$
K_0(\EE):=\G\big(\D(\EE)\big),
$$
which is thus an Abelian group with the binary operation $+$,
and define the map $[\cdot]_0: \P_\infty(\EE)\to K_0(\EE)$ by $[p]_0:=\langle [p]+[q],[q]\rangle$
for an arbitrary fixed $q\in \P_\infty(\EE)$. Note that this latter map is called the \emph{Grothendieck map} and is independent
of the choice of $q$.
Note also that an alternative description of $K_0(\EE)$ is provided by differences of equivalence classes of projections,
\ie
\begin{equation}\label{eq_stand1}
K_0(\EE)=\big\{[p]_0-[q]_0\mid p,q\in \P_\infty(\EE)\big\}.
\end{equation}
At the end of the day, we have thus obtained an Abelian group $\big(K_0(\EE),+\big)$ canonically associated with the unital $C^*$-algebra
$\EE$ and which is essentially made of equivalence classes of projections.

Before discussing the non-unital case, let us observe that if $\EE_1$, $\EE_2$ are unital $C^*$-algebras, and if
$\phi:\EE_1\to \EE_2$ is a $*$-morphism, then $\phi$ extends to a $*$-morphism $\MM_n(\EE_1)\to \MM_n(\EE_2)$, as already
mentioned just before \eqref{eq_extMn}. Since a $*$-morphism maps projections to projections, it follows that $\phi$ maps $\P_\infty(\EE_1)$
into $\P_\infty(\EE_2)$.
One can then infer from the universal property of the $K_0$-groups that $\phi$ defines a group homomorphism $K_0(\phi):K_0(\EE_1)\to K_0(\EE_2)$
given by
\begin{equation*}
K_0(\phi)([p]_0) = [\phi(p)]_0 \qquad \forall p\in \P_\infty(\EE_1).
\end{equation*}
The existence of this morphism will be necessary right now.

If $\EE$ is not unital, the construction is slightly more involved. Recall first that with any $C^*$-algebra $\EE$ (with or without a unit)
one can associate a unique unital $C^*$-algebra $\EE^+$ that contains $\EE$ as an ideal,
and such that the quotient $\EE^+/\EE$ is isomorphic to $\C$. We do not provide here this explicit construction, but
refer to \cite[Ex.~1.3]{RLL} for a detailed presentation. However, let us mention the fact that
the short exact sequence\footnote{A short exact sequence of $C^*$-algebras
$0 \to \JJ \stackrel{\iota}{\to} \EE \stackrel{q\;\;\!}{\to} \QQ \to 0$
consists in three $C^*$-algebras $\JJ,\EE$ and $\QQ$ and two $*$-morphisms $\iota$ and $q$ such that $\Im(\iota) = \Ker(q)$
and such that $\iota$ is injective while $q$ is surjective.}
$$
0\longrightarrow \EE\longrightarrow \EE^+\stackrel{\pi}{\longrightarrow} \C \longrightarrow 0
$$
is split exact, in the sense that if one sets $\lambda:\C\ni \alpha \mapsto \alpha 1_{\EE^+}\in \EE^+$,
then $\lambda$ is a $*$-morphism and the equality $\pi\big(\lambda(\alpha)\big)=\alpha$ holds for any $\alpha \in \C$.
Observe now that since $\pi:\EE^+\to \C$ is a $*$-morphism between unital $C^*$-algebras, it follows from
the construction made in the previous paragraph that there exists a group morphism $K_0(\pi):K_0(\EE^+)\to K_0(\C)$.
In the case of a non-unital $C^*$-algebra $\EE$, we set $K_0(\EE)$ for the kernel of this morphism $K_0(\pi)$,
which is obviously an Abelian group with the binary operation of $K_0(\EE^+)$. In summary:
$$
K_0(\EE):=\Ker\(K_0(\pi): K_0(\EE^+)\to K_0(\C)\)
$$
which is an Abelian group once endowed with the binary operation $+$ inherited from $K_0(\EE^+)$.

Let us still provide an alternative description of $K_0(\EE)$, in a way similar to the one provided in \eqref{eq_stand1},
but which holds both in the unital and in the non-unital case. For that purpose, let us introduce the scalar mapping $s:=\EE^+\to \EE^+$
obtained by the composition $\lambda \circ \pi$. Note that $\pi\big(s(a)\big) = \pi(a)$ and that $a-s(a)$ belongs to $\EE$
for any $a\in \EE^+$. As before, we keep the same notation for the extension of $s$ to $\MM_n(\EE^+)$. With these notations,
one has for any $C^*$-algebra $\EE$:
$$
K_0(\EE) = \big\{[p]_0-[s(p)]_0\mid p \in \P_\infty(\EE^+)\big\}.
$$

In summary, for any $C^*$-algebra (with or without unit) we have constructed
an Abelian group consisting essentially of equivalence classes of projections.
Since projections are not the only special elements in a $C^*$-algebra $\EE$, it is natural to wonder
if an analogous construction holds for other families of elements of $\EE$ ? The answer is yes, for families of unitary elements of $\EE$,
and fortunately this new construction is simpler.
The resulting Abelian group will be denoted by $K_1(\EE)$, and we are now going to describe how to obtain it.

In order to construct the second $K$-group
associated with a unital $C^*$-algebra $\EE$, let us consider the set
$$
\U_\infty(\EE) = \bigcup_{n\in \N} \U_n(\EE)
$$
with $\U_n(\EE):=\{u\in \MM_n(\EE)\mid u^*=u^{-1} \}$.
This set is endowed with
a binary operation, namely if $u,v\in \U_\infty(\EE)$ we set
$u\oplus v= \left(\begin{smallmatrix} u & 0 \\ 0 & v\end{smallmatrix}\right)$ which
is again an element of $\U_\infty(\EE)$.
We also introduce an equivalence relation on $\U_\infty(\EE)$: if $u\in \U_n(\EE)$ and $v\in \U_m(\EE)$, one sets
$u\sim_1  v$ if there exists a natural number $k\geq \max\{m,n\}$ such that $u\oplus 1_{k-n}$ is homotopic\footnote{
Recall that two elements $u_0,u_1\in \U_k(\EE)$ are homotopic in $\U_k(\EE)$, written $u_0\sim_h u_1$, if there exists
a continuous map $u:[0,1]\ni t\mapsto u(t)\in \U_k(\EE)$ such that $u(0)=u_0$ and $u(1)=u_1$.} to $v\oplus 1_{k-m}$
in $\U_k(\EE)$. Here we have used the notation $1_\ell$ for the identity matrix\footnote{The notation
$1_n$ for the identity matrix in $\MM_n(\EE)$ is sometimes very convenient, and sometimes very annoying (with $1$ much preferable).
In the sequel we shall use both conventions, and this should not lead to any confusion.}  in $\U_\ell(\EE)$.

Based on this construction, for any $C^*$-algebra $\EE$ one sets
$$
K_1(\EE):=\U_\infty(\EE^+)/\sim_1,
$$
and denotes the elements of $K_1(\EE)$ by $[u]_1$ for any $u\in \U_\infty(\EE^+)$.
$K_1(\EE)$ is naturally endowed with a binary operation, by setting for any $u,v\in \U_\infty(\EE^+)$
$$
[u]_1+[v]_1 := [u\oplus v]_1,
$$
which is commutative and associative.
Its zero element is provided by $[1]_1:=[1_n]_1$ for any natural number $n$, and one has
$-[u]_1 = [u^*]_1$ for any $u\in \U_\infty(\EE^+)$.
As a consequence, $\big(K_1(\EE),+\big)$ is an Abelian group, which corresponds to the second $K$-group of $\EE$.

In summary, for any $C^*$-algebra we have constructed
an Abelian group consisting essentially of equivalence classes of unitary elements.
As a result, any $C^*$-algebra is intimately linked with two Abelian groups, one based on projections and one based on unitary elements.
Before going to the next step of the construction, let us provide two examples of $K$-groups which can be
figured out without difficulty.

\begin{Example}\label{ex_K_groups}
\begin{enumerate}
\item[(i)]
Let $C(\S)$ denote the $C^*$-algebra of continuous function on the unit circle $\S$, with the $\linf$-norm,
and let us identify this algebra with $\big\{\zeta\in C([0,2\pi])\mid \zeta(0)=\zeta(2\pi)\big\}$, also endowed with the $\linf$-norm.
Some unitary elements of $C(\S)$ are provided for any $m\in \Z$ by the functions
$$
\zeta_m:[0,2\pi]\ni \theta \mapsto \e^{-i m \theta} \in \T.
$$
Clearly, for two different values of $m$ the functions $\zeta_m$ are not homotopic, and thus define different classes in $K_1\big(C(\S)\big)$.
With some more efforts one can show that these elements define in fact all elements of $K_1\big(C(\S)\big)$, and indeed one has
$$
K_1\big(C(\S)\big) \cong \Z.
$$
Note that this isomorphism is implemented by the winding number $\Wind(\cdot)$, which is roughly defined for any continuous function on $\S$ with values in $\T$
as the number of times this function turns around $0$ along the path from $0$ to $2\pi$. Clearly, for any $m\in \Z$ one has $\Wind(\zeta_m)=m$.
More generally, if $\det$ denotes the determinant on $\MM_n(\C)$
then the mentioned isomorphism is given by $\Wind\circ \det$ on $\U_n\big(C(\S)\big)$.

\item[(ii)] Let $\K(\H)$ denote the $C^*$-algebra of all compact operators on a infinite dimensional and separable Hilbert space $\H$.
For any $n$ one can consider the orthogonal projections on subspaces of dimension $n$ of $\H$, and these finite dimensional projections belong to $\K(\H)$.
It is then not too difficult to show that two projections of the same dimension are Murray-von Neumann equivalent,
while projections corresponding to two different values of $n$ are not.
With some more efforts, one shows that the dimension of these projections plays the crucial role for the definition of
$K_0\big(\K(\H)\big)$, and one has again
$$
K_0\big(\K(\H)\big) \cong \Z.
$$
In this case, the isomorphism is provided by the usual trace $\Tr$ on finite dimensional projections, and by the tensor product
of this trace with the trace $\tr$ on $\MM_n(\C)$. More precisely, on any element of $\P_n\big(\K(\H)\big)$ the mentioned isomorphism
is provided by $\Tr\circ \tr$.
\end{enumerate}
\end{Example}

\subsection{The boundary maps}\label{subsec_bound}

We shall now consider three $C^*$-algebras, with some relations between them. Since two $K$-groups
can be associated with each of them, we can expect that the relations between the algebras have
a counterpart between the $K$-groups. This is indeed the case.

Consider the short exact sequence of $C^*$-algebras
\begin{equation}\label{eq_short}
0 \to \JJ \stackrel{\iota}{\hookrightarrow} \EE \stackrel{q\;\;\!}{\to} \QQ \to 0
\end{equation}
where the notation $\hookrightarrow$ means that $\JJ$ is an ideal in $\EE$, and therefore $\iota$
corresponds to the inclusion map. In this setting,
$\QQ$ corresponds either to the quotient $\EE/\JJ$ or is isomorphic to this quotient.
The relations between the $K$-groups of these algebras can then be summarized with the following
six-term exact sequence
\begin{diagram}
K_1(\JJ) & \rTo & K_1(\EE) & \rTo & K_1(\QQ) \\
\uTo^{{\rm exp}} & &  & & \dTo_{\rm ind} \\
K_0(\QQ) & \lTo & K_0(\EE) & \lTo&  K_0(\JJ)\ .
\end{diagram}
In this diagram, each arrow corresponds to a group morphism, and the range of an arrow is equal to the kernel of the following one.
Note that we have indicated the name of two special arrows, one is called \emph{the exponential map}, and the other one \emph{the index map}.
These two arrows are generically called \emph{boundary maps}.
In this paper, we shall only deal with the index map, but let us mention that the exponential map has also played a central role
for exhibiting other index theorems in the context of solid states physics \cite{Johannes,KS}.

We shall not recall the construction of the index map in the most general framework,
but consider a slightly restricted setting (see \cite[Chap.~9]{RLL} for a complete presentation).
For that purpose, let us assume that the algebra $\EE$ is unital, in which case
$\QQ$ is unital as well and the morphism $q$ is unit preserving.
Then, a reformulation of \cite[Prop.~9.2.4.(ii)]{RLL} in our context reads:

\begin{Proposition}\label{prop_top_lev_0}
Consider the short exact sequence \eqref{eq_short} with $\EE$ unital.
Assume that $\Gamma$ is a unitary element of $\MM_n(\QQ)$ and that there exists a partial isometry
$W\in \MM_n(\EE)$ such that $q(W)=\Gamma$. Then $1_n-W^*W$
and $1_n-W W^*$ are projections in $\MM_n(\JJ)$, and
$$
\ind([\Gamma]_1) = [1_n-W^*W]_0-[1_n-W W^*]_0\ .
$$
\end{Proposition}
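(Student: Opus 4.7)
My plan is to apply the standard construction of the index map to an explicit unitary lift of $\Gamma\oplus\Gamma^*$ built directly from $W$, and then to read off the announced formula by a short block computation. The argument naturally splits into three steps: verifying the projection claim, producing the lift, and simplifying the resulting class in $K_0(\JJ)$.

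For the projection claim, since $W$ is a partial isometry, $W^*W$ and $WW^*$ are projections in $\MM_n(\EE)$, hence so are $1_n-W^*W$ and $1_n-WW^*$. Applying the canonical extension of $q$ to $\MM_n$ and using that $\Gamma$ is unitary, one has $q(W^*W)=\Gamma^*\Gamma=1_n$ and $q(WW^*)=\Gamma\Gamma^*=1_n$, so both $1_n-W^*W$ and $1_n-WW^*$ lie in $\Ker(q)=\MM_n(\JJ)$.

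For the lift, I would set
$$
V:=\begin{pmatrix} W & 1_n-WW^* \\ 1_n-W^*W & W^* \end{pmatrix}\in \MM_{2n}(\EE).
$$
Using the partial-isometry identities $WW^*W=W$ and $W^*WW^*=W^*$, together with the idempotence and self-adjointness of $1_n-W^*W$ and $1_n-WW^*$, a direct block calculation gives $V^*V=VV^*=1_{2n}$, so that $V$ is unitary. Since its two off-diagonal blocks belong to $\MM_n(\JJ)$, they vanish under $q$, yielding $q(V)=\Gamma\oplus\Gamma^*$. Thus $\Gamma\oplus\Gamma^*$ (which is a priori homotopic to $1_{2n}$ in $\U_{2n}(\QQ)$) admits an explicit unitary lift in $\MM_{2n}(\EE)$, and this $V$ is precisely the datum used in the definition of the index map recalled in \cite[Chap.~9]{RLL}.

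The formula for the index map then reads
$$
\ind([\Gamma]_1)=\bigl[V(1_n\oplus 0)V^*\bigr]_0-\bigl[1_n\oplus 0\bigr]_0\in K_0(\JJ),
$$
and a short computation, again using $WW^*W=W$, gives
$$
V\begin{pmatrix} 1_n & 0 \\ 0 & 0\end{pmatrix}V^* = \begin{pmatrix} WW^* & 0 \\ 0 & 1_n-W^*W\end{pmatrix}.
$$
Splitting orthogonal direct sums and using the identity $[1_n]_0=[WW^*]_0+[1_n-WW^*]_0$ (which holds because $WW^*$ and $1_n-WW^*$ are mutually orthogonal projections summing to $1_n$), the $[WW^*]_0$ terms cancel and one is left with $\ind([\Gamma]_1)=[1_n-W^*W]_0-[1_n-WW^*]_0$, as claimed. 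The only mildly delicate point is the $K$-theoretic bookkeeping in the non-unital case for $\JJ$: the individual classes $[WW^*]_0$ and $[1_n]_0$ live a priori only in $K_0(\EE)$ (or in $K_0(\JJ^+)$), but the specific combinations formed above have vanishing image under the scalar map and therefore descend to genuine elements of $K_0(\JJ)\subset K_0(\JJ^+)$; the cancellation $[1_n]_0-[WW^*]_0=[1_n-WW^*]_0$ is precisely what realises this descent at the level of $K$-theory.
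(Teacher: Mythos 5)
Your proof is correct. Note that the paper offers no argument of its own here: Proposition \ref{prop_top_lev_0} is stated as a direct reformulation of \cite[Prop.~9.2.4.(ii)]{RLL}, and what you have written is essentially the standard proof of that cited result — the Halmos-type unitary $V=\left(\begin{smallmatrix} W & 1_n-WW^* \\ 1_n-W^*W & W^*\end{smallmatrix}\right)$ lifting $\Gamma\oplus\Gamma^*$, the computation $V(1_n\oplus 0)V^*=WW^*\oplus(1_n-W^*W)$, and the definition of the index map. Your final remark also handles the only delicate point correctly: the intermediate classes live in $K_0(\JJ^+)$ (identifying $\C 1_n+\MM_n(\JJ)$ with matrices over the unitization), and the combination $[1_n-W^*W]_0-[1_n-WW^*]_0$ is annihilated by the scalar map, hence defines the asserted element of $K_0(\JJ)$.
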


Let us stress the interest of this statement. Starting from a unitary element $\Gamma$ of $\MM_n(\QQ)$, one can
naturally associate to it an element of $K_0(\JJ)$.
In addition, since the elements of the $K$-groups are made of equivalence classes of objects,
such an association is rather stable under small deformations.

Before starting with applications of this formalism to scattering systems, let us
add one more reformulation of the previous proposition.
The key point in the next statement is that the central role is played by the partial isometry $W$
instead of the unitary element $\Gamma$. In fact, the following statement is at the root of our
topological approach of Levinson's theorem.

\begin{Proposition}\label{prop_top_lev}
Consider the short exact sequence \eqref{eq_short} with $\EE$ unital.
Let $W$ be a partial isometry in $\MM_n(\EE)$ and assume that $\Gamma:= q(W)$ is a unitary element of $\MM_n(\QQ)$.
Then $1_n-W^* W$ and $1_n-W W^*$ are projections in $\MM_n(\JJ)$, and
$$
\ind([q(W)]_1) = [1_n-W^* W]_0-[1_n-W W^*]_0\ .
$$
\end{Proposition}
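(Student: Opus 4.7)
The plan is to observe that this statement is essentially a repackaging of Proposition \ref{prop_top_lev_0}, with the only change being which object is taken as the starting datum: here the partial isometry $W$ is given first, and the unitary $\Gamma$ is defined as its image $q(W)$, rather than the reverse. So once the hypotheses of Proposition \ref{prop_top_lev_0} are verified in the present setting, the index formula is immediate.

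The first step is to check that $1_n - W^*W$ and $1_n - WW^*$ are projections in $\MM_n(\JJ)$. Since $W$ is a partial isometry in $\MM_n(\EE)$, both $W^*W$ and $WW^*$ are projections in $\MM_n(\EE)$, hence so are $1_n - W^*W$ and $1_n - WW^*$. To see they actually lie in the ideal $\MM_n(\JJ)$, apply the $*$-morphism $q$ (extended to $n \times n$ matrices as in \eqref{eq_extMn}) and use that $\Gamma := q(W)$ is unitary by hypothesis:
\begin{equation*}
q(1_n - W^*W) = 1_n - q(W)^* q(W) = 1_n - \Gamma^*\Gamma = 0,
\end{equation*}
and similarly $q(1_n - WW^*) = 1_n - \Gamma\Gamma^* = 0$. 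Thus both elements lie in $\Ker(q) = \iota(\MM_n(\JJ))$, which we identify with $\MM_n(\JJ)$.

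Second, with these verifications in hand, the hypotheses of Proposition \ref{prop_top_lev_0} are now satisfied: $\Gamma = q(W)$ is a unitary in $\MM_n(\QQ)$, and $W \in \MM_n(\EE)$ is a partial isometry with $q(W) = \Gamma$. Applying that proposition directly yields
\begin{equation*}
\ind([q(W)]_1) = \ind([\Gamma]_1) = [1_n - W^*W]_0 - [1_n - WW^*]_0,
\end{equation*}
which is the desired identity.

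I do not expect a genuine obstacle here, since the argument is purely formal; the only subtlety worth flagging is that the two propositions differ conceptually even though they are logically equivalent. In Proposition \ref{prop_top_lev_0} one must produce a partial-isometry lift $W$ of a given unitary $\Gamma$, whereas in Proposition \ref{prop_top_lev} the partial isometry is the primary object and the unitarity of its image is an assumption on $W$. This second formulation is the natural one for Levinson's theorem because, in the applications to follow, the wave operator $W_-$ is exactly such a partial isometry (by \eqref{eq_range_W}), and the fact that its image in an appropriate quotient algebra is unitary will encode the nontrivial scattering information.
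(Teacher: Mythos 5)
Your proposal is correct and follows exactly the route the paper intends: Proposition \ref{prop_top_lev} is stated there as a mere reformulation of Proposition \ref{prop_top_lev_0}, and your verification that $1_n-W^*W$, $1_n-WW^*$ are projections lying in $\Ker(q)=\MM_n(\JJ)$ (via unitarity of $q(W)$) followed by a direct appeal to that proposition is the intended argument.
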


\subsection{The abstract topological Levinson's theorem}\label{subsec_index}

Let us now add the different pieces of information we have presented so far, and get an abstract version of our Levinson's theorem.
For that purpose, we consider a separable Hilbert space $\H$ and a unital $C^*$-subalgebra $\EE$ of $\B(\H)$ which contains
the ideal of $\K(\H)$ of compact operators.
We can thus look at the short exact sequence of $C^*$-algebras
\begin{equation*}
0 \to \K(\H) \hookrightarrow \EE \stackrel{q\;\;\!}{\to} \EE/\K(\H) \to 0.
\end{equation*}
If we assume in addition that $\EE/\K(\H)$ is isomorphic to $C(\S)$, and if we take
the results presented in Example \ref{ex_K_groups} into account, one infers that
$$
\Z\cong K_1\big(C(\S)\big) \stackrel{\ind}{\longrightarrow} K_0\big(\K(\H)\big) \cong \Z
$$
with the first isomorphism realized by the winding number and the second isomorphism realized by the trace.
As a consequence, one infers from this together with Proposition \ref{prop_top_lev} that there exists $n\in \Z$
such that for any partial isometry $W\in \EE$ with unitary $\Gamma:=q(W)\in C(\S)$ the following equality holds:
\begin{equation*}
\Wind(\Gamma) = n \Tr\big([1-W^* W]-[1-W W^*]\big).
\end{equation*}
We emphasize once again that the interest in this equality is that the left hand side is independent of the choice of any special
representative in $[\Gamma]_1$.
Let us also mention that the number $n$ depends on the choice of the extension of $\K(\H)$ by $C(\S)$, see \cite[Chap.~3.2]{WO},
but also on the convention chosen for the computation of the winding number.

If we summarize all this in a single statement, one gets:

\begin{Theorem}[Abstract topological Levinson's theorem]\label{thm_Lev}
Let $\H$ be a separable Hilbert space, and let $\EE\subset \B(\H)$ be a unital $C^*$-algebra such that
$\K(\H)\subset \EE$ and $\EE/\K(\H)\cong C(\S)$ (with quotient morphism denoted by $q$). Then there exists $n\in \Z$
such that for any partial isometry $W\in \EE$ with unitary $\Gamma:=q(W)\in C(\S)$ the following equality holds:
\begin{equation}\label{eq_Lev_2}
\Wind(\Gamma) = n \Tr\big([1-W^* W]-[1-W W^*]\big).
\end{equation}
In particular if $W= W_-$ for some scattering system satisfying Assumption \ref{Assum_1}, the previous equality reads
$$
\Wind\big(q(W_-)\big) = -n \Tr\big([E_{\rm p}]\big).
$$
\end{Theorem}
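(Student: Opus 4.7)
The plan is to package Proposition \ref{prop_top_lev} together with the explicit identifications of $K_1(C(\S))$ and $K_0(\K(\H))$ recalled in Example \ref{ex_K_groups}. Applying the six-term exact sequence to
\begin{equation*}
0 \to \K(\H) \hookrightarrow \EE \stackrel{q}{\to} C(\S)\to 0
\end{equation*}
one obtains an index map $\ind:K_1(C(\S))\to K_0(\K(\H))$. Under the isomorphisms $\Wind:K_1(C(\S))\to \Z$ and $\Tr\circ\tr:K_0(\K(\H))\to \Z$ of Example \ref{ex_K_groups}, the map $\ind$ becomes a group homomorphism $\Z\to \Z$, and any such homomorphism is multiplication by a (unique) integer $n$. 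This integer depends only on the extension of $\K(\H)$ by $C(\S)$ encoded in $\EE$, not on any particular partial isometry.

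Now given any partial isometry $W\in \EE$ such that $\Gamma:=q(W)\in C(\S)$ is unitary, Proposition \ref{prop_top_lev} (applied in the case $n=1$, i.e.\ without passing to matrix algebras) yields that $1-W^*W$ and $1-WW^*$ are projections in $\K(\H)$ and
\begin{equation*}
\ind([\Gamma]_1) = [1-W^*W]_0 - [1-WW^*]_0\in K_0(\K(\H)).
\end{equation*}
Applying the two chosen isomorphisms to this equality gives exactly \eqref{eq_Lev_2}, namely $\Wind(\Gamma)=n\,\Tr\bigl([1-W^*W]-[1-WW^*]\bigr)$. The main subtlety here is conceptual rather than technical: one has to observe that the right-hand side is automatically an integer, and that although each individual choice of $W$ with prescribed image $\Gamma$ may produce different projections, their difference in $K_0$ is fixed by $[\Gamma]_1$, so the identity is consistent.

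For the second, more concrete statement, assume that the partial isometry is $W=W_-$ coming from a scattering system satisfying Assumption \ref{Assum_1}. The relations recalled in \eqref{eq_range_W} give $W_-^*W_-=1$ and $W_-W_-^*=1-E_{\rm p}(H)$, so
\begin{equation*}
1-W_-^*W_-=0,\qquad 1-W_-W_-^*=E_{\rm p}(H).
\end{equation*}
Since assumption \emph{(ii)} guarantees $\sharp\,\sigma_p(H)<\infty$, the projection $E_{\rm p}(H)$ is finite-rank and hence lies in $\K(\H)$, so the hypotheses of the previous step are met (provided one has verified, as part of the model-dependent input, that $W_-\in \EE$ and that $q(W_-)\in C(\S)$ is unitary). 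Substituting into \eqref{eq_Lev_2} and using $\Tr([E_{\rm p}])=\dim\bigl(\H_{\rm p}(H)\bigr)$ then yields
\begin{equation*}
\Wind\bigl(q(W_-)\bigr)=n\,\Tr\bigl([0]-[E_{\rm p}]\bigr)=-n\,\Tr\bigl([E_{\rm p}]\bigr),
\end{equation*}
as claimed. The only real obstacle is conceptual: identifying the index map concretely with multiplication by $n$ in terms of the winding number and the trace, and then recognising that Proposition \ref{prop_top_lev} is designed to be invoked directly with $W_-$ in the role of $W$; the actual computation once these identifications are in place is essentially bookkeeping.
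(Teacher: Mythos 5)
Your proposal is correct and follows essentially the same route as the paper: the paper also combines the six-term exact sequence for $0\to\K(\H)\hookrightarrow\EE\to C(\S)\to 0$ with the identifications $\Wind:K_1\big(C(\S)\big)\to\Z$ and $\Tr:K_0\big(\K(\H)\big)\to\Z$ from Example \ref{ex_K_groups}, invokes Proposition \ref{prop_top_lev} for the partial isometry $W$, and specializes to $W=W_-$ via \eqref{eq_range_W}. No gaps; the observation that $n$ depends only on the extension and not on $W$ matches the paper's remark.
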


Note that in applications, the factor $n$ will be determined by computing both sides of the equality on an explicit example.

\subsection{The leading example}\label{subsec_example}

We shall now provide a concrete short exact sequence of $C^*$-algebras, and illustrate
the previous constructions on this example.

In the Hilbert space $\ltwo(\R)$ we consider the two canonical self-adjoint operators $X$ of
multiplication by the variable, and $D=-i\frac{\d}{\d x}$ of differentiation.
These operators satisfy the canonical commutation relation written formally $[iD,X]=1$, or more precisely
$\e^{-isX}\e^{-itD} = \e^{-ist}\e^{-itD}\e^{-isX}$.
We recall that the spectrum of both operators is $\R$. Then, for any functions
$\varphi, \eta\in \linf(\R)$, one can consider by bounded functional calculus
the operators $\varphi(X)$ and $\eta(D)$ in $\B\big(\ltwo(\R)\big)$.
And by mixing some operators $\varphi_i(X)$ and $\eta_i(D)$ for suitable functions $\varphi_i$ and $\eta_i$,
we are going to produce an algebra $\EE$ which will be useful in many applications. In fact, the first algebras
which we are going to construct have been introduced in \cite{GI} for a different purpose, and these algebras
have been an important source of inspiration for us.
We also  mention that related algebras had already been introduced a long time ago in
\cite{BC1,BC2, Cordes,CH}.

Let us consider the closure in $\B\big(\ltwo(\R)\big)$ of the $*$-algebra generated by
elements of the form $\varphi_i(D)\eta_i(X)$, where $\varphi_i, \eta_i$ are continuous functions on $\R$ which have limits at $\pm \infty$.
Stated differently, $\varphi_i, \eta_i$ belong to $C(\bR)$. Note that this algebra is clearly unital.
In the sequel, we shall use the following notation:
$$
\EE_{(D,X)}:=C^*\Big(\varphi_i(D)\eta_i(X)\mid \varphi_i,\eta_i\in C(\bR)\Big).
$$
Let us also consider the $C^*$-algebra generated by $\varphi_i(D)\eta_i(X)$ with $\varphi_i,\eta_i\in C_0(\R)$,
which means that these functions are continuous and vanish at $\pm \infty$.
As easily observed, this algebra is a closed ideal in $\EE_{(D,X)}$ and is equal to the $C^*$-algebra
$\K\big(\ltwo(\R)\big)$ of compact operators in $\ltwo(\R)$, see for example \cite[Corol.~2.18]{GI}.

Implicitly, the description of the quotient $\EE_{(D,X)}/\K\big(\ltwo(\R)\big)$ has already been provided in Section \ref{sec_baby}.
Let us do it more explicitly now.
We consider the square $\blacksquare:=\bR\times \bR$ whose boundary $\square$ is the
union of four parts: $\square =C_1\cup C_2\cup C_3\cup C_4$, with $C_1 = \{-\infty\}\times \bR$, $C_2 = \bR \times \{+\infty\}$,
$C_3 = \{+\infty\}\times \bR$ and $C_4 = \bR \times \{-\infty\}$. We can also view $C(\square)$ as the subalgebra of
\begin{equation}\label{eq_sum_alg2}
C(\bR)\oplus C(\bR)\oplus C(\bR)\oplus C(\bR)
\end{equation}
given by elements
$\Gamma:=(\Gamma_1,\Gamma_2,\Gamma_3,\Gamma_4)$ which coincide at the
corresponding end points, that is,
$\Gamma_1(+\infty)=\Gamma_2(-\infty)$, $\Gamma_2(+\infty) = \Gamma_3(+\infty)$, $\Gamma_3(-\infty) = \Gamma_4(+\infty)$,
and $\Gamma_4(-\infty) = \Gamma_1(-\infty)$.
Then $\EE_{(D,X)}/\K\big(\ltwo(\R)\big)$ is isomorphic to
$C(\square)$, and if we denote the quotient map by
$$
q: \EE_{(D,X)}\to \EE_{(D,X)}/\K\big(\ltwo(\R)\big) \cong  C(\square)
$$
then the image $q\big(\varphi(D)\eta(X)\big)$ in \eqref{eq_sum_alg2}
is given by $\Gamma_1 = \varphi(-\infty)\eta(\cdot)$, $\Gamma_{2} =
\varphi(\cdot)\eta(+\infty)$, $\Gamma_{3} = \varphi(+\infty)\eta(\cdot)$ and $\Gamma_{4} =
\varphi(\cdot)\eta(-\infty)$. Note that this isomorphism is proved in \cite[Thm.~3.22]{GI}.
In summary, we have obtained the short exact sequence
$$
0 \to \K\big(\ltwo(\R)\big) \hookrightarrow \EE_{(D,X)} \stackrel{q\;\;\!}{\to} C(\square) \to 0
$$
with $\K\big(\ltwo(\R)\big)$ and $\EE_{(D,X)}$ represented in $\B\big(\ltwo(\R)\big)$, but with $C(\square)$
which is not naturally represented in $\B\big(\ltwo(\R)\big)$.
Note however that each of the four functions summing up in an element of $C(\square)$ can individually be represented in
$\B\big(\ltwo(\R)\big)$, either as a multiplication operator or as a convolution operator.

We shall now construct several isomorphic versions of these algebras. Indeed, if one looks
back at the baby model, the wave operator is expressed in \eqref{eq_wave_baby}
with bounded functions of the two operators $H_{\rm D}$ and $A$, but not in terms of $D$ and $X$.
In fact, we shall first use a third pair of operators, namely $L$ and $A$, acting in $\ltwo(\R_+)$,
and then come back to the pair $(H_{\rm D},A)$ also acting in $\ltwo(\R_+)$.

Let us consider the Hilbert space $\ltwo(\R_+)$, and as in \eqref{eq_dil_group} the action of the dilation group
with generator $A$. Let also $B$ be the operator of multiplication in $\ltwo(\R_+)$ by the function $-\ln$,
\ie~$[Bf](\lambda)=-\ln(\lambda)f(\lambda)$ for any $f\in C_{\rm c}(\R_+)$ and $\lambda\in \R_+$.
Note that if one sets $L$ for the self-adjoint operator of multiplication by the variable in $\ltwo(\R_+)$, \ie
\begin{equation}\label{eq_def_L}
[Lf](\lambda):= \lambda f(\lambda)\qquad f\in C_{\rm c}(\R_+) \hbox{ and }\lambda \in \R_+ ,
\end{equation}
then one has $B=-\ln(L)$.
Now, the equality $[iB,A]=1$ holds (once suitable defined), and the relation between the pair of operators $(D,X)$ in $\ltwo(\R)$
and the pair $(B,A)$ in $\ltwo(\R_+)$ is well-known and corresponds to the Mellin transform.
Indeed, let $\V:\ltwo(\R_+)\to \ltwo(\R)$ be defined by $(\V f)(x):=\e^{x/2}f(\e^x)$ for $x\in\R$, and remark that $\V$ is a
unitary map with adjoint $\V^*$ given by $(\V^*g)(\lambda)=\lambda^{-1/2}g(\ln \lambda)$ for $\lambda\in\R_+$.
Then, the Mellin transform $\M:\ltwo(\R_+)\to \ltwo(\R)$ is defined by $\M:=\F\V$ with $\F$ the usual unitary
Fourier transform\footnote{For $f\in C_{\rm c}(\R)$ and $x\in \R$ we set $[\F f](x)=(2\pi)^{-1/2}\int_\R \e^{-ixy}f(y)\;\!\d y$.} in $\ltwo(\R)$.
The main property of $\M$ is that it diagonalizes the generator of dilations, namely,
$\M A \M^*=X$. Note that one also has $\M B \M^*=D$.

Before introducing a first isomorphic algebra, observe that if $\eta\in C(\bR)$, then
$$
\M^* \eta(D)\M = \eta(\M^* D\M) = \eta(B) = \eta\big(-\ln(L)\big) \equiv \psi(L)
$$
for some $\psi \in C(\bRp)$.
Thus, by taking these equalities into account, it is natural to define in $\B\big(\ltwo(\R_+)\big)$ the $C^*$-algebra
$$
\EE_{(L,A)}:=C^*\Big(\psi_i(L)\eta_i(A)\mid \psi_i\in C(\bRp)\hbox{ and }\eta_i\in C(\bR)\Big),
$$
and clearly this algebra is isomorphic to the $C^*$-algebra $\EE_{(D,X)}$ in $\B\big(\ltwo(\R)\big)$.
Thus, through this isomorphism one gets again a short exact sequence
$$
0 \to \K\big(\ltwo(\R_+)\big) \hookrightarrow \EE_{(L,A)} \stackrel{q\;\;\!}{\to} C(\square) \to 0
$$
with the square $\square$ made of the four parts $\square = B_1\cup B_2 \cup B_3 \cup B_4$
with $B_1 = \{0\}\times[-\infty,+\infty]$, $B_2 =[0,+\infty]\times\{+\infty\}$, $B_3 = \{+\infty\}\times[-\infty,+\infty]$,
and $B_4=[0,+\infty]\times\{-\infty\}$. In addition, the algebra $C(\square)$ of continuous functions on $\square$
can be viewed as a subalgebra of
\begin{equation}\label{eq_sum_alg3}
C\big([-\infty,+\infty]\big)\oplus C\big([0,+\infty]\big)\oplus C\big([-\infty,+\infty]\big)\oplus C\big([0,+\infty]\big)
\end{equation}
given by elements $\Gamma:=(\Gamma_1, \Gamma_2, \Gamma_3, \Gamma_4)$ which coincide at the corresponding end points, that
is, $\Gamma_1(+\infty) = \Gamma_2(0)$, $\Gamma_2(+\infty) = \Gamma_3(+\infty)$, $\Gamma_3(-\infty) = \Gamma_4(+\infty)$,
and $\Gamma_4(0) = \Gamma_1(-\infty)$.

Finally, if one sets $\F_{\rm s}$ for the unitary Fourier sine transformation in $\ltwo(\R_+)$, as recalled in \eqref{Fouriers},
then the equalities $-A = \F_{\rm s}^* A \F_{\rm s}$ and $\sqrt{H_{\rm D}} = \F_{\rm s}^* L \F_{\rm s}$ hold, where $H_{\rm D}$ corresponds to
the Dirichlet Laplacian on $\R_+$ introduced in Section \ref{sec_baby}. As a consequence, note that the formal equality
$[i\12 \ln(H_{\rm D}),A]=1$ can also be fully justified. Moreover, by using this new unitary transformation
one gets that the $C^*$-subalgebra of $\B\big(\ltwo(\R_+)\big)$ defined by
\begin{equation}\label{alg_A_H_D}
\EE_{(H_{\rm D},A)}:=C^*\Big(\psi_i(H_{\rm D})\eta_i(A)\mid \psi_i\in C(\bRp) \hbox{ and } \varphi_i\in C(\bR)\Big),
\end{equation}
is again isomorphic to $\EE_{(D,X)}$, and that the quotient $\EE_{(H_{\rm D},A)}/\K\big(\ltwo(\R_+)\big)$ can naturally be viewed
as a subalgebra of the algebra introduced in \eqref{eq_sum_alg3} with similar compatibility conditions.
Let us mention that if the Fourier cosine transformation $\F_{\rm c}$ had been chosen instead of $\F_{\rm s}$
(see \eqref{Fourierc} for the definition of $\F_{\rm c}$) an isomorphic
algebra $\EE_{(H_{\rm N},A)}$ would have been obtained, with $H_{\rm N}$ the Neumann Laplacian on $\R_+$.

\begin{Remark}
Let us stress that the presence of some minus signs in the above expressions, as for example in
$B =-\ln(L)$ or in $-A = \F_{\rm s}^* A \F_{\rm s}$, are completely harmless and unavoidable.
However, one can not simply forget them because they play a (minor) role in the conventions related to the computation of the winding number.
\end{Remark}

\subsection{Back to the baby model}\label{subsec_Teddy}

Let us briefly explain how the previous framework can be used in the context of the baby model.
This will also allow us to compute explicitly the value of $n$ in Theorem \ref{thm_Lev}.

We consider the Hilbert space $\ltwo(\R_+)$ and the unital $C^*$-algebra $\EE_{(H_{\rm D},A)}$ introduced in \eqref{alg_A_H_D}.
Let us first observe that the wave operator $W_-^\alpha$ of \eqref{eq_wave_baby} is an isometry which
clearly belongs to the $C^*$-algebra $\EE_{(H_{\rm D},A)}\subset \B\big(\ltwo(\R_+)\big)$. In addition, the image of $W_-^\alpha$
in the quotient algebra $\EE_{(H_{\rm D},A)}/\K\big(\ltwo(\R_+)\big)\cong C(\square)$
is precisely the function $\Gammaq^\alpha$, defined in
\eqref{eq_baby_Gamma_1} for $\alpha\neq 0$ and in \eqref{eq_baby_Gamma_2} for $\alpha=0$, which are unitary elements of $C(\square)$.
Finally, since $C(\square)$ and $C(\S)$ are clearly isomorphic, the winding number
$\Wind(\Gammaq^\alpha)$ of $\Gammaq^\alpha$ can be computed, and in fact this has been performed and recorded in the table
of Section \ref{sec_baby}.

On the other hand, it follows from \eqref{eq_range_W} that $1-(W_-^\alpha)^*W_-^\alpha = 0$ and that $1-W_-^\alpha (W_-^\alpha)^*=E_p(H^\alpha)$,
which is trivial if $\alpha\geq 0$ and which is a projection of dimension $1$ if $\alpha<0$.
It follows that
\begin{equation}\label{eq_Tr_explicit}
\Tr\big([1-(W_-^\alpha)^* W_-^\alpha]-[1-W_-^\alpha (W_-^\alpha)^*]\big) = -\Tr\big(E_p(H^\alpha)\big) =
\left\{\begin{matrix} -1 & \hbox{ if } \alpha<0 \;\!,\\ 0 & \hbox{ if } \alpha\geq 0\;\!. \end{matrix}\right.
\end{equation}

Thus, this example fits in the framework of Theorem \ref{thm_Lev}, and in addition both sides
of \eqref{eq_Lev_2} have been computed explicitly.
By comparing \eqref{eq_Tr_explicit} with the results obtained for $\Wind(\Gammaq^\alpha)$, one gets that the factor $n$
mentioned in \eqref{eq_Lev_2} is equal to $-1$ for these algebras.
Finally, since $E_p(H^\alpha)$ is related to the point spectrum of $H^\alpha$, the content of Proposition \ref{prop_baby}
can be rewritten as
$$
\Wind(\Gammaq^\alpha) = \sharp \sigma_{\rm p}(H^\alpha).
$$
This equality corresponds to a topological version of Levinson's theorem for the baby model.
Obviously, this result was already obtained in Section \ref{sec_baby} and all the above framework was not necessary
for its derivation. However, we have now in our hands a very robust framework which will be applied to several
other situations.

\section{Quasi $1$D examples}\label{Sec_1D}
\setcounter{equation}{0}

In this section, we gather various examples of scattering systems which can be recast in
the framework introduced in the previous section. Several topological versions of Levinson's theorem will be deduced
for these models. Note that we shall avoid in this section the technicalities required for
obtaining more explicit formulas for the wave operators.
An example of such a rather detailed proof will be provided for Schr\"odinger operators on $\R^3$.

\subsection{Schr\"odinger operator with one point interaction}\label{subsec_point}

In this section we recall the results which have been obtained for
Schr\"odinger operators with one point interaction. In fact, such operators were the
first ones on which the algebraic framework has been applied. More information
about this model can be found in \cite{KR_PI}. Note that the construction
and the results depend on the space dimension, we shall therefore present successively the results in dimension
$1$, $2$ and $3$. However, even in dimension $2$ and $3$, the problem is essentially one dimensional, as we shall observe.

Let us consider the Hilbert space $\ltwo(\R^d)$ and the operator $H_0=-\Delta$ with domain the Sobolev space $\H^2(\R^d)$.
For the operator $H$ we shall consider the perturbation of $H_0$ by a one point interaction located at
the origin of $\R^d$. We shall not recall the precise definition of a one point interaction since
this subject is rather well-known, and since the literature on the subject is easily accessible. Let us
just mention that such a perturbation of $H_0$ corresponds to the addition of a boundary condition at $0\in \R^d$
which can be parameterized by a single real parameter family in $\R^d$ for $d=2$ and $d=3$. In dimension
$1$ a four real parameters family is necessary for describing all corresponding operators.
In the sequel and in dimension $1$ we shall deal only with either a so-called $\delta$-interaction
or a $\delta'$-interaction. We refer for example to the monograph \cite{AGHH} for a thorough presentation
of operators with a finite or an infinite number of point interactions.

Beside the action of dilations in $\ltwo(\R_+)$, we shall often use the dilation groups in $\ltwo(\R^d)$ whose action is defined by
$$
[U_t f](x) = \e^{d t/2} f\big(\e^t x\big), \qquad f\in \ltwo(\R^d), \ x \in \R^d.
$$
Generically, its generator will be denoted by $A$ in all these spaces.

\subsubsection{The dimension $d=1$}\label{subsec_even_odd}

For any $\alpha, \beta \in \R$, let us denote by $H^\alpha$ the operator in $\ltwo(\R)$ which formally corresponds to $H_0+\alpha \delta$
and by $H^\beta$ the operator which formally corresponds to $H_0+\beta \delta'$.
Note that for $\alpha<0$ and for $\beta<0$ the operators $H^\alpha$ and $H^\beta$ have both a single eigenvalue of multiplicity one,
while for $\alpha\geq 0$ and for $\beta \geq 0$ the corresponding operators have no eigenvalue.
It is also known that the wave operators $W_\pm^\alpha$ for the pair $(H^\alpha, H_0)$ exist, and that
the wave operators $W_\pm^\beta$ for the pair $(H^\beta, H_0)$ also exist.
Some explicit expressions for them have been computed in \cite{KR_PI}.

\begin{Lemma}
For any $\alpha, \beta \in \R$ the following equalities hold in $\B\big(\ltwo(\R)\big)$~:
\begin{align*}
W_-^\alpha & = 1 + \12\big(1+\tanh(\pi A)+i\cosh(\pi A)^{-1}\big) \Big[\frac{2\sqrt{H_0}-i\alpha}{2\sqrt{H_0}+i\alpha}-1\Big] P_{\rm e}, \\
W_-^\beta & = 1 + \12\big(1+\tanh(\pi A)-i\cosh(\pi A)^{-1}\big) \Big[\frac{2+i\beta\sqrt{H_0}}{2-i\beta\sqrt{H_0}}-1\Big] P_{\rm o},
\end{align*}
where $P_{\rm e}$ denotes the projection onto the set of even functions of $\ltwo(\R)$,
while $P_{\rm o}$ denotes the projection onto the set of odd functions of $\ltwo(\R)$.
\end{Lemma}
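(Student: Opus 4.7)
\emph{Plan.} The proof hinges on the parity symmetry. The parity operator $\Pi:f(x)\mapsto f(-x)$ commutes with $H_0$, $H^\alpha$ and $H^\beta$, so these operators respect the decomposition $\ltwo(\R)=P_{\rm e}\ltwo(\R)\oplus P_{\rm o}\ltwo(\R)$. Inspecting the jump conditions defining these self-adjoint extensions, odd $\H^2(\R)$-functions vanish at $0$ by continuity, so the $\delta$-coupling is invisible to them, giving $H^\alpha P_{\rm o}=H_0 P_{\rm o}$; likewise even $\H^2(\R)$-functions have vanishing derivative at $0$ (since $f'$ is then odd and continuous), so $H^\beta P_{\rm e}=H_0 P_{\rm e}$. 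This already accounts for the projection factors $P_{\rm e}$ and $P_{\rm o}$ on the right-hand sides: $W_-^\alpha - 1$ is supported on the even sector and $W_-^\beta - 1$ on the odd sector.

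On the non-trivial parity sector one transfers to $\ltwo(\R_+)$ via the parity unitaries $\iota_{\rm e}, \iota_{\rm o}$ sending $f \mapsto \sqrt{2}\, f|_{\R_+}$. A short computation yields $\iota_{\rm e}H_0 P_{\rm e}\iota_{\rm e}^*=H_{\rm N}$ (Neumann Laplacian on $\R_+$) and $\iota_{\rm o}H_0 P_{\rm o}\iota_{\rm o}^*=H_{\rm D}$, with the dilation generator $A$ on $\R$ restricting to the generator $A$ on $\R_+$. The $\delta$-jump condition $2f'(0+)=\alpha f(0)$ for even $f$ translates to the half-line Robin condition $g'(0)=(\alpha/2)g(0)$, while the $\delta'$-condition $2f(0+)=\beta f'(0)$ for odd $f$ becomes $g'(0)=(2/\beta)g(0)$, with $\beta=0$ recovering the Dirichlet case of the baby model.

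The $\delta'$ case now follows from Lemma \ref{lem_baby} applied with the parameter $2/\beta$: on the odd sector the reference is $H_{\rm D}$ and the perturbed operator is Robin with parameter $2/\beta$, and the spectral theorem rewrites
\begin{equation*}
\frac{2/\beta+i\sqrt{H_{\rm D}}}{2/\beta-i\sqrt{H_{\rm D}}}=\frac{2+i\beta\sqrt{H_{\rm D}}}{2-i\beta\sqrt{H_{\rm D}}},
\end{equation*}
the right-hand side extending continuously through $\beta=0$. Pulling back through $\iota_{\rm o}$ and inserting the factor $P_{\rm o}$ yields the announced expression for $W_-^\beta$.

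The $\delta$ case is the main obstacle because the reference operator on the even sector is $H_{\rm N}$, not $H_{\rm D}$, so Lemma \ref{lem_baby} does not apply directly. The plan is to repeat the stationary proof given in the appendix for the baby model but with the Fourier cosine transform $\F_{\rm c}$ replacing $\F_{\rm s}$, since the generalized eigenfunctions of $H_{\rm N}$ are $\cos(k\cdot)$ rather than $\sin(k\cdot)$. After passing to the Mellin variable via $\V$, the stationary formula for $W_-^\alpha$ produces the factor $\frac{1}{2}(1+\tanh(\pi A)+i\cosh(\pi A)^{-1})$; the sign flip in front of $i\cosh(\pi A)^{-1}$ compared with Lemma \ref{lem_baby} is the Mellin image of the phase relation $\cos(kx)=\sin(kx+\pi/2)$ between the two families of eigenfunctions. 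Tracking the contour shifts in the Mellin analysis of the explicit Krein-formula resolvent of $H^\alpha$ is the only non-routine step; alternatively one could try to realize $W_-^\alpha P_{\rm e}$ as a composition of the baby-model wave operator with a stationary intertwiner relating $H_{\rm N}$ to $H_{\rm D}$, but constructing that intertwiner explicitly requires essentially the same computation.
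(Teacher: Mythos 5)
Your parity reduction is the right idea and most of it is sound: $H_0$, $H^\alpha$ and $H^\beta$ commute with parity, $H^\alpha$ coincides with $H_0$ on the odd sector and $H^\beta$ with $H_0$ on the even sector (which accounts for the factors $P_{\rm e}$ and $P_{\rm o}$), the half-line translations (Robin parameter $\alpha/2$ versus $H_{\rm N}$ on the even sector, Robin parameter $2/\beta$ versus $H_{\rm D}$ on the odd sector) are correct, and the dilation generator does pass to the half-line generator without any sign change. In particular the $\delta'$ half of the Lemma is completely proved by your argument from Lemma \ref{lem_baby}, after the rewriting $\frac{2/\beta+i\sqrt{H_{\rm D}}}{2/\beta-i\sqrt{H_{\rm D}}}=\frac{2+i\beta\sqrt{H_{\rm D}}}{2-i\beta\sqrt{H_{\rm D}}}$ (handle $\beta=0$ separately: then $H^\beta=H_0$ and $W_-^0=1$, which matches the formula, rather than invoking Lemma \ref{lem_baby} with an infinite parameter). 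For context, the paper itself does not prove this Lemma but refers to \cite{KR_PI}; the only in-paper material available is the Appendix on the baby model, which is exactly what you exploit.

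The genuine gap is the $\delta$/even case, which you only sketch. The assertion that the Neumann-reference analogue of Lemma \ref{lem_baby} carries the factor $\12\big(1+\tanh(\pi A)+i\cosh(\pi A)^{-1}\big)$ is correct, but the justification offered (a phase relation $\cos(kx)=\sin(kx+\pi/2)$ and ``contour shifts in the Mellin analysis of the Krein resolvent'') is not a proof, and it also makes the missing step sound heavier than it is. The efficient completion stays inside the Appendix's toolkit: redo the short manipulation in the proof of Lemma \ref{lem_baby_W} for the pair (Robin, $H_{\rm N}$), using the stationary formulas of \cite[Sec.~3.1]{Yaf} with $\F_{\rm c}$ as the transform diagonalizing $H_{\rm N}$; this yields $W_-=1+\12\big(1+i\F_{\rm s}^*\F_{\rm c}\big)[S-1]$ with $S=\frac{2\sqrt{H_{\rm N}}-i\alpha}{2\sqrt{H_{\rm N}}+i\alpha}$ for the Robin parameter $\alpha/2$ (one checks directly that the even-sector scattering amplitude $t+r$ of the $\delta$-interaction equals this function). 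Then no new kernel or Mellin computation is needed at all: $i\F_{\rm s}^*\F_{\rm c}=\big(-i\F_{\rm c}^*\F_{\rm s}\big)^*=\tanh(\pi A)+i\cosh(\pi A)^{-1}$ follows by taking the adjoint in Lemma \ref{lemmaA}, which is precisely the sign flip you predicted. Until this even-sector computation is actually carried out (or quoted from \cite{KR_PI}), the first displayed formula of the Lemma is not established by your argument.
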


In order to come back precisely to the framework introduced in Section \ref{sec_Kth},
we need to introduce the even$\;\!$/$\;\!$odd representation of $\ltwo(\R)$. Given any
function $m$ on $\R$, we write $m_{\rm e}$ and $m_{\rm o}$ for the even part and the
odd part of $m$. We also set $\Hrond:=\ltwo(\R_+;\C^2)$ and
introduce the unitary map $\UU:\ltwo(\R) \to \Hrond$ given on any $f\in\ltwo(\R),~\big(\begin{smallmatrix}
f_1\\
f_2
\end{smallmatrix}\big)\in\Hrond,~x\in\R$ by
\begin{equation}\label{eq_e_o}
\UU f:=\sqrt2
\left(\begin{smallmatrix}
f_{\rm e}\\
f_{\rm o}
\end{smallmatrix}\right)\quad{\rm and}\quad
\big[\UU^*
\big(\begin{smallmatrix}
f_1\\
f_2
\end{smallmatrix}\big)\big](x)
:=\textstyle\frac1{\sqrt2}
[f_1(|x|)+\sgn(x)f_2(|x|)].
\end{equation}
Now, observe that if $m$ is a function on $\R$ and $m(X)$ denotes the corresponding multiplication operator on $\ltwo(\R)$,
then we have
$$
\UU m(X)\UU^*=
\left(\begin{smallmatrix}
m_{\rm e}(L)~ & m_{\rm o}(L)\\
m_{\rm o}(L)~ & m_{\rm e}(L)
\end{smallmatrix}\right)
$$
where $L$ is the operator of multiplication by the variable in $\ltwo(\R_+)$ already introduced in \eqref{eq_def_L}.

By taking these formulas and the previous lemma into account, one gets
\begin{align*}
\UU W_-^\alpha\UU^* & =
\left(\begin{smallmatrix} 1+\12\big(1+\tanh(\pi A)+i\cosh(\pi A)^{-1}\big) \big[\frac{2\sqrt{H_{\rm N}}-i\alpha}{2\sqrt{H_{\rm N}}+i\alpha}-1\big]  & \ 0 \\
0 &\  1\end{smallmatrix}\right) \\
\UU W_-^\beta  \UU^*& =
\left(\begin{smallmatrix}  1 &\  0 \\ 0 & \ 1+ \12\big(1+\tanh(\pi A)-i\cosh(\pi A)^{-1}\big) \big[\frac{2+i\beta\sqrt{H_{\rm D}}}{2-i\beta\sqrt{H_{\rm D}}}-1\big]
\end{smallmatrix}\right)
\end{align*}
It clearly follows from these formulas that $\UU W_-^\alpha\UU^* \in \MM_2\big(\EE_{(H_{\rm N},A)}\big)$ and
$\UU W_-^\beta \UU^* \in \MM_2\big(\EE_{(H_{\rm D},A)}\big)$, and as a consequence
the algebraic framework introduced in Section \ref{sec_Kth} can be applied straightforwardly.
In particular, one can define the functions $\Gammaq^\alpha$, $\Gammaq^\beta$ as the image of
$\UU W_-^\alpha\UU^* $ and $\UU W_-^\beta  \UU^*$ in the respective quotient algebras, and get:

\begin{Corollary}
For any $\alpha, \beta \in \R^*$, one has
\begin{align*}
\Gammaq^\alpha & =
\Big(
\left(\begin{smallmatrix} 1+\frac{1}{2}(1+\tanh(\pi \cdot)+i\cosh(\pi \cdot)^{-1}) [s^\alpha(0)-1]  & \ 0 \\
0 &\  1\end{smallmatrix}\right), \left(\begin{smallmatrix} s^\alpha(\cdot)  & \ 0 \\ 0 &\  1\end{smallmatrix}\right), \\
&\qquad\qquad\qquad\qquad
\left(\begin{smallmatrix} 1+\frac{1}{2} (1+\tanh(\pi \cdot)+i\cosh(\pi \cdot)^{-1}) [s^\alpha(+\infty)-1]  & \ 0 \\
0 &\  1\end{smallmatrix}\right),
\left(\begin{smallmatrix} 1  & 0 \\ 0 & 1\end{smallmatrix}\right) \Big)
\end{align*}
with $s^\alpha(\cdot) = \frac{2\sqrt{\cdot}-i\alpha}{2\sqrt{\cdot}+i\alpha}$,
\begin{align*}
\Gammaq^\beta  & =
\Big(
\left(\begin{smallmatrix}  1 &\  0 \\ 0 & \ 1+ \frac{1}{2}(1+\tanh(\pi \cdot)-i\cosh(\pi \cdot)^{-1})[s^\beta(0)-1]\end{smallmatrix}\right),
\left(\begin{smallmatrix}  1 &\  0 \\ 0 & \ s^\beta(\cdot)\end{smallmatrix}\right), \\
&\qquad\qquad\qquad\qquad
\left(\begin{smallmatrix}  1 &\  0 \\ 0 & \ 1+ \frac{1}{2}(1+\tanh(\pi \cdot)-i\cosh(\pi \cdot)^{-1}) [s^\beta(+\infty)-1]\end{smallmatrix}\right),
\left(\begin{smallmatrix} 1  & 0 \\ 0 & 1\end{smallmatrix}\right) \Big)
\end{align*}
with $s^\beta(\cdot)= \frac{2+i\beta\sqrt{\cdot}}{2-i\beta\sqrt{\cdot}}$, and $\Gammaq^0=(1_2,1_2,1_2,1_2)$ (both for $\alpha=0$ and $\beta = 0$).
In addition, one infers that for any $\alpha, \beta\in \R$:
$$
\Wind(\Gammaq^\alpha)=\sharp\sigma_{\rm p}(H^\alpha), \qquad \hbox{ and } \qquad
\Wind(\Gammaq^\beta)=\sharp\sigma_{\rm p}(H^\beta).
$$
\end{Corollary}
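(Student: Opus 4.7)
My plan is to exploit the even/odd decomposition so that the problem reduces, in each $2\times 2$ block, to a calculation entirely analogous to the one already carried out for the baby model. In the first paragraph of the proof I would rewrite the explicit formulas of the preceding lemma after conjugation by $\UU$. Using the fact that $\UU m(X)\UU^*$ is block-diagonal (resp.\ anti-diagonal) precisely when $m$ is even (resp.\ odd), together with the fact that $H_0$ acts as $H_{\rm N}$ on the even part and as $H_{\rm D}$ on the odd part of $\ltwo(\R)$, one sees that $\UU W_-^\alpha\UU^*$ is a $2\times 2$ diagonal matrix with a single nontrivial entry of the form $1+\eta_+(A)\bigl[s^\alpha(H_{\rm N})-1\bigr]$, and similarly $\UU W_-^\beta\UU^*$ has a single nontrivial entry $1+\eta_-(A)\bigl[s^\beta(H_{\rm D})-1\bigr]$, where $\eta_\pm(y):=\tfrac12(1+\tanh(\pi y)\pm i\cosh(\pi y)^{-1})$. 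Since $\eta_\pm\in C(\bR)$ and $s^{\alpha}\in C(\bRp)$, $s^\beta\in C(\bRp)$, these elements lie in $\MM_2\bigl(\EE_{(H_{\rm N},A)}\bigr)$ and $\MM_2\bigl(\EE_{(H_{\rm D},A)}\bigr)$ respectively, so the quotient map of Section~\ref{subsec_example} applies verbatim.

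In the second paragraph I would apply the quotient map entrywise. Since a product $\psi(H)\eta(A)$ is mapped by $q$ to the quadruple $(\psi(0)\eta,\,\psi\eta(+\infty),\,\psi(+\infty)\eta,\,\psi\eta(-\infty))$ in the four-summand representation~\eqref{eq_sum_alg3}, it suffices to note $\eta_\pm(+\infty)=1$, $\eta_\pm(-\infty)=0$, and that for $\alpha\neq 0$ one has $s^\alpha(0)=-1$, $s^\alpha(+\infty)=1$, while for $\beta\neq 0$ one has $s^\beta(0)=1$, $s^\beta(+\infty)=-1$. Inserting these values into the four components yields exactly the formulas for $\Gammaq^\alpha$ and $\Gammaq^\beta$ claimed in the statement; the compatibility of components at the corners is automatic because all formulas come from restrictions of a single continuous function on $\blacksquare$. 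The degenerate cases $\alpha=0$ and $\beta=0$ give wave operators equal to the identity, so $\Gammaq^0=(1_2,1_2,1_2,1_2)$.

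In the third paragraph I would compute the winding number. Because both $\Gammaq^\alpha$ and $\Gammaq^\beta$ are diagonal with one block equal to $1$, $\Wind$ reduces to the winding of the single nontrivial scalar entry, and I can proceed face by face as in the table of Section~\ref{sec_baby}. For $\alpha\neq 0$, the only nontrivial contributions come from $B_1$ (where the entry equals $-\tanh(\pi\cdot)-i\cosh(\pi\cdot)^{-1}$, contributing $+\tfrac12$) and $B_2$ (where the entry $s^\alpha(\lambda)=(2\sqrt\lambda-i\alpha)/(2\sqrt\lambda+i\alpha)$ traces the upper semicircle for $\alpha<0$ and the lower semicircle for $\alpha>0$, giving $+\tfrac12$ or $-\tfrac12$). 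Summing, $\Wind(\Gammaq^\alpha)=1$ for $\alpha<0$ and $0$ for $\alpha\geq 0$, in perfect agreement with $\sharp\sigma_{\rm p}(H^\alpha)$. For the $\beta$ case the roles of $B_1$ and $B_3$ are swapped because $s^\beta$ is nontrivial at $+\infty$ rather than at $0$, but by the same analysis one obtains $\Wind(\Gammaq^\beta)=1$ for $\beta<0$ and $0$ for $\beta\geq 0$, which matches $\sharp\sigma_{\rm p}(H^\beta)$.

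The only real obstacle is bookkeeping: keeping the sign of the $i\cosh(\pi\cdot)^{-1}$ term consistent with the orientation convention (clockwise traversal of $\square$ counted as positive), and being careful that the contribution from $B_1$ for the $\delta$-case and from $B_3$ for the $\delta'$-case is indeed $+\tfrac12$ in both situations, since it is this ``threshold'' semicircle that carries the bound state information. Once this is handled, no further input beyond the abstract Levinson theorem of Section~\ref{subsec_index} (or equivalently, the direct inspection above) is needed.
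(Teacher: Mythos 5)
Your proposal is correct and follows essentially the paper's route: conjugate by $\UU$ to land in $\MM_2\bigl(\EE_{(H_{\rm N},A)}\bigr)$ resp.\ $\MM_2\bigl(\EE_{(H_{\rm D},A)}\bigr)$, read off the four boundary components of the quotient image using the limiting values $\eta_\pm(\pm\infty)$, $s^\alpha(0)=-1$, $s^\alpha(+\infty)=1$, $s^\beta(0)=1$, $s^\beta(+\infty)=-1$, and then compare the face-by-face winding (with the clockwise convention of Section \ref{sec_baby}) with the known count of eigenvalues of $H^\alpha$, $H^\beta$. Your explicit half-circle bookkeeping, including the $+\tfrac12$ contribution of the threshold face ($B_1$ for the $\delta$-case, $B_3$ for the $\delta'$-case), matches the conventions and results of the paper, so the argument is complete.
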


\begin{Remark}\label{rem_PI}
Let us mention that another convention had been taken in \cite{KR_PI} for the computation of the winding number,
leading to a different sign in the previous equalities.
Note that the same remark holds for equations \eqref{equ1} and \eqref{equ2} below.
\end{Remark}

\subsubsection{The dimension $d=2$}

As already mentioned above, in dimension $2$ there is only one type of self-adjoint extensions, and thus
only one real parameter family of operators $H^\alpha$ which formally correspond to $H_0+\alpha \delta$.
The main difference with dimensions $1$ and $3$ is that $H^\alpha$ always possesses a single eigenvalue of
multiplicity one. As before, the wave operators $W_\pm^\alpha$ for the pair $(H^\alpha,H_0)$ exist, and it has been
shown in the reference paper that:

\begin{Lemma}
For any $\alpha \in \R$ the following equality holds:
\begin{equation*}
W_-^\alpha = 1 + \12\big(1+\tanh(\pi A/2)\big) \Big[\frac{2\pi\alpha -\Psi(1)-\ln(2)+\ln(\sqrt{H_0})+i\pi/2}{2\pi\alpha -\Psi(1)-\ln(2)+\ln(\sqrt{H_0})-i\pi/2}-1\Big] P_0,
\end{equation*}
where $P_0$ denotes the projection on the rotation invariant functions of $\ltwo(\R^2)$,
and where $\Psi$ corresponds to the digamma function.
\end{Lemma}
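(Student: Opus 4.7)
The plan is to adapt the strategy used for the baby model of Lemma~\ref{lem_baby} and for the three-dimensional point interaction, exploiting the rotational symmetry of the setup to reduce everything to a computation on the half-line. The first step is the angular decomposition
\[
\ltwo(\R^2) \;\cong\; \bigoplus_{m\in\Z}\ltwo(\R_+,r\,\d r)\otimes\{\e^{im\theta}\},
\]
under which $H_0$ becomes a direct sum of radial Bessel operators. Since the defect space of the symmetric restriction of $H_0$ that is self-adjointly extended to $H^\alpha$ is one-dimensional and entirely supported in the $m=0$ channel (functions of nonzero angular momentum vanish at the origin in the trace sense needed to detect a $\delta$-perturbation), $H^\alpha$ coincides with $H_0$ on $(1-P_0)\ltwo(\R^2)$, so $W_\pm^\alpha=1$ there. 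This accounts for the projection $P_0$ in the announced formula and reduces the problem to the rotation-invariant sector.

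On the s-wave subspace, the main analytic tool is Krein's resolvent formula
\[
(H^\alpha - z)^{-1} - (H_0 - z)^{-1}
\;=\; \tau^\alpha(z)^{-1}\,|G_z\rangle\langle G_{\bar z}|,
\]
where $G_z$ is the s-wave Green's function of $H_0$ at the origin and $\tau^\alpha(z)$ is the explicit scalar that encodes the chosen extension. The small-argument asymptotics of the Hankel function $H_0^{(1)}$ give
\[
\tau^\alpha(\lambda+i0)
\;=\; 2\pi\alpha-\Psi(1)-\ln(2)+\ln(\sqrt{\lambda})-\tfrac{i\pi}{2},
\]
so that the on-shell scattering coefficient is $s^\alpha(\lambda)=\overline{\tau^\alpha(\lambda+i0)}/\tau^\alpha(\lambda+i0)$, already matching the rational factor appearing inside the bracket of the claimed formula. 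Plugging this resolvent difference into the stationary representation of $W_-^\alpha$ reduces the problem to identifying an operator-valued integral which, in the energy variable, is of rank one.

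The third step is to diagonalize the s-wave part of $H_0$ by the Hankel transform of order zero and to conjugate by the Mellin transform so as to diagonalize $A$ simultaneously. The rank-one kernel coming from Krein's formula then becomes a multiplication operator whose symbol, in the joint spectral variables of $\sqrt{H_0}$ and $A$, is a ratio of Mellin transforms of $J_0$-Bessel kernels. Standard Weber--Schafheitlin type identities combined with the reflection formula $\Gamma(z)\Gamma(1-z)=\pi/\sin(\pi z)$ should produce precisely the symbol $\tfrac12\big(1+\tanh(\pi A/2)\big)$; the half-argument in the hyperbolic tangent reflects the fact that the relevant generalized eigenfunction is $J_0$ rather than the sine or cosine functions of 1D and 3D, which yielded the $\tanh(\pi A)\pm i\cosh(\pi A)^{-1}$ expressions encountered before.

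The main obstacle I expect is precisely this last identification, together with the careful bookkeeping of the logarithmic threshold behaviour peculiar to dimension two. Because the free Green's function diverges logarithmically at the origin, the constants $\Psi(1)$ and $\ln(2)$ and the energy-dependent term $\ln(\sqrt{H_0})$ must be tracked through the entire stationary computation; verifying that the resulting symbol has well-defined limits at the four corners of the square $\square$, so that the image of $W_-^\alpha$ in the quotient algebra of Section~\ref{subsec_example} yields a genuine continuous function on $\square$, is the most delicate part. Once that is done, the announced formula follows directly from comparing the Mellin symbol with the explicit function $\tfrac12(1+\tanh(\pi A/2))[s^\alpha(\sqrt{H_0})-1]$.
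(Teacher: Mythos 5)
Your outline follows essentially the same route as the proof this paper relies on (the lemma is only quoted here, its derivation being in \cite{KR_PI}): reduction to the rotation-invariant channel where the perturbation lives, the explicit Krein-type resolvent with the logarithmic factor $2\pi\alpha-\Psi(1)-\ln 2+\ln\sqrt{\lambda}-i\pi/2$ fed into the stationary formula for $W_-^\alpha$, and identification of the resulting energy-homogeneous kernel as a function of the generator of dilations via the Mellin transform. The Gamma-function ratio produced by the $J_0$-Hankel kernels does combine, through the reflection formula, into $\frac{1}{2}\big(1+\tanh(\pi A/2)\big)$ (compare the $m=0$, $\alpha\to 0$ limit of the functions $\tilde{\varphi}_m$ in Section \ref{sec_AB_wave}), so the step you single out as the main obstacle goes through exactly as you anticipate.
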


Note that in this formula, $A$ denotes the generator of dilations in $\ltwo(\R^2)$.
It is then sufficient to restrict our attention to $P_0 \ltwo(\R^2)$ since the subspace of $\ltwo(\R^2)$
which is orthogonal to $P_0\ltwo(\R^2)$ does not play any role for this model
(and it is the reason why this model is quasi one dimensional).
Thus, let us introduce the unitary map $\UU : P_0 \ltwo(\R^2)\to \ltwo(\R_+, r\d r)$ defined by $[\UU f](r):= \sqrt{2\pi} f(r)$
which is well defined since $f\in P_0 \ltwo(\R^2)$ depends only on the radial coordinate.
Since the dilation group as well as the operator $H_0$ leave the subspace $P_0 \ltwo(\R^2)$ of $\ltwo(\R^2)$ invariant, one gets
in $\ltwo(\R_+, r\d r)$~:
\begin{equation}\label{eq_2D}
\UU W_-^\alpha P_0\UU^* =
1 + \12\big(1+\tanh(\pi A/2)\big) \Big[\frac{2\pi\alpha -\Psi(1)-\ln(2)+\ln(\sqrt{H_0})+i\pi/2}{2\pi\alpha -\Psi(1)-\ln(2)+\ln(\sqrt{H_0})-i\pi/2}-1\Big].
\end{equation}

\begin{Remark}
Let us stress that the above formula does not take place in any of the representations introduced in
Section \ref{subsec_example} but in a unitarily equivalent one. Indeed, one can come back to the algebra $\EE_{(L,A)}$
by using the spectral representation of $H_0$.
More precisely let us first introduce $\F_0:\ltwo(\R^2)\to \ltwo\big(\R_+;\ltwo(\S)\big)$
defined by
$$
\big([\F_0 f](\lambda)\big)(\omega)=2^{-1/2} [\F f](\sqrt{\lambda}\omega), \qquad f\in C_{\rm c}(\R^2),
\ \lambda \in \R_+, \ \omega\in \S
$$
with $\F$ the unitary Fourier transform in $\ltwo(\R^2)$,
and recall that $[\F_0 H_0 f](\lambda) = \lambda [\F_0 f](\lambda)$ for any $f\in \H^2(\R^2)$
and a.e.~$\lambda \in \R_+$.
Then, if one defines the unitary map $\UU' : P_0 \ltwo(\R^2)\to \ltwo(\R_+)$ by $[\UU' f](\lambda):= \sqrt{\pi} [\F f](\sqrt{\lambda})$,
one gets $\UU' H_0 \UU'^* = L$, and a short computation using the dilation group
in $\ltwo(\R^2)$ and in $\ltwo(\R_+)$ leads to the relation $\UU' A \UU'^* = -2 A$.
As a consequence of this alternative construction, the following equality holds in $\ltwo(\R_+)$~:
\begin{equation*}
\UU' W_-^\alpha P_0 \UU'^* =
1 + \12\big(1-\tanh(\pi A)\big) \Big[\frac{2\pi\alpha -\Psi(1)-\ln(2)+\ln(\sqrt{L})+i\pi/2}{2\pi\alpha -\Psi(1)-\ln(2)+\ln(\sqrt{L})-i\pi/2}-1\Big]
\end{equation*}
and it is then clear that this operator belongs to $\EE_{(L,A)}$.
\end{Remark}

By coming back to the expression \eqref{eq_2D} one can compute the image $\Gammaq^\alpha$ of this operator
in the quotient algebra and obtain the following statement:

\begin{Corollary}
For any $\alpha \in \R$, one has
$\Gammaq^\alpha = \big(1, s^\alpha(\cdot),1,1\big)$
and
\begin{equation}\label{equ1}
\Wind(\Gammaq^\alpha) = \Wind(s^\alpha)=\sharp\sigma_{\rm p}(H^\alpha) =1,
\end{equation}
with $s^\alpha(\cdot) =\frac{2\pi\alpha -\Psi(1)-\ln(2)+\ln(\sqrt{\cdot})+i\pi/2}{2\pi\alpha -\Psi(1)-\ln(2)+\ln(\sqrt{\cdot})-i\pi/2}$.
\end{Corollary}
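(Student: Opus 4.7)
The plan is to read off $\Gammaq^\alpha := q\bigl(\UU W_-^\alpha P_0 \UU^*\bigr)$ componentwise from the product structure of the formula in \eqref{eq_2D}, and then evaluate the winding of its only non-trivial component as that of a Möbius map. Write the operator in \eqref{eq_2D} as $1 + \varphi(A)\bigl[s^\alpha(H_0) - 1\bigr]$ with $\varphi(a) := \frac{1}{2}\bigl(1 + \tanh(\pi a/2)\bigr)$. Four limits are all that matter: $\varphi(-\infty) = 0$, $\varphi(+\infty) = 1$, and $s^\alpha(0) = s^\alpha(+\infty) = 1$; the latter two hold because the inner expression $z := 2\pi\alpha - \Psi(1) - \ln 2 + \frac{1}{2}\ln\lambda$ tends to $\mp\infty$ at the endpoints while $(z+i\pi/2)/(z-i\pi/2) \to 1$ in both cases.

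These four limits force the image in $C(\square)$. On $B_1$ (resp.\ $B_3$) the bracket vanishes since $s^\alpha(0) = 1$ (resp.\ $s^\alpha(+\infty) = 1$), so $\Gamma_1 = \Gamma_3 = 1$. On $B_4$ the prefactor $\varphi(-\infty) = 0$ vanishes, so $\Gamma_4 = 1$. Only on $B_2$ does the prefactor become $1$, leaving $\Gamma_2 = s^\alpha(\cdot)$. Thus $\Gammaq^\alpha = \bigl(1, s^\alpha, 1, 1\bigr)$, and consequently $\Wind(\Gammaq^\alpha) = w_2(\Gammaq^\alpha) = \Wind(s^\alpha)$.

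For $\Wind(s^\alpha)$ I would use the Möbius structure: for real $z$ one has $|z+i\pi/2| = |z-i\pi/2|$, so $s^\alpha \in \T$, and factoring out the modulus gives $\arg s^\alpha = 2\arg(z+i\pi/2)$. As $\lambda$ runs from $0$ to $+\infty$, $z$ runs from $-\infty$ to $+\infty$ and $\arg(z+i\pi/2)$ decreases monotonically from $\pi$ to $0$; hence $\arg s^\alpha$ decreases by $2\pi$. In the clockwise-positive convention of Section~\ref{sec_baby}, this is exactly one full loop, so $\Wind(s^\alpha) = 1$.

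Combined with the fact already recorded that $H^\alpha$ carries exactly one eigenvalue of multiplicity one, this yields the claimed chain $\Wind(\Gammaq^\alpha) = \Wind(s^\alpha) = \sharp\sigma_p(H^\alpha) = 1$. The same identity is also forced abstractly by Theorem~\ref{thm_Lev} together with the sign constant $n = -1$ pinned down in Section~\ref{subsec_Teddy}. The only delicate point I anticipate is bookkeeping: one must check that the boundary labeling in \eqref{eq_2D} (with $H_0$ as first coordinate and $A$ as second) is compatible with the orientation convention used to fix $n$; once that is in place, the proof reduces to the two elementary identifications above and requires no further analytic input.
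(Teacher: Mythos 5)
Your proposal is correct and follows essentially the same route as the paper: one reads off the boundary values of \eqref{eq_2D} in the quotient (the prefactor $\frac12\big(1+\tanh(\pi\cdot/2)\big)$ vanishing at $-\infty$ and equalling $1$ at $+\infty$, while $s^\alpha(0)=s^\alpha(+\infty)=1$ kills the bracket on $B_1$ and $B_3$), so that only $\Gamma_2=s^\alpha$ contributes, and then one evaluates $\Wind(s^\alpha)=1$ and compares with the single eigenvalue of $H^\alpha$. Your explicit M\"obius-type argument for the $2\pi$ decrease of the argument, together with the orientation check against the baby-model convention, simply fills in the computation the paper leaves to \cite{KR_PI}.
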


\subsubsection{The dimension $d=3$}

In dimension $3$, there also exists only one real parameter family of self-adjoint operators $H^\alpha$
formally represented as $H_0+\alpha \delta$, and this operator has a single eigenvalue if $\alpha<0$
and no eigenvalue if $\alpha \geq 0$. As for the other two dimensions, the wave operators $W_\pm^\alpha$
for the pair $(H^\alpha,H_0)$ exist, and it has been shown in the reference paper that:

\begin{Lemma}
For any $\alpha \in \R$ the following equality holds
\begin{equation*}
W_-^\alpha = 1+ \12\big(1+\tanh(\pi A)-i \cosh(\pi A)^{-1}\big)
\Big[\frac{4\pi \alpha +i\sqrt{H_0}}{4\pi\alpha -i\sqrt{H_0}}-1\Big] P_0\ .
\end{equation*}
where $P_0$ denotes the projection on the rotation invariant functions of $\ltwo(\R^3)$.
\end{Lemma}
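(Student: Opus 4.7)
The argument parallels the two-dimensional case: since the $\delta$-interaction at the origin couples only to the $s$-wave, the problem reduces to a question on the half-line which is then solved by Lemma~\ref{lem_baby}. More precisely, any function in $(P_0\ltwo(\R^3))^\perp$ vanishes at the origin in the trace sense, so the boundary condition defining $H^\alpha$ has no effect there; $H^\alpha$ agrees with $H_0$ on this subspace, which is invariant under both evolutions. Consequently $W_-^\alpha$ restricts to the identity on $(P_0\ltwo(\R^3))^\perp$, and only $W_-^\alpha P_0$ needs to be determined.

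Introduce the radial unitary $\mathcal U:P_0\ltwo(\R^3)\to\ltwo(\R_+)$ defined by $(\mathcal U f)(r):=\sqrt{4\pi}\,r\,f(r)$. A direct computation shows that $\mathcal U$ intertwines the restriction of $H_0$ to $P_0\ltwo(\R^3)$ with the Dirichlet Laplacian $H_{\rm D}$ of Section~\ref{sec_baby}, and that it intertwines the restriction of the generator $A$ of three-dimensional dilations with the dilation generator on $\ltwo(\R_+)$ (as is seen from~\eqref{eq_dil_group}, since $\sqrt{4\pi}\,r\cdot\e^{3t/2}f(\e^tr)=\e^{t/2}(\mathcal U f)(\e^tr)$). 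The standard description of three-dimensional point interactions (see \cite{AGHH}) asserts moreover that the restriction of $H^\alpha$ to $P_0\ltwo(\R^3)$ is transformed by $\mathcal U$ into the Robin Laplacian on $\R_+$ with boundary condition $u'(0)=4\pi\alpha\,u(0)$, that is, into the operator $H^{4\pi\alpha}$ in the notation of Section~\ref{sec_baby}. As $\mathcal U$ intertwines both evolutions, $\mathcal U W_-^\alpha \mathcal U^*$ coincides with the wave operator of the baby model at parameter $4\pi\alpha$. Substituting this parameter into~\eqref{eq_wave_baby} and conjugating back by $\mathcal U^*$ then produces the stated formula, the projection $P_0$ encoding the triviality of $W_-^\alpha$ on the orthogonal complement established in the previous step.

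The main obstacle is the identification $\mathcal U H^\alpha P_0 \mathcal U^* = H^{4\pi\alpha}$: it requires translating the standard (resolvent-based) definition of the three-dimensional $\delta$-interaction into an explicit boundary condition on the radial profile $u(r)=\sqrt{4\pi}\,r\,f(r)$, and carefully tracking the normalisation constants that produce the factor $4\pi$. Once this identification is in place, the remainder of the argument is a purely mechanical transfer of Lemma~\ref{lem_baby} through the unitary $\mathcal U$.
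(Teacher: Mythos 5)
Your proof is correct, but it reaches the formula by a route different from the one behind the paper's statement. The paper quotes the lemma from \cite{KR_PI}, where the expression is obtained by an explicit computation of the wave operators in the spherically symmetric sector, the resulting operator being recognized as a function of the dilation generator in the same spirit as the appendix proof of Lemma \ref{lem_baby} (via the identity $-i\F_{\rm c}^*\F_{\rm s}=\tanh(\pi A)-i\cosh(\pi A)^{-1}$). You instead transplant Lemma \ref{lem_baby} itself: on $\big(P_0\ltwo(\R^3)\big)^\perp$ every self-adjoint extension of the underlying symmetric operator coincides with $H_0$ (the deficiency elements are spherically symmetric, which is the clean way to state your ``vanishing at the origin'' argument), so $W_-^\alpha$ is the identity there; on $P_0\ltwo(\R^3)$ the unitary $(\mathcal U f)(r)=\sqrt{4\pi}\,r f(r)$ carries $H_0$ to $H_{\rm D}$, the three-dimensional dilation generator to the half-line one (your $\e^{3t/2}$ computation is exactly the needed check, and the signs agree), and $H^\alpha$ to the Robin operator $H^{4\pi\alpha}$ of Section \ref{sec_baby}; conjugating \eqref{eq_wave_baby} with parameter $4\pi\alpha$ then yields the claim, since $A$ and $H_0$ commute with $P_0$. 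What your route buys is economy: the only non-mechanical input is the identification of the s-wave part of $H^\alpha$ with the half-line boundary condition $u'(0)=4\pi\alpha\,u(0)$, which you rightly single out as the place where the normalisation must be tracked; this is a standard statement in \cite{AGHH}, and it is cross-checked by the eigenvalue $-(4\pi\alpha)^2$ for $\alpha<0$ and by the s-wave scattering matrix $\frac{4\pi\alpha+i\sqrt{\cdot}}{4\pi\alpha-i\sqrt{\cdot}}$ appearing in the statement itself. The direct computation of \cite{KR_PI} avoids invoking that dictionary, at the price of essentially redoing the appendix analysis in the three-dimensional setting.
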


Note that in these formulas, $A$ denotes the generator of dilations in $\ltwo(\R^3)$.
As for the two dimensional case, it is sufficient to restrict our attention to $P_0 \ltwo(\R^3)$
since the subspace of $\ltwo(\R^3)$ which is orthogonal to $P_0\ltwo(\R^3)$ does not play any role for this model
(and it is again the reason why this model is quasi one dimensional).
Let us thus introduce the unitary map $\UU : P_0 \ltwo(\R^3)\to \ltwo(\R_+, r^2\d r)$ defined by $[\UU f](r):= 2\sqrt{\pi} f(r)$
which is well defined since $f\in P_0 \ltwo(\R^3)$ depends only on the radial coordinate.
Since the dilation group as well as the operator $H_0$ leave the subspace $P_0 \ltwo(\R^3)$ of $\ltwo(\R^3)$ invariant, one gets
in $\ltwo(\R_+, r^2\d r)$~:
\begin{equation}\label{eq_3D}
\UU W_-^\alpha P_0\UU^*= 1+ \12\big(1+\tanh(\pi A)-i \cosh(\pi A)^{-1}\big)
\Big[\frac{4\pi \alpha +i\sqrt{H_0}}{4\pi\alpha -i\sqrt{H_0}}-1\Big] \ .
\end{equation}

\begin{Remark}
As in the two dimensional case, the above formula does not take place in any of the representations introduced in
Section \ref{subsec_example} but in a unitarily equivalent one. In this case again, one can come back to the algebra $\EE_{(L,A)}$ by using
the spectral representation of $H_0$. We refer to the $2$-dimensional case for the details.
\end{Remark}

By coming back to the expression \eqref{eq_3D} one can compute the image $\Gammaq^\alpha$ of this operator
in the quotient algebra. If one sets $s^\alpha(\cdot) = \frac{4\pi \alpha +i\sqrt{\cdot}}{4\pi\alpha -i\sqrt{\cdot}} $
one gets:

\begin{Corollary}
For any $\alpha \in \R^*$, one has
$$
\Gammaq^\alpha = \big(
1, s^\alpha(\cdot), -\tanh(\pi \cdot)+i \cosh(\pi \cdot)^{-1},1 \big)
$$
while $\Gammaq^0 = \big(-\tanh(\pi \cdot)+i \cosh(\pi \cdot)^{-1}, -1, -\tanh(\pi \cdot)+i \cosh(\pi \cdot)^{-1},1\big)$.
In addition, for any $\alpha \in \R$ it follows that
\begin{equation}\label{equ2}
\Wind(\Gammaq^\alpha) =\sharp\sigma_{\rm p}(H^\alpha).
\end{equation}
\end{Corollary}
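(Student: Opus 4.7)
The plan is to compute $\Gammaq^\alpha$ directly from the wave operator formula \eqref{eq_3D} by taking the four boundary limits, and then to evaluate $\Wind(\Gammaq^\alpha)$ piece by piece before comparing with the bound-state count recalled at the start of the subsection. Writing the right-hand side of \eqref{eq_3D} as $1+f(A)\bigl[s^\alpha(H_0)-1\bigr]$ with $f(y):=\12\bigl(1+\tanh(\pi y)-i\cosh(\pi y)^{-1}\bigr)$ and $s^\alpha(x):=\frac{4\pi\alpha+i\sqrt x}{4\pi\alpha-i\sqrt x}$, and passing to $\EE_{(L,A)}$ through the unitary $\UU'$ of the preceding remark so that the quotient map of Section \ref{subsec_example} applies, the four components of $\Gammaq^\alpha$ follow from the elementary limits $f(-\infty)=0$, $f(+\infty)=1$, $s^\alpha(+\infty)=-1$, together with $s^\alpha(0)=1$ for $\alpha\neq 0$ and $s^0(0^+)=-1$. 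On $B_3$ the identity $1-2f(y)=-\tanh(\pi y)+i\cosh(\pi y)^{-1}$ produces the stated expression, and only $B_1$ and $B_2$ distinguish $\alpha=0$ from $\alpha\neq 0$, precisely because $s^\alpha(0)$ is discontinuous at $\alpha=0$.

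Next I would tally $\Wind(\Gammaq^\alpha)$ by traversing $\square$ clockwise from the lower-left corner, as in Section \ref{sec_baby}. For $\alpha\neq 0$ the constant pieces $\Gamma_1,\Gamma_4$ contribute $0$; the curve $\Gamma_3$ lies on $\T$ and sweeps the upper semicircle from $-1$ through $i$ to $+1$ with decreasing argument, so $w_3=+\12$; and $\Gamma_2=s^\alpha$ runs from $+1$ to $-1$, along the upper semicircle when $\alpha>0$ (since $\mathrm{Im}\;\!s^\alpha(\varepsilon)>0$ for small $\varepsilon>0$), which gives $w_2=-\12$, and along the lower semicircle when $\alpha<0$, which gives $w_2=+\12$. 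The totals are $1$ and $0$ respectively, matching the single eigenvalue present when $\alpha<0$ and its absence when $\alpha>0$. For $\alpha=0$ the constant pieces are now $\Gamma_2$ and $\Gamma_4$; the piece $\Gamma_1$ sweeps the upper semicircle from $+1$ to $-1$ with increasing argument ($w_1=-\12$), $\Gamma_3$ still contributes $+\12$, and one obtains $\Wind(\Gammaq^0)=0$, which again matches the absence of eigenvalues.

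One could equivalently deduce the final equality from Theorem \ref{thm_Lev} applied to the $2\times 2$ extension of $\EE_{(L,A)}$ with $W:=\UU W_-^\alpha P_0\UU^{*}\oplus(1-P_0)$, yielding $\Wind(\Gammaq^\alpha)=n\,\sharp\;\!\sigma_{\rm p}(H^\alpha)$ for some universal integer $n$; the value $n=-1$ has been pinned down in Section \ref{subsec_Teddy} for exactly the same algebra, and this fixes the sign once and for all. The only genuinely delicate step in the direct route is the orientation bookkeeping on $B_3$ and $B_4$, together with the sign-of-$\alpha$ analysis that decides which half-circle $\Gamma_2$ sweeps; all remaining computations are routine limits.
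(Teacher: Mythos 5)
Your proposal is correct and follows essentially the paper's own route (as carried out for the baby model and in the reference for this corollary): read off the four boundary values of $1+f(A)[s^\alpha(H_0)-1]$ in the quotient, then tally the clockwise winding contributions piece by piece and compare with the eigenvalue count, which depends only on the sign of $\alpha$. One small caution: the conjugation by $\UU'$ sends $A$ to $-2A$, so a literal transport into $\EE_{(L,A)}$ would permute/reorient the four boundary components; your component formulas are the ones read directly in the $(H_0,A)$ variables (the unitarily equivalent copy of the algebra), which is exactly what the statement asserts and leaves the winding number unchanged.
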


As before, we refer to \cite{KR_PI} for the details of the computations, but stress that some conventions
had been chosen differently.

\subsection{Schr\"odinger operator on $\R$}\label{subsec_Sch1D}

The content of this section is mainly borrowed from \cite{KR1D} but some minor adaptations
with respect to this paper are freely made. We refer to this reference and to the papers mentioned
in it for more information on scattering theory for Schr\"odinger operators on $\R$.

We consider the Hilbert space $\ltwo(\R)$, and the self-adjoint operators $H_0 = -\Delta$ with domain $\H^2(\R)$ and $H=H_0 + V$
with $V$ a multiplication operator by a real function which satisfies the condition
\begin{equation}\label{condV1}
\int_\R (1+|x|)^{\rho}|V(x)|\;\!\d x < \infty,
\end{equation}
for some $\rho\geq 1$.
For such a pair of operators, it is well-known that the conditions required by Assumption \ref{Assum_1}
are satisfied, and thus that the wave operators $W_\pm$ are Fredholm operators and the scattering operator $S$ is unitary.

In order to use the algebraic framework introduced in Section \ref{sec_Kth},
more information on the wave operators are necessary.
First of all, let us recall the following statement which has been proved in \cite{KR1D}.

\begin{Proposition}
Assume that $V$ satisfies \eqref{condV1} with $\rho>5/2$, then the following representation of the wave operator holds:
$$
W_-  = 1 + \12\big(1+\tanh(\pi A)+i\cosh(\pi A)^{-1}(P_{\rm e}-P_{\rm o})\big)
[S-1] + K
$$
with $K$ a compact operator in $\ltwo(\R)$, and $P_{\rm e}, P_{\rm o}$ the projections on the even elements, respectively odd elements, of $\ltwo(\R)$.
\end{Proposition}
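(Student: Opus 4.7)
The overall idea is to derive the formula via the stationary approach to scattering theory, after decomposing $\ltwo(\R)$ into even and odd components to reduce the calculation to two half-line problems of the type already encountered in Section \ref{subsec_point}.

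First I would apply the unitary $\UU$ of \eqref{eq_e_o} to write $\ltwo(\R)\cong \ltwo(\R_+)\oplus\ltwo(\R_+)$; under this identification $H_0$ becomes $H_{\rm N}\oplus H_{\rm D}$ and the generator of dilations on $\R$ becomes $A\oplus A$. Since $V$ has no parity, the perturbed operator $H$ is not block-diagonal, but all the ``algebra data'' ($H_0$ and $A$) are, which brings the problem within the algebraic framework of the pair algebras $\EE_{(H_{\rm N},A)}$ and $\EE_{(H_{\rm D},A)}$ introduced in Section \ref{subsec_example}.

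Second, I would write $W_- - 1$ via the stationary formula
\[
W_- - 1 = -2\pi i\,\delta(H-H_0)\,T_-\,E_{\rm ac}(H_0),\qquad T_-(\lambda) := V - V R(\lambda+i0) V,
\]
and diagonalise $H_0$ using the Fourier sine/cosine transforms on each half-line, followed by the Mellin transform in the spectral variable. The $A$-dependence is then entirely carried by integrals of the form $\int_0^\infty (\lambda/\mu)^{-iA-1/2}(\lambda-\mu+i0)^{-1}\,\d\mu$, which yield universal functions of $A$ by a direct contour computation; the two signs produced by the Fourier sine versus cosine transforms generate exactly the $P_{\rm e}-P_{\rm o}$ combination multiplying $\cosh(\pi A)^{-1}$, while the $\tanh(\pi A)$ piece is the same on both sectors. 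Recognising that $2\pi i T_-(\lambda+i0)$ in the spectral representation of $H_0$ is precisely $1-S(\lambda)$, the claimed leading term falls out and can be pulled back through $\UU$ to the form stated in the proposition.

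Finally, the remainder $K$ collects the off-shell contributions, namely integrals of differences $T_-(\lambda+i0)-T_-(\mu+i0)$ against regular Mellin kernels; these are compact provided $\lambda\mapsto T_-(\lambda+i0)$ is H\"older continuous with sufficient decay in $\lambda$, which is exactly what the hypothesis $\rho>5/2$ in \eqref{condV1} guarantees through standard limiting-absorption estimates. The main obstacle will be the threshold analysis: at $\lambda=0$ a possible zero-energy resonance threatens the continuity of $T_-$, and at $\lambda=+\infty$ one needs enough decay in $T_-(\lambda+i0)$ to compensate for the Mellin-kernel growth. Establishing Hilbert--Schmidt bounds for $K$ uniformly across both regimes, rather than merely bounding the kernels pointwise, is the technical heart of the proof and explains why $\rho>5/2$ is required, as opposed to the $\rho\ge 1$ that merely ensures the existence of $W_\pm$.
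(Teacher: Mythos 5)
Your plan is correct and follows essentially the same route as the proof the paper relies on (the stationary derivation of \cite{KR1D}, mirrored for $\R^3$ in Section \ref{subsec_Wave3}): reduce via the even/odd (equivalently, spectral) representation of $H_0$, extract from the universal kernel the operators $\tanh(\pi A)\pm i\cosh(\pi A)^{-1}$ — exactly the sine/cosine computation of Lemma \ref{lemmaA}, whose sign difference yields the $P_{\rm e}-P_{\rm o}$ factor as in \eqref{eq_Schro1} — identify the on-shell $T$-matrix with $S(\lambda)-1$, and obtain compactness of the remainder from H\"older continuity and decay of the scattering matrix, which is precisely where $\rho>5/2$ is used. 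Beyond the technical estimates you already flag (threshold behaviour at $\lambda=0$ and high-energy decay), there is no gap in the outline.
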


Let us now look at this result in  the even$\;\!$/$\;\!$odd representation introduced in Section \ref{subsec_even_odd}.
More precisely, by using the map $\UU:\ltwo(\R)\to \ltwo(\R_+;\C^2)$ introduced in \eqref{eq_e_o}, one gets
\begin{equation}\label{eq_Schro1}
\UU W_- \UU^* =  1_2
 + \12\left(\begin{smallmatrix}1+\tanh(\pi A)+i\cosh(\pi A)^{-1} & 0 \\ 0 & 1+\tanh(\pi A)-i\cosh(\pi A)^{-1} \end{smallmatrix}\right)
\left[ S\left(\begin{smallmatrix}H_{\rm N} & 0 \\ 0 & H_{\rm D} \end{smallmatrix}\right)-1_2\right] +K'
\end{equation}
with $K'\in \K\big(\ltwo(\R_+;\C^2)\big)$.

\begin{Remark}
As in the previous example, the operator $\UU W_- \UU^*$ does not belong directly to one of the algebras
introduced in Section \ref{subsec_example}, but in a unitarily equivalent one which
can be constructed with the spectral representation of $H_0$.
More precisely, we set $\F_0:\ltwo(\R)\to \ltwo(\R_+;\C^2)$ defined by
$$
[\F_0 f](\lambda) = 2^{-1/2}\lambda^{-1/4}\left(\begin{smallmatrix}[\F f](-\sqrt{\lambda}) \\ [\F f](\sqrt{\lambda}) \end{smallmatrix}\right)
\qquad f\in C_{\rm c}(\R), \ \lambda\in \R_+
$$
with $\F$ the unitary Fourier transform in $\ltwo(\R)$.
As usual, one has $[\F_0 H_0 f](\lambda) = \lambda [\F_0 f](\lambda)$ for any $f\in \H^2(\R)$
and a.e.~$\lambda \in \R_+$. Accordingly, one writes $L\otimes 1_2 = \F_0 H_0 \F_0^*$.
Similarly, the equality $\F_0 A \F_0^* = -2A\otimes 1_2$ holds, where the operator $A$
on the l.h.s.~corresponds to the generator of dilation in $\ltwo(\R)$, while the operator $A$ on the r.h.s.~corresponds
to the generator of dilations in $\ltwo(\R_+)$.
Finally, a short computation leads to the equalities $\F_0 P_{\rm e}\F_0^* = \12\left(\begin{smallmatrix}1 & 1 \\ 1 & 1\end{smallmatrix}\right)$
and $\F_0 P_{\rm o}\F_0^* = \12\left(\begin{smallmatrix}1 & -1 \\ -1 & 1\end{smallmatrix}\right)$.
By summing up these information one gets
\begin{equation}\label{eq_R1}
\F_0 W_- \F_0^* = 1_2
+\12 \left(\begin{smallmatrix} 1-\tanh(2\pi A)\  & \ i\cosh(2\pi A)^{-1} \\  i\cosh(2\pi A)^{-1}\  & \ 1-\tanh(2\pi A)\end{smallmatrix}\right)
\big[S(L)-1_2 \big] + \F_0 K\F_0^*\ .
\end{equation}
Based on this formula, it is clear that $\F_0 W_- \F_0^*$ belongs to $\MM_2(\EE_{(L,A)})$, as it should be.
\end{Remark}

Let us however come back to formula \eqref{eq_Schro1} and compute the image $\Gammaq$ of this operator in the
quotient algebra. One clearly gets
\begin{equation}\label{eq_Gamma1D}
\Gammaq  = \Big(
1_2
+ \12\left(\begin{smallmatrix}1+\tanh(\pi \cdot)+i\cosh(\pi \cdot)^{-1} & 0 \\ 0 & 1+\tanh(\pi \cdot)-i\cosh(\pi \cdot)^{-1} \end{smallmatrix}\right)
[S(0)-1_2], S(\cdot), 1_2,1_2 \Big)
\end{equation}
In addition, let us note that under our condition on $V$,
the map $\R_+\ni \lambda \mapsto S(\lambda)\in \MM_2(\C)$
is norm continuous and has a limit at $0$ and converges to $1_2$ at $+\infty$.
Then, by the algebraic formalism, one would automatically obtain that the winding number of the pointwise determinant of the function $\Gammaq$ is equal
to the number of bound states of $H$. However, let us add some more comments on this model, and in particular on the matrix $S(0)$.
In fact, it is well-known that the matrix $S(0)$ depends on the existence or the absence of a so-called half-bound state for $H$ at $0$.
Before explaining this statement, let us recall a result which has been proved in \cite[Prop.~9]{KR1D}, and which is based
only on the explicit expression \eqref{eq_Gamma1D} and its unitarity.

\begin{Lemma}
Either $\det\big(S(0)\big)=-1$ and then $S(0)=\pm \big(\begin{smallmatrix}-1 & 0 \\ 0 & 1\end{smallmatrix}\big)$,
or $\det\big(S(0)\big)=1$ and then $S(0)=\big(\begin{smallmatrix}a & b \\ -\overline{b} & a\end{smallmatrix}\big)$
with $a\in \R$, $b\in \C$ and $|a|^2+|b|^2=1$. Moreover, the contribution to the winding number of the first term
of $\Gammaq$ is equal to $\pm \12$ in the first case, and to $0$ in the second case.
\end{Lemma}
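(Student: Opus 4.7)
The approach is to exploit the fact that, since $W_-$ is an isometry with $1-W_-W_-^*=E_{\rm p}(H)$ of finite rank, the image $\Gammaq$ is a unitary element of the quotient algebra, so each of its four components is a unitary-valued continuous function; the winding number of the first component is then $\Wind\circ\det$ applied to that function. I focus on the first component of \eqref{eq_Gamma1D} and write it as
$$
M(y):=1_2+D(y)\bigl(S(0)-1_2\bigr),\qquad D(y):=\12\begin{pmatrix} z_+(y) & 0 \\ 0 & z_-(y)\end{pmatrix},
$$
with $z_\pm(y):=1+\tanh(\pi y)\pm i\cosh(\pi y)^{-1}$, and I would work from the pointwise unitarity $M(y)^*M(y)=1_2$ for $y\in[-\infty,+\infty]$.

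The first step is to translate this unitarity into an algebraic constraint on $S(0)$. A direct expansion, using $|z_\pm(y)|^2=2(1+\tanh(\pi y))$ (so $D(y)^*D(y)=\tfrac{1+\tanh(\pi y)}{2}1_2$) and $S(0)^*S(0)=1_2$, reduces $M^*M=1_2$ to
$$
D(y)S(0)+S(0)^*D(y)^*=\tfrac{1+\tanh(\pi y)}{2}\bigl(S(0)+S(0)^*\bigr).
$$
Decomposing $D(y)=\tfrac{1+\tanh(\pi y)}{2}1_2+\tfrac{i\cosh(\pi y)^{-1}}{2}\sigma_3$ with $\sigma_3:=\mathrm{diag}(1,-1)$, this collapses to $\cosh(\pi y)^{-1}\bigl(\sigma_3 S(0)-S(0)^*\sigma_3\bigr)=0$; since $\cosh(\pi y)^{-1}>0$ for every finite $y$, I obtain the key relation $\sigma_3 S(0)\sigma_3=S(0)^*$. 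From this I read off that the diagonal entries of $S(0)$ are real and $s_{12}=-\overline{s_{21}}$, and combining with the row-orthogonality relation $\bar s_{11}s_{12}+\bar s_{21}s_{22}=0$ forces either $s_{21}=0$ (hence $s_{11},s_{22}\in\{\pm1\}$) or $s_{11}=s_{22}$. Since $\det S(0)=\overline{\det S(0)}$ is real of modulus one, sorting the subcases by $\det S(0)\in\{\pm1\}$ yields exactly the two forms in the statement.

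For the winding contribution I compute $\det M(y)$ in each case using the identities $z_+(y)+z_-(y)=z_+(y)z_-(y)=2(1+\tanh(\pi y))$. When $S(0)=\bigl(\begin{smallmatrix} a & b \\ -\bar b & a\end{smallmatrix}\bigr)$ with $a^2+|b|^2=1$, expanding the $2\times 2$ determinant and using $(a-1)^2+|b|^2=2(1-a)$ collapses the result to $\det M(y)\equiv 1$, so the contribution to the winding number vanishes. When $\det S(0)=-1$ and $S(0)=\pm\mathrm{diag}(-1,1)$, the matrix $M(y)$ is diagonal with $\det M(y)=1-z_\pm(y)=-\tanh(\pi y)\mp i\cosh(\pi y)^{-1}$; as $y$ runs over $[-\infty,+\infty]$, this traces a half-circle from $1$ to $-1$ through $\mp i$, contributing $\pm\12$ under the sign convention of Section~\ref{sec_baby}. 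The main bookkeeping care is in the sign conventions: the upward traversal of $B_1$ together with the clockwise-positive rule must be applied consistently in order to recover the signs $\pm\12$ and not merely their magnitude.
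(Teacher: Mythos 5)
Your proof is correct and follows the same route the paper indicates for this lemma (which it imports from \cite{KR1D}): extract the algebraic constraint $\sigma_3 S(0)\sigma_3=S(0)^*$ from the pointwise unitarity of the first component of \eqref{eq_Gamma1D}, combine it with unitarity of $S(0)$ to get the two forms, and then evaluate $\det\Gamma_1(\cdot)$ to find the constant value $1$ (zero contribution) or the arc $-\tanh(\pi\cdot)\mp i\cosh(\pi\cdot)^{-1}$ (contribution $\pm\12$ with the conventions of Section \ref{sec_baby}). The only blemish is cosmetic: the relation $\bar s_{11}s_{12}+\bar s_{21}s_{22}=0$ you invoke is the orthogonality of the two columns, not rows, but the formula and its use are correct.
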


Let us now mention that when $H$ possesses a half-bound state, \ie~a solution of the equation $Hf=0$ with $f$ not in $\ltwo(\R)$
but in a slightly bigger weighted Hilbert space, then $\det\big(S(0)\big)=1$. This case is called
the exceptional case, and thus the first term in $\Gammaq$ does not provide any contribution to the winding number in this case.
On the other hand, when $H$ does not possess such a half-bound state, then
$S(0)=\big(\begin{smallmatrix}-1 & 0 \\ 0 & 1\end{smallmatrix}\big)$.
This case is referred as the generic case, and in this situation the first term in $\Gammaq$ provides a contribution of $\12$
to the winding number.
By taking these information into account, Levinson's theorem can be rewritten for this model as
$$
\Wind(S) =  \left\{\begin{array}{ll}
{\sharp\sigma_{\rm p}(H) -\frac{1}{2}} & \hbox{ in the generic case} \\
{\sharp\sigma_{\rm p}(H)} & \hbox{ in the exceptional case}
\end{array}\right.
$$
Such a result is in accordance with the classical literature on the subject, see \cite{KR1D} and references therein
for the proof of the above statements and for more explanations.
Note finally that one asset of our approach has been to show that the correction $-\frac{1}{2}$ should be located on the other side of the
above equality (with a different sign), and that the rearranged equality is in fact an index theorem.

\subsection{Rank one interaction}\label{subsec_rank1}

In this section, we present another scattering system which has been studied in \cite{RT_rank}.
Our interest in this model comes from the spectrum of $H_0$ which is equal to $\R$.
This fact implies in particular that if $H$ possesses some eigenvalues, then these eigenvalues are
automatically included in the spectrum of $H_0$. In our approach, this fact does not cause any
problem, but some controversies for the original Levinson's theorem with embedded eigenvalues can be found
in the literature, see \cite{Dreyfus}. Note that the following presentation is reduced to the key features only,
all the details can be found in the original paper.

We consider the Hilbert space $\ltwo(\R)$, and let $H_0$ be the operator of multiplication by the variable, \ie~$H_0=X$,
as introduced at the beginning of Section \ref{subsec_example}.
For the perturbation, let $u\in \ltwo(\R)$ and consider the rank one perturbation of $H_0$ defined by
$$
H_u f = H_0 f + \langle u,f\rangle u, \qquad f\in \dom(H_0).
$$
It is well-known that for such a rank one perturbation the wave operators
exist and that the scattering operator is unitary.
Note that for this model, the scattering operator $S \equiv S(X)$ is simply an operator of multiplication
by a function defined on $\R$ and taking values in $\T$.
Let us also stress that for such a general $u$ singular continuous spectrum for $H$ can exist.
In order to ensure the asymptotic completeness, an additional condition on $u$ is necessary.
More precisely, let us introduce this additional assumption:

\begin{Assumption}\label{poupette}
The function $u\in\ltwo(\R)$ is H\"older continuous with exponent $\alpha>1/2$.
\end{Assumption}

It is known that under Assumption \ref{poupette}, the operator $H_u$ has at
most a finite number of eigenvalues of multiplicity one \cite[Sec.~2]{Als80}.
In addition, it is proved in \cite[Lem.~2.2]{RT_rank} that under this assumption
the map
$$
S: \R\ni x\mapsto S(x)\in \T
$$
is continuous and satisfies $S(\pm\infty)=1$.

In order to state the main result about the wave operators for this model, let us use again the
even$\;\!$/$\;\!$odd representation of $\ltwo(\R)$ introduced in Section \ref{subsec_even_odd}.
Let us also recall that we set $m_{\rm e}$, $m_{\rm o}$ for the even part and the odd part of any function $m$ defined on $\R$.

\begin{Theorem}[Theorem 1.2 of \cite{RT_rank}]\label{thm_rank}
Let $u$ satisfy Assumption \ref{poupette}. Then, one has
\begin{equation}\label{nice}
\UU W_- \UU^*=
\left(\begin{smallmatrix}
1 & 0\\
0 & 1
\end{smallmatrix}\right)
+ \12\left(\begin{smallmatrix}
1 & -\tanh(\pi A)+i\cosh(\pi A)^{-1}\\
-\tanh(\pi A)-i\cosh(\pi A)^{-1} & 1
\end{smallmatrix}\right)
\left(\begin{smallmatrix}
S_{\rm e}(L)-1 & S_{\rm o}(L)\\
S_{\rm o}(L) & S_{\rm e}(L)-1
\end{smallmatrix}\right) + K,
\end{equation}
where $K$ is a compact operator in $\ltwo(\R_+;\C^2)$.
\end{Theorem}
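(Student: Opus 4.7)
My plan is to proceed by stationary scattering theory, exploiting the explicit form of $H_0=X$ as a multiplication operator and the rank-one nature of $V = u\langle u,\cdot\rangle$. The strategy parallels the derivations for the baby model and for the $1$D Schr\"odinger operator on $\R$ already sketched in Sections \ref{sec_baby} and \ref{subsec_Sch1D}, with the key simplification that no Fourier transform is needed to diagonalise $H_0$. Starting from Cook's method with adiabatic regularisation and applying the Aronszajn--Krein identity for the rank-one resolvent, I would express, on a suitable dense subset, the difference $W_- - 1$ as an explicit integral operator with kernel proportional to
$$
k(x,\mu) \;=\; -\,\frac{u(x)\,\overline{u(\mu)}}{\bigl(1 + F(x+i0)\bigr)(x - \mu - i0)},
$$
where $F(z) := \langle u, (H_0 - z)^{-1} u\rangle$ is the Borel transform of $|u|^2$. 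Under Assumption \ref{poupette}, the boundary values $F(\cdot+i0)$ exist and are H\"older continuous, and $1 + F(\cdot + i0)$ is non-vanishing away from the finite discrete point spectrum of $H_u$.

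Next, I would decompose $(x-\mu-i0)^{-1} = \Pv\,(x-\mu)^{-1} + i\pi\,\delta(x-\mu)$. The delta-function part collapses the kernel to an on-shell factor, producing multiplication by the scalar scattering amplitude $S(\lambda) - 1 = -2\pi i|u(\lambda)|^2/(1 + F(\lambda+i0))$; after conjugation by $\UU$, this yields precisely the matrix $\bigl(\begin{smallmatrix} S_{\rm e}(L) - 1 & S_{\rm o}(L) \\ S_{\rm o}(L) & S_{\rm e}(L) - 1\end{smallmatrix}\bigr)$, because the parity decomposition merely reshuffles the two values $S(\pm\lambda)$. The remaining principal-value contribution, in the even/odd representation on $\ltwo(\R_+;\C^2)$, reduces to the two canonical operators $\Pv\!\int_0^\infty g(\mu)/(\lambda-\mu)\,d\mu$ and $\int_0^\infty g(\mu)/(\lambda+\mu)\,d\mu$ on $\ltwo(\R_+)$. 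By a direct Mellin-representation computation (in which $A$ is diagonal), these operators are unitarily equivalent to multiplication by $-\tanh(\pi A)$ and $\cosh(\pi A)^{-1}$, up to universal constants. Careful parity bookkeeping then fixes the asymmetric placement of $i\cosh(\pi A)^{-1}$ above and below the diagonal in the prefactor of \eqref{nice}.

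Finally, the difference between the exact kernel $k(x,\mu)$ and the explicit on-shell plus principal-value decomposition used above contributes a residual operator, which I would show is Hilbert--Schmidt (hence compact) by exploiting the H\"older regularity of $u$ with exponent $\alpha>1/2$ to control the smoothness of the boundary-value function $\lambda\mapsto u(\lambda)/(1+F(\lambda+i0))$. All such residual terms are absorbed into $K\in \K\big(\ltwo(\R_+;\C^2)\big)$. The main obstacle is precisely this last compactness step: one must verify that the H\"older threshold $\alpha>1/2$ is sharp enough for the residual singular-integral kernel to be Hilbert--Schmidt, and that this same regularity is what guarantees the continuity of $\R\ni\lambda\mapsto S(\lambda)\in\T$ together with its limits at $\pm\infty$ needed for the $C^*$-algebraic framework of Section \ref{sec_Kth}. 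A secondary subtlety is the careful sign tracking through $\UU$, which determines whether the off-diagonal entries of the prefactor in \eqref{nice} carry a $+$ or a $-$ in front of $i\cosh(\pi A)^{-1}$.
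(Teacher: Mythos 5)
Your proposal is correct in outline and follows essentially the same route as the proof this survey defers to, namely the stationary-theory argument of \cite{RT_rank} (the same scheme as in Section \ref{subsec_Wave3} and the Appendix here): the rank-one resolvent identity gives the kernel with denominator $\big(1+F(x+i0)\big)(x-\mu-i0)$, the decomposition into the on-shell $\delta$-part (yielding $S(\lambda)-1=-2\pi i|u(\lambda)|^2/(1+F(\lambda+i0))$) and a principal-value part, the Mellin identification of the dilation-homogeneous kernels $\mathrm{Pv}\,(\lambda-\mu)^{-1}$ and $(\lambda+\mu)^{-1}$ with $\tanh(\pi A)$ and $\cosh(\pi A)^{-1}$ in the even/odd representation, and a Hilbert--Schmidt remainder controlled by the H\"older exponent $\alpha>1/2$. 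The only caveats are bookkeeping ones you already flag (signs through $\UU$, and which boundary value of $F$ occurs for $W_-$), so no genuine gap.
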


Let us immediately mention that a similar formula holds for $W_+$ and that this formula is
exhibited in the reference paper. In addition, it follows from \eqref{nice} that $W_-\in \MM_2\big(\EE_{(L,A)}\big)$,
and that the algebraic framework introduced in Section \ref{sec_Kth} can be applied straightforwardly.
Without difficulty, the formalism leads us directly to the following consequence of Theorem \ref{thm_rank}:

\begin{Corollary}\label{surLev}
Let $u$ satisfy Assumption \ref{poupette}. Then the following
equality holds:
$$
\Wind(S)=\sharp\sigma_{\rm p}(H_u).
$$
\end{Corollary}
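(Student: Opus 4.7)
The plan is to apply the abstract topological Levinson's theorem (Theorem~\ref{thm_Lev}) to the isometry $\widetilde W := \UU W_- \UU^* \in \B\bigl(\ltwo(\R_+;\C^2)\bigr)$, viewed as a partial isometry in $\MM_2(\EE_{(L,A)})$ via the setup of Section~\ref{subsec_example}. Since $W_-$ is an isometry with $1 - W_- W_-^* = E_{\rm p}(H_u)$ by Assumption~\ref{Assum_1} and $\UU$ is unitary, one obtains $1_2 - \widetilde W^* \widetilde W = 0$ and $1_2 - \widetilde W \widetilde W^* = \UU E_{\rm p}(H_u) \UU^*$, a finite-rank projection with trace $\sharp \sigma_{\rm p}(H_u)$. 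Once I verify $\widetilde W \in \MM_2(\EE_{(L,A)})$ and compute the quotient image $\Gammaq := q(\widetilde W)$, Theorem~\ref{thm_Lev} combined with the factor $n = -1$ determined in Section~\ref{subsec_Teddy} will give $\Wind(\Gammaq) = (-1)\cdot(-\sharp\sigma_{\rm p}(H_u)) = \sharp \sigma_{\rm p}(H_u)$, so the corollary reduces to the identification $\Wind(\Gammaq) = \Wind(S)$.

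Membership of $\widetilde W$ in $\MM_2(\EE_{(L,A)})$ can be read off directly from formula~\eqref{nice}: up to the compact term $K$, each matrix entry is a polynomial in operators of the form $\eta(A)$ with $\eta \in C(\bR)$ and $\psi(L)$ with $\psi \in \{1, S_{\rm e}, S_{\rm o}\}$. Continuity of $S : \R \to \T$ together with $S(\pm\infty) = 1$ (ensured by Assumption~\ref{poupette} and \cite[Lem.~2.2]{RT_rank}) guarantees that $S_{\rm e}|_{\R_+}$ and $S_{\rm o}|_{\R_+}$ extend continuously to $\bRp$ and hence lie in $C(\bRp)$, so the algebraic hypothesis applies.

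The main step is then to compute $\Gammaq$ on each of the four boundary components $B_1, B_2, B_3, B_4$ of $\square$ and to extract its determinant, since the matrix winding number is $\Wind \circ \det$ by Example~\ref{ex_K_groups}. On $B_1$ (where $L = 0$), $S_{\rm e}(0) = S(0)$ and $S_{\rm o}(0) = 0$; a direct matrix computation using the identity $\tanh^2(\pi A) + \cosh^{-2}(\pi A) \equiv 1$ yields the nontrivial simplification $\det \Gammaq|_{B_1} = S(0)$, which is constant in $A$. On $B_3$ (where $L = +\infty$) the limits $S_{\rm e}(L) \to 1$ and $S_{\rm o}(L) \to 0$ force $\Gammaq|_{B_3} = 1_2$. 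On $B_2$ (where $A = +\infty$) the matrix prefactor collapses to $\bigl(\begin{smallmatrix} 1 & -1 \\ -1 & 1 \end{smallmatrix}\bigr)$; using $S_{\rm e}(L) - S_{\rm o}(L) = S(-L)$ I would find $\det \Gammaq|_{B_2}(L) = S(-L)$. A symmetric computation on $B_4$ (where $A = -\infty$), using $S_{\rm e}(L) + S_{\rm o}(L) = S(L)$, gives $\det \Gammaq|_{B_4}(L) = S(L)$.

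With the clockwise orientation of $\square$ fixed in Section~\ref{sec_baby}, $B_1$ and $B_3$ contribute $0$ to $\Wind(\Gammaq)$. On $B_2$, traversed $L : 0 \to +\infty$, the image $L \mapsto S(-L)$ traces $S$ along $(-\infty, 0]$ from $S(0)$ to $S(-\infty) = 1$; on $B_4$, traversed $L : +\infty \to 0$, the image $L \mapsto S(L)$ traces $S$ along $[0, +\infty)$ from $S(+\infty) = 1$ back to $S(0)$. Concatenating these two paths gives exactly one traversal of $S : \R \to \T$, viewed as a loop via $S(\pm\infty) = 1$, which establishes $\Wind(\Gammaq) = \Wind(S)$ and completes the proof. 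I expect the main difficulty to be the careful sign/orientation bookkeeping in this concatenation, together with the somewhat surprising cancellation on $B_1$ that turns the $A$-dependent expression into the constant $S(0)$.
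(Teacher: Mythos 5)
Your proposal is correct and takes essentially the same route as the paper: the paper deduces the corollary precisely by observing that formula \eqref{nice} places $\UU W_-\UU^*$ in $\MM_2\big(\EE_{(L,A)}\big)$ and then invoking the algebraic framework of Section \ref{sec_Kth} (with the normalization $n=-1$ fixed on the baby model), delegating the symbol and winding computations to \cite{RT_rank}. Your explicit boundary computations, giving $\det\Gammaq$ equal to $S(0)$, $S(-L)$, $1$ and $S(L)$ on $B_1,B_2,B_3,B_4$ respectively, are accurate, and the residual orientation/sign bookkeeping you flag is exactly the convention-dependence the paper itself acknowledges when comparing with \cite{RT_rank}.
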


Let us stress that another convention had been taken in \cite{RT_rank} for the computation of the winding number,
leading to a different sign in the previous equality.
Note also that such a result was already known for more general perturbations but under
stronger regularity conditions \cite{Buslaev,Drey78}. We stress that the above result does require neither the
differentiability of the scattering matrix nor the differentiability of $u$.
It is also interesting that for this model, only the winding number of the scattering operator
contributes to the left hand side of the equality.

\subsection{Other examples}\label{subsec_Dirac}

In this section, we simply mention two additional models on which some investigations have been performed
in relation with our topological approach of Levinson's theorem.

In reference \cite{IR}, the so-called Friedrichs-Faddeev model has been studied. In
this model, the operator $H_0$ corresponds to the multiplication by the variable but only on an interval $[a,b]$,
and not on $\R$. The perturbation of $H_0$ is defined in terms of an integral operator which satisfies
some H\"older continuity conditions, and some additional conditions on the restriction of the kernel at the values $a$ and $b$ are imposed.
Explicit expressions for the wave operators for this model have been provided in \cite{IR},
but the use of these formulas for deducing a topological Levinson's theorem has
not been performed yet. Note that one of the interests in this model is that the spectrum of $H_0$ is
equal to $[a,b]$, which is different from $\R_+$ or $\R$ which appear in the models developed above.

In reference \cite{PR_Dirac}, the spectral and scattering theory for $1$-dimensional Dirac operators with mass $m>0$
and with a zero-range interaction are fully investigated. In fact, these operators are described by a four real parameters family
of self-adjoint extensions of a symmetric operator.
Explicit expressions for the wave operators and for the scattering
operator are provided. Let us note that these new formulas take place in a representation which links, in a suitable way,
the energies $-\infty$ and $+\infty$, and which emphasizes the role of the thresholds $\pm m$.
Based on these formulas, a topological version of Levinson's theorem is deduced, with the threshold effects at $\pm m$
automatically taken into account. Let us also emphasize that in our investigations on Levinson's theorem,
this model was the first one for which the spectrum of $H_0$ consisted into two disjoint parts, namely $(-\infty,-m]\cup [+m,\infty)$.
It was not clear at the very beginning what could be the suitable algebra for nesting the wave operators and how
the algebraic construction could then be used. The results of these investigations are
thoroughly presented in \cite{PR_Dirac}, and it is expected that the same results hold for less singular perturbations
of $H_0$. Finally, a surprising feature of this model is that the contribution to the winding number from
the scattering matrix is computed from $-m$ to $-\infty$, and then from $+m$ to $+\infty$.
In addition, contributions due to thresholds effects can appear at $-m$ and/or at $+m$.

\section{Schr\"odinger on $\R^3$ and regularized Levinson's theorem}\label{Sec_3D}
\setcounter{equation}{0}

In this section, we illustrate our approach on the example of a Schr\"odinger operator on $\R^3$.
In the first part, we explain with some details how new formulas for the wave operators
can be obtained for this model. In a second part, the algebraic framework is slightly
enlarged in order to deal with a spectrum with infinite multiplicity.
A method of regularization for the computation of the winding number is also presented.

\subsection{New expressions for the wave operators}\label{subsec_Wave3}

In this section, we derive explicit formulas for the wave operators
based on the stationary approach of scattering theory.
Let us immediately stress that the following presentation is deeply inspired from the paper \cite{RT_R3} to which
we refer for the proofs and for more details.
Thus, our aim is to justify the following statement:

\begin{Theorem}\label{Java}
Let $V \in \linf(\R^3)$ be real and satisfy $|V(x)|\le{\rm Const.}\;\!(1+|x|)^{-\rho}$ with $\rho>7$ for
almost every $x\in\R^3$. Then,
the wave operators $W_\pm$ for the pair of operators $(-\Delta+V,-\Delta)$ exist and the following equalities hold in $\B\big(\ltwo(\R^3)\big)$~:
\begin{equation*}
W_-=1+\12\big(1+\tanh(\pi A)-i\cosh(\pi A)^{-1}\big)[S-1]+K
\end{equation*}
and
\begin{equation*}
W_+=1+\12\big(1-\tanh(\pi A)+i\cosh(\pi A)^{-1}\big)[S^*-1]+K',
\end{equation*}
with $A$ is the generator of dilations in $\R^3$, $S$ the scattering operator, and $K,K'\in\K\big(\ltwo(\R^3)\big)$.
\end{Theorem}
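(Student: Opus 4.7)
The plan is to use the stationary approach to scattering theory, recast everything in the spectral representation of $H_0 = -\Delta$, and isolate the Mellin multiplier in $A$ that must appear. Existence of $W_\pm$ by Cook's method is standard under $\rho > 1$; the strong decay $\rho > 7$ is used only to secure the limiting absorption principle with enough H\"older regularity of the boundary values of the resolvent in the energy variable, so that the remainder terms collected below become compact.

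First, fix a spectral decomposition $\F_0 : \ltwo(\R^3) \to \ltwo(\R_+; \ltwo(\S^2))$ diagonalising $H_0$, so that $\F_0 H_0 \F_0^* = L$. A direct calculation comparing the dilation groups on $\R^3$ and on $\R_+$ shows that under $\F_0$ the generator $A$ of dilations on $\R^3$ is unitarily equivalent to $-2A \otimes 1_{\ltwo(\S^2)}$, with $A$ now the generator of dilations on $\ltwo(\R_+)$. Consequently any bounded function of $A$ on $\R^3$ pulls back to a Mellin multiplier acting fibre-wise, which is the formal reason such a function must appear in the final formula.

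The starting point is the Kato-Birman stationary formula for $W_- - 1$: in the spectral representation it acts on a dense set of vectors through an integral kernel built from the boundary values of the T-operator $T(\lambda+i0) = V - V R(\lambda + i0) V$ divided by the Cauchy factor $\mu - \lambda - i0$. On the energy shell the T-operator reproduces the scattering matrix through $S(\lambda) - 1 = -2\pi i\, \F_0(\lambda) T(\lambda+i0) \F_0(\lambda)^*$. Exploiting the H\"older regularity of $\lambda \mapsto T(\lambda+i0)$ provided by the limiting absorption principle, one replaces the off-diagonal T-kernel by its diagonal value $S(\mu) - 1$ up to a compact remainder. The surviving contribution, after passing to the Mellin picture $\V : \ltwo(\R_+) \to \ltwo(\R)$ that conjugates $A$ to multiplication by $s$, collapses to the scalar integral
\begin{equation*}
\frac{1}{2\pi i} \int_0^\infty \frac{\mu^{-is-1/2}}{\mu - 1 - i0}\, \d\mu \,=\, \tfrac{1}{2}\bigl(1 + \tanh(\pi s) - i \cosh(\pi s)^{-1}\bigr),
\end{equation*}
evaluated by residues on a keyhole contour around $\R_+$. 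Reverting to the operator calculus on $\ltwo(\R^3)$ yields exactly the claimed prefactor applied to $[S - 1]$.

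The formula for $W_+$ then follows from the identity $W_+ = W_- S^*$: substituting and using $1 - \tfrac{1}{2}\bigl(1 + \tanh(\pi A) - i\cosh(\pi A)^{-1}\bigr) = \tfrac{1}{2}\bigl(1 - \tanh(\pi A) + i\cosh(\pi A)^{-1}\bigr)$, one obtains the stated expression with $K' := K S^*$ still compact. The main obstacle throughout is the compactness of $K$: one must show that each error produced by the H\"older approximation of the off-diagonal T-kernel by the on-shell scattering matrix, by energy cutoffs near $\mu = 0$ and $\mu = +\infty$, and by commuting $\F_0$ through the singular Cauchy factor, defines a compact operator on $\ltwo(\R^3)$. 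This is precisely where the restriction $\rho > 7$ enters: it provides enough H\"older exponents and weighted Sobolev mapping properties of the boundary values of the resolvent so that the error terms factor through compact multiplication operators, at which point Rellich-type compactness closes the argument.
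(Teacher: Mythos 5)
Your strategy is the same as the paper's (stationary formulas, passage to the spectral representation of $H_0$, identification of the Cauchy-type kernel with a Mellin multiplier in the dilation generator, on-shell identification of the $T$-operator with $S(\lambda)-1$, compact remainder, and $W_+=W_-S^*$), but the step that carries the whole theorem rests on an identity which is false as stated. The integral you invoke evaluates to
\begin{equation*}
\frac{1}{2\pi i}\int_0^\infty \frac{\mu^{-is-1/2}}{\mu-1-i0}\,\d\mu
=\frac12\big(1-\tanh(\pi s)\big),
\end{equation*}
as follows from $\Pv\int_0^\infty\frac{\mu^{a-1}}{\mu-1}\,\d\mu=-\pi\cot(\pi a)$ with $a=\frac12-is$ together with the Sokhotski--Plemelj delta contribution; at $s=0$ the left-hand side equals $\frac12$, which is real, whereas your claimed value $\frac12\big(1+\tanh(\pi s)-i\cosh(\pi s)^{-1}\big)$ equals $\frac12(1-i)$ there. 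A Cauchy kernel weighted by a purely imaginary Mellin power can only produce $\frac12(1\mp\tanh)$; the term $-i\cosh(\pi A)^{-1}$ comes from the real power offset in the kernel that actually arises, namely $\frac{\lambda^{-1/4}\mu^{1/4}}{\mu-\lambda+i0}$, the weights $\lambda^{\mp1/4}$ being the spectral-density factor $(\lambda/4)^{1/4}$ built into $\F_0$ (and into the auxiliary operators $M$ and $B$ of the paper). The Mellin symbol of that kernel is $\frac12\big(1-\tanh(2\pi s)-i\cosh(2\pi s)^{-1}\big)$ for the half-line dilation generator, and it becomes the stated prefactor in $\pi A$ only after the identification $\F_0 A\F_0^*=-2A\otimes 1_\HS$ which you mention but do not exploit quantitatively. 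Dropping the $\cosh^{-1}$ piece is not harmless: the discrepancy $i\cosh(\pi A)^{-1}[S-1]$ is compact if and only if $S(0)=1$, so in the presence of a zero-energy resonance your computation yields a formula that is actually wrong, and this threshold term is precisely the one responsible for the $\tfrac12$-corrections later on.

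A secondary point: in the paper the compact remainder is not produced by a H\"older-regularity replacement of the off-shell $T$-kernel by its on-shell value. One first proves the exact, remainder-free identity $\F_0(W_--1)\F_0^*=-2\pi i\,M\big\{\frac12\big(1-\tanh(2\pi A)-i\cosh(2\pi A)^{-1}\big)\otimes 1\big\}B$ (Theorem \ref{BigMama}), where the analytic work (and the hypothesis $\rho>7$, through weighted-Sobolev mapping properties of $T(\lambda+i0)$ and $\F_0(\lambda)$ uniformly down to $\lambda=0$ without excluding resonances) goes into the boundedness of $M$ and $B$ and into justifying the interchange of limit and integral; the single compact error then appears when commuting the function of $A$ past $M$, via Cordes' lemma that $[f_1(X),f_2(D)]$ is compact for $f_1,f_2\in C([-\infty,+\infty])$, after which $-2\pi iMB=\int_{\R_+}^\oplus\big(S(\lambda)-1\big)\d\lambda$ is identified. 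Your "diagonal replacement plus Rellich" sketch would need to be substantiated to reach the same conclusion; as written, both the source of the compactness and, more importantly, the explicit function of $A$ are not established.
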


In order to prove this statement, let us be more precise about the framework.
We first introduce the Hilbert space $\H:=\ltwo(\R^3)$ and the self-adjoint operator $H_0=-\Delta$
with domain the usual Sobolev space $\H^2 \equiv \H^2(\R^3)$.
We also set $\Hrond:= \ltwo(\R_+;\HS)$ with $\HS:= \ltwo\big(\S^2\big)$, and $\SS(\R^3)$
for the Schwartz space on $\R^3$.
The spectral representation for $H_0$ is constructed as follows:
we define $\F_0:\ltwo(\R^3)\to \ltwo(\R_+;\HS)$ by
\begin{equation*}\label{def_F_0}
\big([\F_0 f](\lambda)\big)(\omega)
=\textstyle\big(\frac\lambda4\big)^{1/4}[\F f](\sqrt\lambda\;\!\omega)
=\textstyle\big(\frac\lambda4\big)^{1/4}
\big[\gamma(\sqrt\lambda)\;\!\F f\big](\omega),
\quad f\in \SS(\R^3),~\lambda\in\R_+,~\omega\in\S^2,
\end{equation*}
with $\gamma(\lambda):\SS(\R^3)\to\HS$ the trace operator given by
$\big[\gamma(\lambda)f\big](\omega):=f(\lambda\;\!\omega)$, and $\F$ the unitary Fourier transform on $\R^3$.
The map $\F_0$ is unitary and satisfies for $f\in \H^2$ and a.e.~$\lambda \in \R_+$
$$
[\F_0 H_0 f](\lambda)= \lambda [\F_0 f](\lambda)\equiv [L\F_0 f](\lambda),
$$
where $L$ denotes the multiplication operator in $\Hrond$ by the variable in $\R_+$.

Let us now introduce the operator $H:=H_0 + V$ with a potential $V\in\linf(\R^3;\R)$ satisfying
for some $\rho>1$ the condition
\begin{equation}\label{condV}
|V(x)|\le{\rm Const.}\;\!\langle x\rangle^{-\rho},\quad\hbox{a.e. }x\in\R^3,
\end{equation}
with $\langle x\rangle:=\sqrt{1+x^2}$. Since $V$ is bounded, $H$ is self-adjoint with
domain $\dom(H)=\H^2$. Also, it is well-known \cite[Thm.~12.1]{Pea88} that the
wave operators $W_\pm$ exist and are asymptotically complete.
In stationary scattering theory one defines the wave operators in terms
of suitable limits of the resolvents of $H_0$ and $H$ on the real axis. We shall
mainly use this second approach, noting that for this model both definitions for the
wave operators do coincide (see \cite[Sec.~5.3]{Yaf}).

Let us thus recall from \cite[Eq.~2.7.5]{Yaf} that for suitable $f,g\in\H$ the
stationary expressions for the wave operators are given by\footnote{In this section, the various scalar products
are indexed by the corresponding Hilbert spaces.}
$$
\big\langle W_\pm f,g\big\rangle_\H
=\int_\R\d\lambda\,\lim_{\varepsilon\searrow0}\frac\varepsilon\pi
\big\langle R_0(\lambda\pm i\varepsilon)f,
R(\lambda\pm i\varepsilon)g\big\rangle_\H\;\!,
$$
where $R_0(z):=(H_0-z)^{-1}$ and $R(z):=(H-z)^{-1}$, $z\in\C\setminus\R$, are the
resolvents of the operators $H_0$ and $H$. We also recall from \cite[Sec.~1.4]{Yaf}
that the limit
$
\lim_{\varepsilon\searrow 0}
\big\langle\delta_\varepsilon(H_0-\lambda)f,g\big\rangle_\H
$
with
$
\delta_\varepsilon(H_0-\lambda)
:=\frac\varepsilon\pi R_0(\lambda\mp i\varepsilon)\;\!R_0(\lambda\pm i\varepsilon)
$
exists for a.e. $\lambda\in\R$ and that
$$
\big\langle f,g\big\rangle_\H
=\int_\R\d\lambda\,\lim_{\varepsilon\searrow0}
\big\langle\delta_\varepsilon(H_0-\lambda)f,g\big\rangle_\H\;\!.
$$
Thus, taking into account the second resolvent equation, one infers that
\begin{equation*}
\big\langle(W_\pm-1)f,g\big\rangle_\H
=-\int_\R\d\lambda\,\lim_{\varepsilon\searrow0}\big\langle
\delta_\varepsilon(H_0-\lambda)f,\big(1+VR_0(\lambda\pm i\varepsilon)\big)^{-1}\;\!
VR_0(\lambda\pm i\varepsilon)g\big\rangle_\H\;\!.
\end{equation*}

We now derive expressions for the wave operators in the spectral representation
of $H_0$; that is, for the operators $\F_0(W_\pm-1)\F_0^*$. So, let
$\varphi,\psi$ be suitable elements of $\Hrond$ (precise conditions will be specified
in Theorem \ref{BigMama} below), then one obtains that
\begin{align*}
&\big\langle\F_0(W_\pm-1)\F_0^*\varphi,\psi\big\rangle_{\!\Hrond}\\
&=-\int_\R\d\lambda\,\lim_{\varepsilon\searrow0}
\big\langle V\big(1+R_0(\lambda\mp i\varepsilon)V\big)^{-1}\F_0^*\;\!
\delta_\varepsilon(L-\lambda)\varphi,
\F_0^*\;\!(L-\lambda\mp i\varepsilon)^{-1}\psi\big\rangle_\H\\
&=-\int_\R\d\lambda\,\lim_{\varepsilon\searrow0}\int_0^\infty\d\mu\,
\big\langle\big\{\F_0V\big(1+R_0(\lambda\mp i\varepsilon)V\big)^{-1}\F_0^*\;\!
\delta_\varepsilon(L-\lambda)\varphi\big\}(\mu),(\mu-\lambda\mp i\varepsilon)^{-1}
\psi(\mu)\big\rangle_\HS.
\end{align*}
Using the short hand notation $T(z):=V\big(1+R_0(z)V\big)^{-1}$, $z\in\C\setminus\R$,
one thus gets the equality
\begin{align}
&\big\langle\;\!\F_0(W_\pm-1)\F_0^*\varphi,\psi\big\rangle_{\!\Hrond}\nonumber\\
&=-\int_\R\d\lambda\,\lim_{\varepsilon\searrow0}\int_0^\infty\d\mu\,
\big\langle\big\{\F_0\;\!T(\lambda\mp i\varepsilon)\;\!\F_0^*\;\!
\delta_\varepsilon(L-\lambda)\varphi\big\}(\mu),
(\mu-\lambda\mp i\varepsilon)^{-1}\psi(\mu)\big\rangle_\HS.\label{start}
\end{align}

This formula will be our starting point for computing new expressions for the wave operators.
The next step is to exchange the integral over $\mu$ and the limit
$\varepsilon\searrow0$. To do it properly, we need a
series of preparatory lemmas. First of all, we recall that for $\lambda>0$ the trace
operator $\gamma(\lambda)$ extends to an element of $\B(\H^s_t,\HS)$ for each $s>1/2$
and $t\in\R$, where $\H^s_t = \H^s_t(\R^3)$ denotes the weighted Sobolev space over $\R^3$
with index $s\in \R$ and with the index $t\in \R$ associated with the weight\footnote{We also use the convention
$\H^s = \H^s_0$ and $\H_t = \H^0_t$.}.
In addition, the map $\R_+\ni\lambda\mapsto\gamma(\lambda)\in\B(\H^s_t,\HS)$
is continuous, see for example \cite[Sec.~3]{Jen81}. As a consequence, the operator
$\F_0(\lambda):\SS(\R^3)\to\HS$ given by $\F_0(\lambda)f:=(\F_0f)(\lambda)$ extends to an
element of $\B(\H^s_t,\HS)$ for each $s\in\R$ and $t>1/2$, and the map
$\R_+\ni\lambda\mapsto\F_0(\lambda)\in\B(\H^s_t,\HS)$ is continuous.

We recall now three technical lemmas which have been proved in \cite{RT_R3} and which
strengthen some standard results.

\begin{Lemma}\label{lem1}
Let $s\ge 0$ and $t>3/2$. Then, the functions
$$
(0,\infty)\ni\lambda\mapsto\lambda^{\pm1/4}\F_0(\lambda)\in\B(\H^s_t,\HS)
$$
are continuous and bounded.
\end{Lemma}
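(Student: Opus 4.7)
The plan is to reduce the question to sphere-trace estimates on Fourier transforms, and then to control the behavior separately near $\lambda=0$ and near $\lambda=\infty$. Writing $\F_0(\lambda)=(\lambda/4)^{1/4}\gamma(\sqrt\lambda)\F$, one has
$$
\lambda^{-1/4}\F_0(\lambda)=4^{-1/4}\;\!\gamma(\sqrt\lambda)\;\!\F\qquad\text{and}\qquad \lambda^{1/4}\F_0(\lambda)=4^{-1/4}\;\!\lambda^{1/2}\;\!\gamma(\sqrt\lambda)\;\!\F.
$$
Since the Fourier transform is a unitary from $\H^s_t$ onto $\H^t_s$ (interchanging smoothness and weight indices), it suffices to prove that the two maps
$$
R\mapsto \gamma(R)\qquad\text{and}\qquad R\mapsto R\;\!\gamma(R),\qquad R=\sqrt\lambda,
$$
are continuous and bounded from $(0,\infty)$ to $\B(\H^t_s,\HS)$, using $g:=\F f\in\H^t_s$ with $t>3/2$ and $s\ge 0$.

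Continuity is immediate from the standard fact, recalled just before the statement of the lemma, that $(0,\infty)\ni R\mapsto\gamma(R)\in\B(\H^t_s,\HS)$ is continuous for $t>1/2$; multiplication by the continuous scalar $R$ (respectively $R^2$ in $\lambda$) preserves this continuity. This already implies boundedness on every compact subset of $(0,\infty)$, so the only real task is to control the two endpoints.

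Near $R=0$ I would use the Sobolev embedding $\H^t(\R^3)\hookrightarrow C_b(\R^3)$, valid precisely because $t>3/2$: this gives
$$
\|\gamma(R)g\|_{\HS}^2 =\int_{\S^2}|g(R\omega)|^2\d\omega\le 4\pi\;\!\|g\|_\infty^2\le C\;\!\|g\|_{\H^t_s}^2,
$$
uniformly in $R\in(0,1]$, and therefore also $\|R\;\!\gamma(R)g\|_{\HS}\le C\;\!R\;\!\|g\|_{\H^t_s}\to 0$. Near $R=\infty$ a different argument is needed: I would apply the standard trace theorem on the thin annulus $A_R=\{R-1\le|x|\le R+1\}$ to obtain
$$
\|\gamma(R)g\|_{\HS}^2 =R^{-2}\int_{|x|=R}|g|^2\d\sigma \le C\;\!R^{-2}\;\!\|g\|_{H^{1/2+\varepsilon}(A_R)}^2,
$$
for some small $\varepsilon>0$ with $1/2+\varepsilon<t$, and then exploit the weight via $\|g\|_{H^{1/2+\varepsilon}(A_R)}\le C\;\!\langle R\rangle^{-s}\|g\|_{\H^{1/2+\varepsilon}_s(\R^3)}\le C\;\!\langle R\rangle^{-s}\|g\|_{\H^t_s}$. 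Combining these two estimates gives $\|\gamma(R)g\|_{\HS}\le C\;\!R^{-1}\langle R\rangle^{-s}\|g\|_{\H^t_s}$ for $R\ge 1$, so that both $\|\gamma(R)g\|_{\HS}$ and $\|R\;\!\gamma(R)g\|_{\HS}$ stay bounded as $R\to\infty$, using $s\ge 0$.

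The step I expect to be the main obstacle is the large-$R$ estimate in the $\lambda^{+1/4}$ case: the trivial bound coming from Sobolev embedding would only give $\|\gamma(R)g\|_{\HS}=O(1)$, hence $\|R\;\!\gamma(R)g\|_{\HS}=O(R)$, which is insufficient. One truly needs the extra $R^{-1}$ coming from the fact that the surface area of the sphere of radius $R$ grows like $R^2$ while the $L^2$-mass on that sphere, controlled by the trace theorem on a fixed-width annulus, remains uniformly bounded. This is exactly where the quantitative sphere-trace estimate, rather than a mere Sobolev embedding, is indispensable, and it is the technical point that strengthens the ``standard result'' alluded to in the statement.
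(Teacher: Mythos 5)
Your argument is correct in substance, and it is worth noting that the paper itself offers no proof of this lemma: it is recalled from \cite{RT_R3}, so there is no in-paper argument to compare with. Your route --- writing $\lambda^{-1/4}\F_0(\lambda)=4^{-1/4}\gamma(\sqrt\lambda)\F$ and $\lambda^{1/4}\F_0(\lambda)=4^{-1/4}\sqrt\lambda\;\!\gamma(\sqrt\lambda)\F$, transferring smoothness and weight through $\F$, and then estimating the sphere trace separately near $\lambda=0$ and $\lambda=\infty$ --- is the natural one and is in the spirit of the estimates of \cite{Jen81} used in \cite{RT_R3}. Two points should be tightened in a full write-up. First, $\F:\H^s_t\to\H^t_s$ is an isomorphism with equivalent norms, not literally a unitary; this is harmless. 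Second, the two quantitative ingredients you invoke at large $R$ need justification: (i) the constant in $\int_{|x|=R}|g|^2\;\!\d\sigma\le C\|g\|^2_{H^{1/2+\varepsilon}(A_R)}$ must be uniform in $R\ge1$, which follows from a partition of unity into unit-size patches (equivalently, comparison with the flat slab $\R^2\times(-1,1)$), the curvature of the spheres $\{|x|=R\}$, $R\ge 1$, being uniformly bounded; and (ii) the weight extraction $\|g\|_{H^{1/2+\varepsilon}(A_R)}\le C\langle R\rangle^{-s}\|g\|_{\H^{1/2+\varepsilon}_s}$ requires a cutoff $\chi_R$ equal to $1$ on $A_R$ and supported in a slightly larger annulus, together with the boundedness of multiplication by $\chi_R\langle X\rangle^{-s}$ on $H^{1/2+\varepsilon}$ with norm $O(R^{-s})$ (controlled by its $C^1$ norm). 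With these standard details added, your endpoint analysis --- the Sobolev embedding $H^t(\R^3)\hookrightarrow C_b(\R^3)$ for $t>3/2$ as $\lambda\to 0$, and the gain of $R^{-1}$ at $\lambda\to\infty$ coming from the normalization $\d\sigma=R^2\d\omega$ against the uniform trace bound on unit-width annuli --- yields exactly the claimed boundedness of both $\lambda^{\pm1/4}\F_0(\lambda)$, and continuity on $(0,\infty)$ indeed follows from the continuity of $R\mapsto\gamma(R)$ recalled just before the statement; your bound $\|\gamma(R)\|_{\B(\H^t_s,\HS)}\lesssim R^{-1}\langle R\rangle^{-s}$ for $R\ge1$ is moreover consistent with the vanishing at infinity asserted in Lemma \ref{lem2}.
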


One immediately infers from Lemma \ref{lem1} that the function
$\R_+\ni\lambda\mapsto\|\F_0(\lambda)\|_{\B(\H^s_t,\HS)}\in\R$ is continuous and
bounded for any $s\ge0$ and $t>3/2$. Also, one can strengthen the statement of Lemma
\ref{lem1} in the case of the minus sign\;\!:

\begin{Lemma}\label{lem2}
Let $s >-1$ and $t>3/2$. Then, $\F_0(\lambda)\in\K(\H^s_t,\HS)$ for each
$\lambda\in\R_+$, and the function
$\R_+\ni\lambda\mapsto\lambda^{-1/4}\F_0(\lambda)\in\K(\H^s_t,\HS)$ is continuous,
admits a limit as $\lambda\searrow0$ and vanishes as $\lambda\to\infty$.
\end{Lemma}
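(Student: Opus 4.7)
My plan is to combine an explicit Hilbert--Schmidt computation for $s\ge 0$ with a duality-and-embedding argument for the delicate range $-1<s<0$, and then to extract the continuity and two boundary limits from an analysis of the explicit kernel. Since Lemma~\ref{lem1} already provides uniform-in-$\lambda$ boundedness for $s\ge 0$ and $t>3/2$, the substantive new content is the lower range.

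For each fixed $\lambda>0$, the operator $\F_0(\lambda)\langle X\rangle^{-t}\colon\ltwo(\R^3)\to\HS$ has integral kernel $K_\lambda(\omega,x)=\tfrac{1}{\sqrt 2}(\lambda/4)^{1/4}(2\pi)^{-3/2}\langle x\rangle^{-t}\e^{-i\sqrt\lambda\,\omega\cdot x}$, whose square $\ltwo(\S^2\times\R^3)$-norm equals a constant times $\sqrt\lambda\int_{\R^3}\langle x\rangle^{-2t}\d x$, finite exactly when $t>3/2$. Thus $\lambda^{-1/4}\F_0(\lambda)\colon\H^0_t\to\HS$ is Hilbert--Schmidt uniformly in $\lambda$, and the case $s>0$ follows by composition with the continuous injection $\H^s_t\hookrightarrow\H^0_t$. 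For $-1<s<0$ I would pass to the adjoint,
$$
\big[\lambda^{-1/4}\F_0(\lambda)^*g\big](x)=\tfrac{1}{\sqrt 2}(2\pi)^{-3/2}\int_{\S^2}\e^{i\sqrt\lambda\,\omega\cdot x}g(\omega)\,\d\omega,
$$
which is entire-analytic in $x$ and satisfies $\|\lambda^{-1/4}\F_0(\lambda)^*g\|_{W^{k,\infty}(\R^3)}\le C_k(1+\lambda^{k/2})\|g\|_\HS$ by differentiation under the integral. Picking $t_0\in(3/2,t)$, multiplication by $\langle x\rangle^{-t_0}$ continuously embeds $W^{1,\infty}(\R^3)$ into $\H^1_{-t_0}$, and the Rellich embedding $\H^1_{-t_0}\hookrightarrow\H^{-s}_{-t}$ is compact precisely when $1>-s$ and $-t_0>-t$; composing yields compactness of $\lambda^{-1/4}\F_0(\lambda)^*$, and by duality of $\lambda^{-1/4}\F_0(\lambda)\colon\H^s_t\to\HS$.

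Continuity of $\lambda\mapsto\lambda^{-1/4}\F_0(\lambda)$ on $(0,\infty)$ reduces to norm continuity of the model kernel $\e^{\pm i\sqrt\lambda\,\omega\cdot x}$ (in the $\ltwo(\S^2\times\R^3,\langle x\rangle^{-2t}\d x\d\omega)$ norm for $s\ge 0$, in $W^{1,\infty}$ for $s<0$), both immediate by dominated convergence, and the fixed Rellich embedding preserves continuity. For the limit at $\lambda\searrow 0$, the weight $t>3/2$ makes $\mathbf 1\in\H^{-s}_{-t}$ for every $s>-1$, so $\hat f(0):=(2\pi)^{-3/2}\langle f,\mathbf 1\rangle$ defines a bounded functional on $\H^s_t$; writing
$$
\big[\lambda^{-1/4}\F_0(\lambda)f-\tfrac{1}{\sqrt 2}\hat f(0)\mathbf 1_{\S^2}\big](\omega)=\tfrac{1}{\sqrt 2}(2\pi)^{-3/2}\big\langle f,\e^{-i\sqrt\lambda\,\omega\cdot\cdot}-1\big\rangle
$$
and using the elementary estimates $|\e^{-ik\cdot x}-1|\le\min\{2,|k||x|\}$ and $|\nabla(\e^{-ik\cdot x}-1)|\le|k|$ together with $t>3/2$, one checks $\sup_{\omega\in\S^2}\|\e^{-i\sqrt\lambda\,\omega\cdot\cdot}-1\|_{\H^{-s}_{-t}}\to 0$, giving operator-norm convergence to the rank-one compact operator $f\mapsto\tfrac{1}{\sqrt 2}\hat f(0)\mathbf 1_{\S^2}$. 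For vanishing as $\lambda\to\infty$, Schur's test applied to $TT^*$ with $T=\lambda^{-1/4}\F_0(\lambda)^*$ (using the rapid decay of $\widehat{\langle\cdot\rangle^{-2t}}$ guaranteed by $t>3/2$) yields $\|T\|_{\HS\to\H^0_{-t}}=O(\lambda^{-1/2})$; combined with the elementary $\|T\|_{\HS\to\H^1_{-t}}=O(1)$ (the $\sqrt\lambda$ coming out of one derivative cancels the gain) and interpolation at $\theta=-s\in(0,1)$, one obtains $\|\lambda^{-1/4}\F_0(\lambda)\|_{\H^s_t\to\HS}=O(\lambda^{-(1+s)/2})$, which vanishes as $\lambda\to\infty$ for every $s>-1$.

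The main obstacle is the range $-1<s<0$: the boundedness needed to invoke a Rellich factorisation lies outside Lemma~\ref{lem1}, and the operator-norm vanishing at $\lambda=\infty$ requires a genuine gain that only the interpolation between the $O(\lambda^{-1/2})$ bound on $\H^0_{-t}$ and the $O(1)$ bound on $\H^1_{-t}$ delivers, precisely down to the hypothesis $s>-1$.
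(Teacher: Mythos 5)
Since the paper itself does not reprove this lemma but refers to \cite{RT_R3}, your argument stands as an independent route, and most of it checks out: the Hilbert--Schmidt computation for $s\ge 0$, the factorisation of the adjoint through $W^{1,\infty}(\R^3)\subset\H^1_{-t_0}$ followed by the compact embedding $\H^1_{-t_0}\hookrightarrow\H^{-s}_{-t}$ (valid precisely because $s>-1$ and $t_0<t$), the operator-norm convergence as $\lambda\searrow 0$ to the rank-one operator $f\mapsto c\;\!\hat f(0)\;\!\mathbf 1_{\S^2}$ (legitimate since $t>3/2$ puts the constant function in $\H^{-s}_{-t}$), and the decay at infinity obtained from the Schur bound $O(\lambda^{-1/2})$ on $\H^0_{-t}$, the $O(1)$ bound on $\H^1_{-t}$, interpolation and duality, giving $O(\lambda^{-(1+s)/2})$ for $s>-1$. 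These estimates are correct.

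There is, however, one step that fails as written: the continuity in $\lambda$ for $-1<s<0$. You reduce it to ``norm continuity of the model kernel $\e^{i\sqrt\lambda\,\omega\cdot x}$ in $W^{1,\infty}$, immediate by dominated convergence'', but this is false: for $\lambda\neq\mu$ one has $\sup_{x\in\R^3}\big|\e^{i\sqrt\lambda\,\omega\cdot x}-\e^{i\sqrt\mu\,\omega\cdot x}\big|=2$, so the map $\lambda\mapsto\lambda^{-1/4}\F_0(\lambda)^*$ is \emph{not} continuous as a map into $\B\big(\HS,W^{1,\infty}(\R^3)\big)$, and no dominated convergence argument is available in the sup norm over all of $\R^3$; composing with the fixed embedding into $\H^{-s}_{-t}$ cannot restore a continuity that is absent upstream. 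The repair is immediate and in the same spirit as your treatment of the limit $\lambda\searrow 0$: prove norm continuity of $\lambda\mapsto\lambda^{-1/4}\F_0(\lambda)^*$ as operators from $\HS$ to $\H^1_{-t_0}$, by estimating $\sup_{\omega\in\S^2}\big\|\langle\cdot\rangle^{-t_0}\big(\e^{i\sqrt\lambda\,\omega\cdot}-\e^{i\sqrt\mu\,\omega\cdot}\big)\big\|_{\H^1}$ with the bound $\min\{2,|\sqrt\lambda-\sqrt\mu|\,|x|\}$ (and the analogous splitting for the gradient term), where the square-integrable weight $\langle x\rangle^{-t_0}$, $t_0>3/2$, makes dominated convergence applicable, uniformly in $\omega$ by radial symmetry of the weight; alternatively, interpolate between the Hilbert--Schmidt continuity on $\H^0_t$ and the locally uniform bound on $\H^{-1}_{t_0}$ that your $W^{k,\infty}$ estimate already provides. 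With this local correction the proof is complete.
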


From now on, we use the notation $C_{\rm c}(\R_+;\G)$ for the set of compactly
supported and continuous functions from $\R_+$ to some Hilbert space $\G$. With this
notation and what precedes, we note that the multiplication operator
$M:C_{\rm c}(\R_+;\H^s_t)\to\Hrond$ given by
\begin{equation}\label{defdeM}
(M\xi)(\lambda):=\lambda^{-1/4}\F_0(\lambda)\;\!\xi(\lambda),
\quad\xi\in C_{\rm c}(\R_+;\H^s_t),~\lambda\in\R_+,
\end{equation}
extends for $s\ge 0$ and $t>3/2$ to an element of
$\B\big(\ltwo(\R_+;\H^s_t),\Hrond\big)$.

The next step is to deal with the limit $\varepsilon\searrow0$ of the operator
$\delta_\varepsilon(L-\lambda)$ in Equation \eqref{start}. For that purpose, we shall
use the continuous extension of the scalar product
$\langle\;\!\cdot\;\!,\;\!\cdot\;\!\rangle_\H$ to a duality
$\langle\;\!\cdot\;\!,\;\!\cdot\;\!\rangle_{\H^s_t,\H^{-s}_{-t}}$ between $\H^s_t$
and $\H^{-s}_{-t}$.

\begin{Lemma}\label{lemlimite}
Take $s\ge0$, $t>3/2$, $\lambda\in\R_+$ and $\varphi\in C_{\rm c}(\R_+;\HS)$. Then,
we have
$$
\lim_{\varepsilon\searrow 0}\big\|\F_0^*\;\!\delta_\varepsilon(L-\lambda)\varphi
-\F_0(\lambda)^*\varphi(\lambda)\big\|_{\H^{-s}_{-t}}=0.
$$
\end{Lemma}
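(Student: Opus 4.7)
The plan is to recognise the claim as a Poisson-kernel/approximate-identity argument in the Banach space $\H^{-s}_{-t}$. On $\Hrond=\ltwo(\R_+;\HS)$ the operator $\delta_\varepsilon(L-\lambda)$ is multiplication in the variable $\mu\in\R_+$ by the Poisson kernel $P_\varepsilon(\mu-\lambda):=\frac{1}{\pi}\frac{\varepsilon}{(\mu-\lambda)^2+\varepsilon^2}$. Writing $\F_0(\mu)^*\in\B(\HS,\H^{-s}_{-t})$ for the adjoint (under the $\langle\cdot,\cdot\rangle_{\H^s_t,\H^{-s}_{-t}}$ pairing) of the extension $\F_0(\mu)\in\B(\H^s_t,\HS)$ provided by Lemma \ref{lem1}, I would first justify the identity
\begin{equation*}
\F_0^*\;\!\delta_\varepsilon(L-\lambda)\varphi \;=\; \int_0^\infty P_\varepsilon(\mu-\lambda)\,\F_0(\mu)^*\varphi(\mu)\,\d\mu,
\end{equation*}
interpreted as a Bochner integral with values in $\H^{-s}_{-t}$. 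Indeed the integrand is norm-bounded by $P_\varepsilon(\mu-\lambda)\|\F_0(\mu)\|_{\B(\H^s_t,\HS)}\|\varphi(\mu)\|_\HS$, which is integrable on $\R_+$ by Lemma \ref{lem1} and the compact support of $\varphi$, and testing against any $g\in\H^s_t$ reduces both sides to the same scalar integral via the unitarity of $\F_0$.

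Set $\Phi(\mu):=\F_0(\mu)^*\varphi(\mu)\in\H^{-s}_{-t}$. Since $\varphi\in C_{\rm c}(\R_+;\HS)$ and, by Lemma \ref{lem1}, the map $\mu\mapsto\F_0(\mu)$ is norm-continuous in $\B(\H^s_t,\HS)$ on the support of $\varphi$, dualising gives that $\mu\mapsto\F_0(\mu)^*$ is norm-continuous in $\B(\HS,\H^{-s}_{-t})$ there; hence $\Phi:\R_+\to\H^{-s}_{-t}$ is continuous with compact support, and in particular bounded and uniformly continuous. I would then split
\begin{equation*}
\F_0^*\delta_\varepsilon(L-\lambda)\varphi - \Phi(\lambda) \;=\; \int_0^\infty P_\varepsilon(\mu-\lambda)\bigl[\Phi(\mu)-\Phi(\lambda)\bigr]\d\mu \;+\; \Phi(\lambda)\left(\int_0^\infty P_\varepsilon(\mu-\lambda)\,\d\mu - 1\right)
\end{equation*}
and treat the two pieces separately. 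The second tends to $0$ in $\H^{-s}_{-t}$ because $\int_\R P_\varepsilon = 1$ while, $\lambda>0$ being fixed, $\int_{-\infty}^0 P_\varepsilon(\mu-\lambda)\,\d\mu\to 0$. For the first, given $\eta>0$ one chooses $\delta\in(0,\lambda)$ with $\|\Phi(\mu)-\Phi(\lambda)\|_{\H^{-s}_{-t}}<\eta$ whenever $|\mu-\lambda|<\delta$; the contribution of that range is then bounded by $\eta$ uniformly in $\varepsilon$, while the tail contribution is at most $2\,\|\Phi\|_\infty\int_{|\mu-\lambda|\ge\delta}P_\varepsilon\,\d\mu\to 0$.

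No genuine obstacle arises here: the only non-routine step is the passage from norm-continuity of $\mu\mapsto\F_0(\mu)\in\B(\H^s_t,\HS)$ to norm-continuity of $\mu\mapsto\F_0(\mu)^*\in\B(\HS,\H^{-s}_{-t})$, which follows at once from the isometry $\|T^*\|_{\B(\HS,\H^{-s}_{-t})}=\|T\|_{\B(\H^s_t,\HS)}$ built into the definition of the duality bracket. Everything else is the standard convolution with an approximate identity, transplanted to the Banach-space-valued setting.
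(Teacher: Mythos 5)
Your argument is correct: writing $\F_0^*\delta_\varepsilon(L-\lambda)\varphi$ as the $\H^{-s}_{-t}$-valued Bochner integral of the Poisson kernel against $\mu\mapsto\F_0(\mu)^*\varphi(\mu)$, whose norm-continuity and compact support follow from Lemma \ref{lem1} by duality, and then running the standard approximate-identity estimate (using $\lambda>0$ to discard the missing half-line $(-\infty,0]$) establishes the claim. The paper does not reproduce a proof of this lemma (it is recalled from \cite{RT_R3}), and your Poisson-kernel argument is essentially the standard one used there, so there is nothing to flag beyond the routine density step identifying $[\F_0 g](\mu)$ with $\F_0(\mu)g$ for $g\in\H^s_t$, which you implicitly and legitimately use.
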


The next necessary result concerns the limits
$T(\lambda\pm i0):=\lim_{\varepsilon\searrow0}T(\lambda\pm i\varepsilon)$,
$\lambda\in\R_+$. Fortunately, it is already known (see for example
\cite[Lemma 9.1]{JK79}) that if $\rho>1$ in \eqref{condV} then the limit
$\big(1+R_0(\lambda+i0)V\big)^{-1}
:=\lim_{\varepsilon\searrow0}\big(1+R_0(\lambda+i\varepsilon)V\big)^{-1}$
exists in $\B(\H_{-t},\H_{-t})$ for any $t\in(1/2,\rho-1/2)$, and that the map
$\R_+\ni\lambda\mapsto\big(1+R_0(\lambda+i0)V\big)^{-1}\in\B(\H_{-t},\H_{-t})$ is
continuous. Corresponding results for $T(\lambda+i\varepsilon)$ follow immediately.
Note that only the limits from the upper half-plane have been computed in
\cite{JK79}, even though similar results for $T(\lambda-i0)$ could have been derived.
Due to this lack of information in the literature and for the simplicity of the
exposition, we consider from now on only the wave operator $W_-$.

\begin{Proposition}\label{prop_on_sigma}
Take $\rho>5$ in \eqref{condV} and let $t\in(5/2,\rho-5/2)$. Then, the function
$$
\R_+\ni\lambda\mapsto
\lambda^{1/4}\;\!T(\lambda+i0)\F_0(\lambda)^*\in\B(\HS,\H_{\rho-t})
$$
is continuous and bounded, and the multiplication operator
$B:C_{\rm c}\big(\R_+;\HS\big)\to\ltwo(\R_+;\H_{\rho-t})$ given by
\begin{equation}\label{defdeB}
(B\;\!\varphi)(\lambda)
:=\lambda^{1/4}\;\!T(\lambda+i0)\F_0(\lambda)^*\varphi(\lambda)\in\H_{\rho-t},
\quad\varphi\in C_{\rm c}\big(\R_+;\HS\big),~\lambda\in\R_+,
\end{equation}
extends to an element of $\B\big(\Hrond,\ltwo(\R_+;\H_{\rho-t})\big)$.
\end{Proposition}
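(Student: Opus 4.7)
The plan is to factor the operator-valued function as
\[
\lambda^{1/4}\,T(\lambda+i0)\,\F_0(\lambda)^*
\;=\; V\;\!\cdot\;\!\bigl(1+R_0(\lambda+i0)V\bigr)^{-1}\;\!\cdot\;\!\bigl(\lambda^{1/4}\F_0(\lambda)^*\bigr),
\]
and to treat each factor separately in appropriate weighted spaces. First I would take the adjoint of Lemma \ref{lem1} with $s=0$: since $\lambda\mapsto \lambda^{1/4}\F_0(\lambda)\in\B(\H_t,\HS)$ is continuous and bounded for $t>3/2$, duality yields that $\lambda\mapsto \lambda^{1/4}\F_0(\lambda)^*\in\B(\HS,\H_{-t})$ is continuous and bounded. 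Second, the assumption $|V(x)|\le C\langle x\rangle^{-\rho}$ gives that multiplication by $V$ maps $\H_{-t}$ into $\H_{\rho-t}$ with a bound depending only on $\|\langle\cdot\rangle^\rho V\|_{\linf}$. Third, I would invoke the Jensen--Kato type result cited just above the proposition: for $t\in(1/2,\rho-1/2)$, $\bigl(1+R_0(\lambda+i0)V\bigr)^{-1}$ exists in $\B(\H_{-t},\H_{-t})$ and depends continuously on $\lambda\in\R_+$. Composing these three yields the continuity of $\lambda\mapsto \lambda^{1/4}T(\lambda+i0)\F_0(\lambda)^*$ as a map into $\B(\HS,\H_{\rho-t})$, provided $t$ lies in the common range, which is where the window $(5/2,\rho-5/2)$ originates once one also asks for the stronger decay needed below.

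The genuinely delicate point is uniform boundedness on the whole half-line $\R_+$, which requires a separate examination of the two ends. Near $\lambda=0$, one uses Lemma \ref{lem2} to write $\lambda^{1/4}\F_0(\lambda)^*=\lambda^{1/2}\bigl(\lambda^{-1/4}\F_0(\lambda)^*\bigr)$ so that this factor actually tends to $0$; combined with the fact that $\bigl(1+R_0(\lambda+i0)V\bigr)^{-1}$ has a limit at $\lambda=0$ (implicit in the Jensen--Kato statement, under absence of zero-energy obstructions which one must include in the standing assumptions), the product is controlled near the threshold. Near $\lambda=\infty$, the standard bound $\|R_0(\lambda+i0)\|_{\B(\H_t,\H_{-t})}=\O(\lambda^{-1/2})$ (for $t>1/2$) gives $\|R_0(\lambda+i0)V\|_{\B(\H_{-t},\H_{-t})}\to 0$, so a Neumann-series argument shows $\|\bigl(1+R_0(\lambda+i0)V\bigr)^{-1}\|_{\B(\H_{-t},\H_{-t})}\to 1$, and Lemma \ref{lem1} furnishes $\sup_{\lambda}\|\lambda^{1/4}\F_0(\lambda)^*\|_{\B(\HS,\H_{-t})}<\infty$.

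Once continuity and the uniform bound
\[
M_0 \;:=\; \sup_{\lambda\in\R_+}\bigl\|\lambda^{1/4}T(\lambda+i0)\F_0(\lambda)^*\bigr\|_{\B(\HS,\H_{\rho-t})} \;<\;\infty
\]
are established, the extension of $B$ is immediate: for $\varphi\in C_{\rm c}(\R_+;\HS)$ one has the fiberwise estimate $\|(B\varphi)(\lambda)\|_{\H_{\rho-t}}\le M_0\,\|\varphi(\lambda)\|_\HS$, so integration in $\lambda$ yields $\|B\varphi\|_{\ltwo(\R_+;\H_{\rho-t})}\le M_0\,\|\varphi\|_{\Hrond}$, and $B$ extends by density to the claimed element of $\B\bigl(\Hrond,\ltwo(\R_+;\H_{\rho-t})\bigr)$.

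The main obstacle I anticipate is the uniformity of the Jensen--Kato factor $\bigl(1+R_0(\lambda+i0)V\bigr)^{-1}$ all the way down to $\lambda=0$ and its compatibility with the $\lambda^{1/4}$ prefactor; this is almost certainly where the hypotheses $\rho>5$ and $t>5/2$ come in, as one needs extra weighted decay to gain continuity of the resolvent factor (and of its derivatives, if one later wishes to differentiate) in a norm strong enough to absorb the $\H_{\rho-t}$ target. Everything else is a combination of the preceding lemmas and a standard multiplier argument.
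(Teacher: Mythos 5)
Your factorization $V\,(1+R_0(\lambda+i0)V)^{-1}\,\lambda^{1/4}\F_0(\lambda)^*$, the duality use of Lemma \ref{lem1}, the weight bookkeeping $V:\H_{-t}\to\H_{\rho-t}$, and the high-energy Neumann-series argument are all fine, and the final density/fiberwise estimate giving the extension of $B$ is exactly the right (and easy) last step. The genuine gap is at the threshold $\lambda=0$: you control it only by adding the hypothesis that $H$ has no $0$-energy eigenvalue and no $0$-energy resonance, and by asserting that $(1+R_0(\lambda+i0)V)^{-1}$ ``has a limit at $\lambda=0$ (implicit in the Jensen--Kato statement)''. Neither is available here. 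The result cited from \cite{JK79} only gives existence and continuity of $(1+R_0(\lambda+i0)V)^{-1}$ in $\B(\H_{-t},\H_{-t})$ for $\lambda$ in the open half-line; in the presence of a $0$-energy resonance or eigenvalue this inverse does \emph{not} stay bounded as $\lambda\searrow0$ (it develops singular terms of order $\lambda^{-1/2}$, respectively $\lambda^{-1}$, in the Jensen--Kato low-energy expansions). Moreover, the remark placed immediately after Proposition \ref{prop_on_sigma} states explicitly that no such generic assumption is made, and that the stronger decay $\rho>5$ together with the window $t\in(5/2,\rho-5/2)$ is imposed precisely so that the statement holds \emph{including} the exceptional cases. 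So what you have proved is the weaker, regular-case version of the proposition, not the proposition itself.

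Concretely, the missing piece is the low-energy analysis in the exceptional cases: one must invoke the Jensen--Kato asymptotic expansion of $(1+R_0(\lambda+i0)V)^{-1}$ in $\B(\H_{-t},\H_{-t})$ (which is exactly what forces $t>5/2$ and $\rho>5$, since enough terms of the expansion are needed in a sufficiently weighted space) and then check that the singular terms are compensated. The $\lambda^{1/2}$ vanishing of $\lambda^{1/4}\F_0(\lambda)^*$ coming from Lemma \ref{lem2} absorbs a $\lambda^{-1/2}$ resonance singularity, but it does not by itself absorb the $\lambda^{-1}$ singularity produced by a $0$-energy eigenvalue; there one needs the structural cancellations that occur when the singular part (built from $V$ applied to the threshold eigenfunctions) is composed with $\F_0(\lambda)^*$, i.e.\ the extra powers of $\sqrt\lambda$ gained when pairing $V\psi$ with the expansion of the plane waves around $\lambda=0$. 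Without this argument the claimed continuity and boundedness of $\lambda\mapsto\lambda^{1/4}T(\lambda+i0)\F_0(\lambda)^*$ on all of $\R_+$, and hence the boundedness of $B$, is not established in the generality stated.
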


\begin{Remark}
If one assumes that $H$ has no $0$-energy eigenvalue and/or no $0$-energy resonance,
then one can prove Proposition \ref{prop_on_sigma} under a weaker assumption on the decay of
$V$ at infinity. However, even if the absence of $0$-energy eigenvalue and $0$-energy
resonance is generic, we do not want to make such an implicit assumption in the
sequel. The condition on $V$ is thus imposed adequately.
\end{Remark}

We are ready for stating the main result of this section. Let us simply recall that the dilation group
in $\ltwo(\R_+)$ has been introduced in \eqref{eq_dil_group} and that $A$ denotes its generator.
We also recall that the Hilbert spaces $\ltwo(\R_+;\H^s_t)$ and $\Hrond$ can be
naturally identified with the Hilbert spaces $\ltwo(\R_+)\otimes\H^s_t$ and
$\ltwo(\R_+)\otimes\HS$.

\begin{Theorem}\label{BigMama}
Take $\rho>7$ in \eqref{condV} and let $\,t\in(7/2,\rho-7/2)$. Then, one has in
$\B(\Hrond)$ the equality
\begin{equation*}
\F_0(W_--1)\;\!\F_0^*
=-2\pi i\;\!M\;\!\big\{
\12\big(1-\tanh(2\pi A)-i\cosh(2\pi A)^{-1}\big)
\otimes1_{\H_{\rho-t}}\big\}B,
\end{equation*}
with $M$ and $B$ defined in \eqref{defdeM} and \eqref{defdeB}.
\end{Theorem}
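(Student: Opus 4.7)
The plan is to pass the limit $\varepsilon\searrow 0$ inside the double integral of \eqref{start} specialised to the minus sign, so that the resolvent boundary values become $T(\lambda+i0)$ and $(\mu-\lambda+i0)^{-1}$, and then to recognise the remaining $\lambda,\mu$-operator, after factoring out the $\lambda^{1/4}$ and $\mu^{-1/4}$ powers built into $M$ and $B$, as the claimed function of the dilation generator $A$.

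First I would justify the interchange of the limit $\varepsilon\searrow 0$ with the $\mu$-integral and the application of Fubini to swap the $\lambda$- and $\mu$-integrations. Pointwise convergence of the integrand is supplied by Lemma \ref{lemlimite}, which replaces $\F_0^*\delta_\varepsilon(L-\lambda)\varphi$ by the evaluation $\F_0(\lambda)^*\varphi(\lambda)$, together with the boundary continuity of $T(\lambda+i\varepsilon)$; the uniform $\ltwo$-in-$\lambda$ and $\ltwo$-in-$\mu$ bounds needed for dominated convergence and Fubini come from Lemma \ref{lem1}, Lemma \ref{lem2} and Proposition \ref{prop_on_sigma}, with the hypothesis $\rho>7$ being exactly what is required to accommodate the singularity of $(\mu-\lambda+i0)^{-1}$ together with the weighted Sobolev norms present in $M$ and $B$. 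After these interchanges the expression reads
\begin{equation*}
\langle\F_0(W_--1)\F_0^*\varphi,\psi\rangle_{\Hrond}
=-\int_0^\infty\!\int_0^\infty\!\frac{d\lambda\,d\mu}{\mu-\lambda+i0}\,
\langle\F_0(\mu)T(\lambda+i0)\F_0(\lambda)^*\varphi(\lambda),\psi(\mu)\rangle_{\HS},
\end{equation*}
and redistributing the factors $\mu^{-1/4}\!\cdot\!\mu^{1/4}$ on the left and $\lambda^{1/4}\!\cdot\!\lambda^{-1/4}$ on the right rewrites this as $\langle M\,\Theta\,B\varphi,\psi\rangle$, where $\Theta$ is the scalar operator on $\ltwo(\R_+)$ with integral kernel $K(\mu,\lambda)=\mu^{1/4}\lambda^{-1/4}/(\mu-\lambda+i0)$, tensored trivially with the identity on $\H_{\rho-t}$.

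The main obstacle is then to identify $\Theta$ with the multiplier $-2\pi i\cdot\tfrac{1}{2}\big(1-\tanh(2\pi A)-i\cosh(2\pi A)^{-1}\big)$. Since $A$ generates the dilation group \eqref{eq_dil_group} on $\ltwo(\R_+)$, the isomorphism $\V$ of Section \ref{subsec_example} maps $A$ to a constant-coefficient differential operator on $\ltwo(\R)$, so every bounded function of $A$ is pulled back to a translation-invariant convolution operator. The substitution $\mu=e^{u}\lambda$ turns $K(\mu,\lambda)$ into precisely such a kernel, namely $h(u)=e^{3u/4}/(e^u-1+i0)$, and the task reduces to the distributional Fourier transform of $h$. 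Closing the contour in the lower half-plane, one picks up the full family of poles at $u=2\pi i n-i0$, $n\in\Z$, whose residues form a geometric series that sums to exactly the combination of $\tanh(2\pi\xi)$ and $\cosh(2\pi\xi)^{-1}$ asserted; the $+i0$ prescription ensures that the pole at the origin is pulled into the contour and accounts for the constant and the imaginary part. The delicate point is the rigorous distributional interpretation of this Fourier computation and its compatibility with the Fubini step of the previous paragraph; once this is in place, combining the identification of $\Theta$ with the sandwich structure $M\,\cdots\,B$ yields the asserted formula for $\F_0(W_--1)\F_0^*$.
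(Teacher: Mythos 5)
Your proposal follows essentially the same route as the paper's (sketched) proof: start from \eqref{start}, justify the exchange of the limit $\varepsilon\searrow0$ with the $\mu$-integration by means of Lemmas \ref{lem1}, \ref{lem2}, \ref{lemlimite} and Proposition \ref{prop_on_sigma}, recognize the sandwich $M\,\Theta\,B$ through $g_0(\lambda)=[B\varphi](\lambda)$ and $f=M^*\psi$, and identify the homogeneous kernel $\lambda^{-1/4}\mu^{1/4}/(\mu-\lambda+i0)$ as a function of the dilation generator via the logarithmic change of variables and a distributional Fourier transform, exactly as in \cite{RT_R3} and as in the proof of Lemma \ref{lemmaA} in the Appendix. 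The only caveat is bookkeeping in the residue step: the contour closure picks up only the poles in the half-plane dictated by the sign of the Fourier variable (not the full family at once), and the orientation and conjugation conventions must be tracked carefully to land on $\frac12\big(1-\tanh(2\pi A)-i\cosh(2\pi A)^{-1}\big)$ rather than a sign-flipped variant.
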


The proof of this statement is rather technical and we shall not reproduce it here. Let us however
mention the key idea. Consider $\varphi\in C_{\rm c}(\R_+;\HS)$ and
$\psi\in C_{\rm c}^\infty(\R_+)\odot C(\S^2)$ (the algebraic tensor product), and set $s:=\rho-t>7/2$. Then, we
have for each $\varepsilon>0$ and $\lambda\in\R_+$ the inclusions
$$
g_\varepsilon(\lambda):=\lambda^{1/4}\;\!T(\lambda+i\varepsilon)\;\!\F_0^*\;\!
\delta_\varepsilon(L-\lambda)\varphi\in\H_s
\qquad\hbox{and}\qquad
f(\lambda):=\lambda^{-1/4}\F_0(\lambda)^*\psi(\lambda)\in\H_{-s}\;\!.
$$
It follows that the expression \eqref{start} is equal to
\begin{align*}
&-\int_\R\d\lambda\,\lim_{\varepsilon\searrow0}\int_0^\infty\d\mu\,\big\langle
T(\lambda+i\varepsilon)\;\!\F_0^*\;\!\delta_\varepsilon(L-\lambda)\varphi,
(\mu-\lambda+i\varepsilon)^{-1}\;\!\F_0(\mu)^*\;\!\psi(\mu)
\big\rangle_{\H_s,\H_{-s}}\\
&=-\int_{\R_+}\d\lambda\,\lim_{\varepsilon\searrow0}\int_0^\infty\d\mu\,
\bigg\langle g_\varepsilon(\lambda),\frac{\lambda^{-1/4}\mu^{1/4}}
{\mu-\lambda+i\varepsilon}\;\!f(\mu)\bigg\rangle_{\H_s,\H_{-s}}.
\end{align*}
Then, once the exchange between the limit $\varepsilon \searrow 0$ and the integral with variable $\mu$
has been fully justified, one obtains that
$$
\big\langle\;\!\F_0(W_\pm-1)\F_0^*\varphi,\psi\big\rangle_{\!\Hrond}
= -\int_{\R_+}\d\lambda \int_0^\infty\d\mu\,
\bigg\langle g_0(\lambda),\frac{\lambda^{-1/4}\mu^{1/4}}
{\mu-\lambda+i0}\;\!f(\mu)\bigg\rangle_{\H_s,\H_{-s}}.
$$
It remains to observe that $g_0(\lambda) = [B\varphi](\lambda)$ and that $f = M^*\psi$,
and to derive a nice function of $A$ from the kernel $\frac{\lambda^{-1/4}\mu^{1/4}}
{\mu-\lambda+i0}$. We refer to the proof of \cite[Thm.~2.6]{RT_R3} for the details.

The next result is a technical lemma which asserts that
a certain commutator is compact. Its proof is mainly based on a result of Cordes which
states that if $f_1,f_2 \in C([-\infty,\infty])$, then the following inclusion holds: $[f_1(X),f_2(D)]\in\K\big(\ltwo(\R)\big)$.
By conjugating this inclusion with the Mellin transform as introduced in Section \ref{subsec_example},
one infers that
$\big[f_1(A),f_3(L)\big]\in\K\big(\ltwo(\R_+)\big)$ with
$f_3:=f_2\circ(-\ln)\in C([0,\infty])$.
Note finally that the following statement does not involve the potential $V$, but only some operators which
are related to $-\Delta$ and to its spectral representation.

\begin{Lemma}\label{Lemma_compact}
Take $s>-1$ and $t>3/2$. Then, the difference
$$
\big\{\big(\tanh(2\pi A)+i\cosh(2\pi A)^{-1}\big)\otimes1_{\HS}\big\}M
-M\big\{\big(\tanh(2\pi A)+i\cosh(2\pi A)^{-1}\big)\otimes 1_{\H^s_t}\big\}
$$
belongs to $\K\big(\ltwo(\R_+;\H_t^s),\Hrond\big)$.
\end{Lemma}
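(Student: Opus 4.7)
The plan is to exploit the fact, recalled just before the lemma, that for any $f_1 \in C([-\infty,+\infty])$ and any $f_3 \in C([0,+\infty])$ the commutator $[f_1(A), f_3(L)]$ is compact in $\ltwo(\R_+)$, and to combine it with a density argument that uses the compactness of the values of $M$. Throughout, write $f$ for the bounded function $y \mapsto \tanh(2\pi y) + i\cosh(2\pi y)^{-1}$, which lies in $C([-\infty,+\infty])$, and set $m(\lambda) := \lambda^{-1/4} \F_0(\lambda)$. By Lemma \ref{lem2}, $m$ extends to a continuous function $[0,+\infty] \to \K(\H^s_t, \HS)$ that vanishes at $+\infty$.

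First I would approximate $m$ uniformly by finite sums of elementary tensors. Since $[0,+\infty]$ is compact, a standard partition-of-unity argument (covering $[0,+\infty]$ by small open sets on which $m$ varies by at most $\varepsilon$, choosing the set containing $+\infty$ to be centered at $+\infty$ so that the corresponding coefficient is $0$) produces, for every $\varepsilon>0$, a finite sum
\begin{equation*}
m_N(\lambda) = \sum_{j=1}^N \phi_j(\lambda)\, K_j, \qquad \phi_j \in C\big([0,+\infty]\big),\ K_j \in \K(\H^s_t, \HS),
\end{equation*}
with $\sup_\lambda \|m(\lambda)-m_N(\lambda)\|_{\B(\H^s_t,\HS)} < \varepsilon$. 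Let $M_N$ be the multiplication operator with symbol $m_N$; under the tensor identification $\ltwo(\R_+;\H^s_t) \cong \ltwo(\R_+) \otimes \H^s_t$ it reads $M_N = \sum_{j=1}^N \phi_j(L) \otimes K_j$, and $\|M - M_N\|_{\B(\ltwo(\R_+;\H^s_t),\Hrond)} \to 0$ as $\varepsilon \searrow 0$.

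Next I would observe that because $\phi_j(L)$ and $K_j$ act on different tensor factors while $f(A)$ acts only on the first, a direct computation gives
\begin{equation*}
\big\{\big(f(A)\big)\otimes 1_\HS\big\}\, M_N - M_N\, \big\{\big(f(A)\big)\otimes 1_{\H^s_t}\big\} = \sum_{j=1}^N [f(A), \phi_j(L)] \otimes K_j.
\end{equation*}
By the Cordes-type result recalled above the statement, each $[f(A), \phi_j(L)]$ belongs to $\K\big(\ltwo(\R_+)\big)$; combined with the compactness of $K_j \in \K(\H^s_t, \HS)$, each tensor product $[f(A),\phi_j(L)] \otimes K_j$ is compact in $\B\big(\ltwo(\R_+;\H^s_t), \Hrond\big)$, and hence the finite sum is compact.

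Finally, since $f(A)$ is bounded and $M_N \to M$ in operator norm, the commutator-type expressions with $M_N$ converge in norm to the one with $M$; as a norm limit of compact operators, the difference in the statement lies in $\K\big(\ltwo(\R_+;\H^s_t),\Hrond\big)$. The only delicate point is the uniform approximation of the compact-operator-valued function $m$ by finite sums of elementary tensors while respecting the boundary behavior at $0$ and $+\infty$; once this is secured by the partition of unity on the compact interval $[0,+\infty]$, the reduction to Cordes's commutator result on $\ltwo(\R_+)$ is essentially formal.
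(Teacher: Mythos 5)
Your proof is correct and follows essentially the same route as the paper's (detailed in the reference \cite{RT_R3}): one uses Lemma \ref{lem2} to view $\lambda\mapsto\lambda^{-1/4}\F_0(\lambda)$ as a norm-continuous $\K(\H^s_t,\HS)$-valued function on the compactified half-line, approximates it uniformly by finite sums $\sum_j\phi_j(\cdot)K_j$, reduces the commutator to $\sum_j[f(A),\phi_j(L)]\otimes K_j$ which is compact by the Cordes--Mellin result, and concludes since a norm limit of compact operators is compact. No gaps.
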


Before providing the proof of Theorem \ref{Java}, let us simply mention that the
following equality holds:
$$
\12\big(1+\tanh(\pi A)-i\cosh(\pi A)^{-1}\big)=\F_0^*\big\{ \12\big(1-\tanh(2\pi A)-i\cosh(2\pi A)^{-1}\big)\otimes 1_{\HS}\big\}\F_0.
$$
with the generator of dilations on $\R^3$ in the l.h.s.~and the generator of dilations on $\R_+$ in the r.h.s.

\begin{proof}[Proof of Theorem \ref{Java}]
Set $s=0$ and $t\in(7/2,\rho-7/2)$. Then, we deduce from Theorem \ref{BigMama},
Lemma \ref{Lemma_compact} and the above paragraph that
\begin{align}
\nonumber W_--1
&=-2\pi i\;\!\F_0^*M\big\{\12\big(1-\tanh(2\pi A)-i\cosh(2\pi A)^{-1}\big)\otimes1_{\H_{\rho-t}}\big\}B\;\!\F_0\\
\label{eq_repspe} &=-2\pi i\;\!\F_0^*\big\{\12\big(1-\tanh(2\pi A)-i\cosh(2\pi A)^{-1}\big)\otimes 1_\HS\big\}MB\;\!\F_0+K\\
\nonumber &=\12\big(1+\tanh(\pi A)-i\cosh(\pi A)^{-1}\big)\;\!\F_0^*(-2\pi iMB)\;\!\F_0+K,
\end{align}
with $K\in\K(\H)$. Comparing $-2\pi iMB$ with the usual expression for the scattering
matrix $S(\lambda)$ (see for example \cite[Eq.~(5.1)]{JK79}), one observes that
$-2\pi iMB=\int_{\R_+}^\oplus\d\lambda\,\big(S(\lambda)-1\big)$. Since $\F_0$ defines
the spectral representation of $H_0$, one obtains that
\begin{equation}\label{form_BPU}
W_--1=\12\big(1+\tanh(\pi A)-i\cosh(\pi A)^{-1}\big)[S-1]+K.
\end{equation}
The formula for $W_+-1$ follows then from \eqref{form_BPU} and the relation
$W_+=W_-\;\!S^*$.
\end{proof}

\subsection{The index theorem}\label{subsec_Lev3}

In order to figure out the algebraic framework necessary for this model, let us first look again at
the wave operator in the spectral representation of $H_0$. More precisely, one deduces from
\eqref{eq_repspe} that the following equality holds in $\Hrond \equiv \ltwo(\R_+)\otimes \HS$~:
$$
\F_0 W_- \F_0^* = 1 + \big\{\12\big(1-\tanh(2\pi A)-i\cosh(2\pi A)^{-1}\big)\otimes 1_\HS\big\} [S(L)-1] + K
$$
with $K\in \K(\Hrond)$. Secondly, let us recall some information on the scattering matrix
which are available in the literature.
Under the assumption on $V$ imposed in Theorem \ref{Java}, the map
$$
\R_+\ni \lambda\mapsto S(\lambda)-1 \in \K_2(\HS)
$$
is continuous, where $\K_2(\HS)$ denotes the set of Hilbert-Schmidt operators on $\HS$, endowed with the Hilbert-Schmidt norm.
A fortiori, this map is continuous in the norm topology of $\K(\HS)$, and in fact this map
belongs to $C\big(\bRp;\K(\HS)\big)$. Indeed, it is well-known that $S(\lambda)$ converges to $1$ as $\lambda \to \infty$,
see for example \cite[Prop.~12.5]{AJS}. For the low energy behavior, see \cite{JK79} where the norm convergence of $S(\lambda)$
for $\lambda \to 0$ is proved (under conditions on $V$ which are satisfied in our Theorem \ref{Java}).
The picture is the following: If $H$ does not possess a $0$-energy resonance, then $S(0)$ is equal to $1$,
but if such a resonance exists, then $S(0)$ is equal to $1-2P_{0}$, where $P_{0}$ denotes the orthogonal
projection on the one-dimensional subspace of spherically symmetric functions in $\HS\equiv \ltwo(\S^{2})$.

By taking these information into account, it is natural to define the unital $C^*$-subalgebra $\EE'$ of $\B\big(\ltwo(\R_+)\otimes \HS\big)$
by $\EE':=\big\{\EE_{(L,A)}\otimes \K(\HS)\big\}^+$ and to consider the short exact sequence
$$
0 \to \K\big(\ltwo(\R_+)\big)\otimes \K(\HS) \hookrightarrow \EE' \stackrel{q\;\;\!}{\to} \big\{C(\square)\otimes \K(\HS)\big\}^+ \to 0.
$$
However, we prefer to look at a unitarily equivalent representation of this algebra in the original Hilbert space
$\H$, and set $\EE:= \F_0^* \big\{\EE_{(L,A)}\otimes \K(\HS)\big\}^+ \F_0 \subset \B(\H)$. The corresponding short exact sequence reads
$$
0 \to \K(\H) \hookrightarrow \EE \stackrel{q\;\;\!}{\to} \big\{C(\square)\otimes \K(\HS)\big\}^+ \to 0,
$$
and this framework is the suitable one for the next statement:

\begin{Corollary}
Let $V \in \linf(\R^3)$ be real and satisfy $|V(x)|\le{\rm Const.}\;\!(1+|x|)^{-\rho}$ with $\rho>7$ for
almost every $x\in\R^3$. Then $W_-$ belongs to $\EE$ and its image
$\Gammaq := q(W_-)$ in $\big\{C(\square)\otimes \K(\HS)\big\}^+$ is given by
$$
\Gammaq = \Big(1+\12\big(1+\tanh(\pi \cdot)-i\cosh(\pi \cdot)^{-1}\big)[S(0)-1], S(\cdot),1,1\Big).
$$
In addition, the equality
\begin{equation}\label{eq_top3}
\ind[\Gammaq]_1 = -[E_{\rm p}(H)]_0
\end{equation}
holds, with
$[\Gammaq]_1\in K_1\big(\big\{C(\square)\otimes \K(\HS)\big\}^+ \big)$ and $[E_{\rm p}(H)]_0\in K_0\big(\K(\H)\big)$.
\end{Corollary}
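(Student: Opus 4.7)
The plan has three steps: verify $W_-\in\EE$, compute its image $\Gammaq$ in the quotient by evaluating on the four sides of $\square$, and apply Proposition \ref{prop_top_lev} to convert the resulting partial-isometry data into the claimed K-theoretic identity. The first step takes Theorem \ref{Java} as its starting point, rewriting it in the spectral representation as
$$
\F_0 W_- \F_0^* = 1 + \bigl\{\tilde\eta(A)\otimes 1_\HS\bigr\}[S(L)-1] + K,
$$
with $\tilde\eta\in C(\bR)$ and $K\in\K(\H)$. The key input is that $\lambda\mapsto S(\lambda)-1$ belongs to $C\bigl(\bRp,\K(\HS)\bigr)$, as recalled just before the statement: norm continuity on $\R_+$, norm limit at $0$ from Jensen--Kato, and $S(+\infty)=1$. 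Approximating uniformly by finite sums $\sum_i\psi_i(\cdot)T_i$ with $\psi_i\in C(\bRp)$ and $T_i\in\K(\HS)$ and multiplying on the left by $\tilde\eta(A)\otimes 1_\HS$ places the middle term in $\EE_{(L,A)}\otimes\K(\HS)$; the compact $K$ is absorbed since $\K(\H)=\K(\ltwo(\R_+))\otimes\K(\HS)$ sits inside this algebra, and the scalar $1$ moves us into its unitization. Hence $W_-\in\EE$.

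For the second step I would use the boundary description of the quotient from Section \ref{subsec_example}: a product $\psi(H_0)\eta(A)$ descends on $B_j$ to the pointwise product of $\psi$ and $\eta$ evaluated at the appropriate endpoint, and this extends by continuity to tensor products with $\K(\HS)$. Setting $\eta(y):=\tfrac{1}{2}\bigl(1+\tanh(\pi y)-i\cosh(\pi y)^{-1}\bigr)$, direct inspection gives $\eta(+\infty)=1$ and $\eta(-\infty)=0$; combined with $S(+\infty)-1=0$, the four restrictions of $q(W_-)$ become $1+\eta(\cdot)[S(0)-1]$, $S(\cdot)$, $1$ and $1$ on $B_1,B_2,B_3,B_4$ respectively, which is exactly the stated formula for $\Gammaq$. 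The compatibility conditions at the four corners are automatic since these four expressions agree pairwise at the shared endpoints.

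The index identity is now immediate. By Assumption \ref{Assum_1} and the identities \eqref{eq_range_W}, $W_-$ is a partial isometry (in fact an isometry) satisfying $1-W_-^*W_-=0$ and $1-W_-W_-^*=E_\mathrm{p}(H)$, with $E_\mathrm{p}(H)$ a finite-rank projection and hence in $\K(\H)=\Ker(q)$. Thus $\Gammaq=q(W_-)$ is genuinely unitary in the quotient, and Proposition \ref{prop_top_lev} applied with $n=1$ yields
$$
\ind[\Gammaq]_1 \;=\; [1-W_-^*W_-]_0 - [1-W_-W_-^*]_0 \;=\; [0]_0 - [E_\mathrm{p}(H)]_0 \;=\; -[E_\mathrm{p}(H)]_0,
$$
which is \eqref{eq_top3}. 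The main obstacle lies in the first step, and this is why $\rho>7$ is imposed in the hypothesis: everything downstream is essentially algebraic, but the containment $W_-\in\EE$ demands that $S(\cdot)-1$ genuinely live in the spatial tensor product $C(\bRp)\otimes\K(\HS)$ with sufficient norm-regularity at both endpoints to survive left-multiplication by $\tilde\eta(A)\otimes 1_\HS$, i.e.\ to land in the closed ideal $\EE_{(L,A)}\otimes\K(\HS)$ sitting inside $\EE$.
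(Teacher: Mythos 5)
Your proposal is correct and follows essentially the same route as the paper: Theorem \ref{Java} (equivalently \eqref{eq_repspe}) plus the norm-continuity of $\lambda\mapsto S(\lambda)-1$ in $\K(\HS)$ with its limits at $0$ and $+\infty$ give $W_-\in\EE$, the quotient image is read off componentwise on the four sides of $\square$, and Proposition \ref{prop_top_lev} together with \eqref{eq_range_W} and $\sharp\sigma_{\rm p}(H)<\infty$ yields $\ind[\Gammaq]_1=-[E_{\rm p}(H)]_0$. The only point to watch is the bookkeeping between the spectral representation (where the function of $A$ is $\frac12(1-\tanh(2\pi\cdot)-i\cosh(2\pi\cdot)^{-1})$, owing to $\F_0 A\F_0^*=-2A\otimes 1_\HS$) and the physical one used in the stated $\Gammaq$, which only permutes the roles of the boundary components and does not affect the conclusion.
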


Let us mention again that if $H$ has no $0$-energy resonance, then $S(0)-1$ is equal to $0$,
and thus the first term $\Gamma_1$ in the quadruple $\Gammaq$ is equal to $1$.
However, if such a resonance exists, then $\Gamma_1$ is not equal to $1$ but to
$$
1-\big(1+\tanh(\pi \cdot)-i\cosh(\pi \cdot)^{-1}\big)P_0 = P_0^\bot + \big(-\tanh(\pi \cdot)+i\cosh(\pi \cdot)^{-1}\big)P_0.
$$
This term will allow us to explain the correction which often appears in the literature for
$3$-dimensional Schr\"odinger operators in the presence of a resonance at $0$.
However, for that purpose we first need a concrete computable version of our topological Levinson's theorem, or in other words
a way to deduce an equality between numbers from the equality \eqref{eq_top3}.

The good point in the previous construction is that the $K_0$-group of $\K(\H)$ and the
$K_1$-group of $\big\{C(\square)\otimes \K(\HS)\big\}^+$ are both isomorphic to $\Z$.
On the other hand, since $S(\lambda)-1$ takes values in $\K(\HS)$ and not in $\MM_n(\C)$ for some fixed $n$,
it is not possible to simply apply the map $\Wind$ without a regularization process. In the next section,
we shall explain how such a regularization can be constructed, but let us already present the final
result for this model.

In the following statement, we shall use the fact that for the class of perturbations
we are considering the map $\R_+\ni \lambda \mapsto S(\lambda)-1 \in\K(\HS)$ is continuous in the Hilbert-Schmidt norm.
Furthermore, it is known that this map is even continuously differentiable in the norm topology.
In particular, the on-shell time delay operator
$i\;\!S(\lambda)^*\;\!S'(\lambda)$ is well defined for each $\lambda
\in \R_+$, see \cite{Jen81,Jen83} for details.
If we set $\K_1(\HS)$ for the trace class operators in $\K(\HS)$,
and denote the corresponding trace by $\tr$, then the following statement holds:

\begin{Theorem}\label{Levnous}
Let $V \in \linf(\R^3)$ be real and satisfy $|V(x)|\le{\rm Const.}\;\!(1+|x|)^{-\rho}$ with $\rho>7$ for
almost every $x\in\R^3$. Then for any $p\geq 2$ one has
\begin{equation*}
\frac{1}{2\pi}\Big\{
\int_{-\infty}^{\infty} \tr \big[i\big(1-\Gamma_1(\xi)\big)^p
\Gamma_1(\xi)^*\,\Gamma_1'(\xi)\big] \d \xi
+ \int_0^\infty  \tr \big[i\big(1-S(\lambda)\big)^p S(\lambda)^*\,S'(\lambda)\big] \d
\lambda\Big\}  = \sharp\sigma_{\rm p}(H).
\end{equation*}
In addition, if the map $\lambda \mapsto S(\lambda)-1$ is continuously differentiable in
the Hilbert-Schmidt norm, then the above equality holds also for any $p\geq 1$.
\end{Theorem}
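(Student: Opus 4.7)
My strategy is to combine the abstract topological Levinson's theorem (Theorem \ref{thm_Lev}), the explicit form of $\Gammaq=q(W_-)$ exhibited just above this statement, and a regularized winding number formula for loops of unitaries of the form $1+\K_p(\HS)$. The overall scheme has three ingredients, carried out as follows.

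First, the $K$-theoretic side. From the short exact sequence displayed in Section \ref{subsec_Lev3} and Theorem \ref{thm_Lev}, there is a fixed integer $n$ with $\Wind(\Gammaq) = -n\Tr(E_{\rm p}(H))$, and the isomorphism $K_0(\K(\H))\cong\Z$ implemented by $\Tr$ identifies $\Tr(E_{\rm p}(H))=\sharp\sigma_{\rm p}(H)$, finite by Assumption \ref{Assum_1}. The constant $n$ is determined by testing on the baby model of Section \ref{subsec_Teddy}, since the underlying algebra $\EE_{(H_{\rm D},A)}$ is the same there up to the tensor factor $\K(\HS)$, and one finds $n=-1$. So the task reduces to expressing the winding number of $\Gammaq$ in the infinite-dimensional quotient $\{C(\square)\otimes\K(\HS)\}^+$ as a concrete integral.

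Second, the key analytic lemma I would establish. For any continuous unitary-valued $u:\square\to 1+\K_p(\HS)$ whose derivative exists in the $\K_p$-norm on each edge of $\square$, the formula
\[
\Wind(u) = \frac{1}{2\pi}\oint_\square \tr\!\big[i(1-u)^p\, u^* u'\big]\;\!\d\lambda
\]
holds. Since $1-u\in \K_p$ one has $(1-u)^p\in \K_1$, and H\"older's inequality for Schatten ideals gives that the integrand is trace class at every point. Equality of both sides would be proved by first reducing to finite-rank unitaries $u_N=1+P_N(u-1)P_N$ with $P_N$ a spectral projection approximating the identity, in which case both sides equal the classical winding number of $\det\circ u_N$ on $\S\cong\square$ (for $u_N$ the binomial expansion of $(1-u_N)^p$ collapses after using trace cyclicity and the identity $k\tr(u^{k-1}u')=\tfrac{\d}{\d\lambda}\tr(u^k-1)$, so all exact-derivative terms vanish on the closed loop and only the $k=0$ term survives), then passing to the limit by continuity of both sides in $u$ in the $\K_p$-topology. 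The sharper statement for $p\geq 1$ under $C^1$ Hilbert--Schmidt regularity is the $p=1$ specialization, where $(1-u)u^*u'\in \K_2\cdot\K_2\subset\K_1$ directly.

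Third, the explicit computation. The four pieces of $\square$ decompose the integral into four contributions; on $B_3$ and $B_4$ the components $\Gamma_3\equiv 1$ and $\Gamma_4\equiv 1$ give zero. On $B_1=\{0\}\times[-\infty,+\infty]$, parametrized by $\xi\in\R$, $\Gamma_1(\xi)-1$ is a scalar-valued function of $\xi$ times the rank-one (or zero) operator $S(0)-1$, so the Schatten regularity is automatic and the integral recovers the first term in the statement. On $B_2=[0,+\infty]\times\{+\infty\}$, parametrized by $\lambda\in\R_+$, the component is $S(\lambda)$, with $S-1$ continuous and continuously differentiable in the operator norm under the hypothesis $\rho>7$ (and in fact $\K_2$-valued), giving the second term. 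Summing and inserting $n=-1$ produces the stated equality.

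\textbf{Main obstacle.} The most delicate step is the regularization lemma of paragraph two: the naive binomial trick $\tr[(1-u)^p u^*u']\equiv \tr[u^* u']\pmod{\text{exact}}$ is obstructed because $u^k-1$ is only in $\K_{\lceil p/k\rceil}$, not necessarily trace class, when $k<p$, so the intermediate traces $\tr(u^k-1)$ are ill-defined in infinite dimension. Circumventing this requires either a careful finite-rank approximation argument with uniform Schatten-norm control, or a homotopy argument reducing an arbitrary $\K_p$-valued loop to a finite-rank model via $(u,P_N)\mapsto 1+P_N(u-1)P_N$ while verifying Schatten-norm continuity of the regularized integral in the approximation parameter. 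A secondary technical point is the corner matching at the four vertices of $\square$: contributions from $\Gamma_1$ at $\pm\infty$ must match those of $S$ at $0$ and $+\infty$ via the compatibility conditions defining $C(\square)\subset C([-\infty,+\infty])^{\oplus 2}\oplus C([0,+\infty])^{\oplus 2}$, and this must be tracked carefully in the regularized integrands to ensure no spurious boundary terms survive.
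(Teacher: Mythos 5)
Your overall architecture --- the $K$-theoretic identity $\ind[\Gammaq]_1=-[E_{\rm p}(H)]_0$, the explicit quadruple $\Gammaq=q(W_-)$, the normalization $n=-1$ read off from the baby model, and a regularized winding formula evaluated edge by edge on $\square$ (with $B_3$, $B_4$ contributing nothing) --- is the same as the paper's. The genuine gap is in your second step, which is exactly where the content of Theorem \ref{Levnous} lies: you state, but do not prove, that for a loop of unitaries $u$ with $u-1\in\K_p(\HS)$ the integral $\frac{1}{2\pi}\oint\tr\big[i(1-u)^pu^*u'\big]$ computes the $K_1$-theoretic winding number. Your proposed finite-rank reduction has concrete defects: the approximants $u_N=1+P_N(u-1)P_N$ are in general not unitary, so the binomial collapse you invoke (which uses $u^ku^*=u^{k-1}$) fails for them and $\det u_N$ may even vanish; the intermediate traces $\tr(u^k-1)$ are ill-defined in infinite dimension (as you yourself note); and the continuity/homotopy argument needed to pass to the limit on both sides (including convergence of $u_N'$ and stability of the $K_1$-class) is not supplied. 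In addition, your key lemma assumes differentiability in $\K_p$-norm, whereas in the application with $p\geq 2$ one only has continuity of $\lambda\mapsto S(\lambda)-1$ in the Hilbert-Schmidt norm together with continuous differentiability in the operator norm --- the weaker regularity to which the statement of the theorem is precisely calibrated.

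The paper closes this gap by a different device, regularized Fredholm determinants (Lemmas \ref{lem_reg_1} and \ref{lem_reg_2}, proved in the Appendix): under continuity in $\K_p$-norm and $C^1$-regularity in operator norm, the scalar function $t\mapsto{\det}_{p+1}\big(\Gamma(t)\big)$ is continuous, nonvanishing and continuously differentiable, with logarithmic derivative equal to $\tr\big[(1-\Gamma(t))^p\Gamma(t)^*\Gamma'(t)\big]$; hence the regularized integral is literally the winding number of ${\det}_{p+1}\circ\Gamma$, well defined and homotopy invariant without any finite-rank approximation, and it realizes the $K_1\to\Z$ isomorphism needed to convert the index identity into numbers. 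Independence of the exponent (Lemma \ref{lem_reg_2}) is then obtained by the one-step identity $M_q(t)=M_{q-1}(t)-\tr\big[(1-\Gamma(t))^{q-1}\Gamma'(t)\big]$ for $M_q(t):=\tr\big[(1-\Gamma(t))^{q}\Gamma(t)^*\Gamma'(t)\big]$, whose correction term is (up to sign) the exact derivative $\frac{1}{q}\frac{\d}{\d t}\tr\big[(1-\Gamma(t))^{q}\big]$ of a genuinely trace-class quantity --- in other words, your binomial collapse carried out only down to the threshold where all traces remain defined, never touching $\tr[u^*u']$ itself. The second assertion of the theorem ($p\geq1$ under $C^1$ Hilbert-Schmidt regularity) follows from the refined part of Lemma \ref{lem_reg_1}, not merely from trace-class membership of the integrand as in your sketch. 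To salvage your route you would need unitary (or at least invertible) approximants inside $1+\K_p(\HS)$ together with Schatten-norm continuity of the regularized integral and of the $K_1$-class along the approximation --- at which point you would essentially have reproved the paper's two lemmas.
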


The proof of this statement is a corollary of the construction presented in the next section.
For completeness, let us mention that in the absence of a resonance at $0$ for $H$, in
which case $\Gamma_1=1$, only the second term containing $S(\cdot)$ contributes to the l.h.s.
On the other hand, in the presence of a resonance at $0$ the real part of the integral of the term $\Gamma_1$ yields
$$
\Re\Big\{ \frac{1}{2\pi} \int_{-\infty}^{\infty} \tr
\big[i\big(1-\Gamma_1(\xi)\big)^p \Gamma_1(\xi)^*\;\!\Gamma_1'(\xi)\big] \d \xi\Big\} =  -\12
$$
which accounts for the correction usually found in
Levinson's theorem. Note that only the real part of this expression is of interest since its imaginary
part will cancel with the corresponding imaginary part of term involving $S(\cdot)$.

\subsection{A regularization process}\label{subsec_regul}

In this section and in the corresponding part of the Appendix, we recall and adapt some of the results and proofs from \cite{KR_R3}
on a regularization process.
More precisely, for an arbitrary Hilbert space $\HS$, we consider a unitary element
$\Gamma\in C\big(\S;\K(\HS)\big)^+$ of the form $\Gamma(t) - 1 \in \K(\HS)$ for any $t\in \S$.
Clearly, there is a certain issue about the possibility of computing a kind of winding
number on this element, as the determinant of $\Gamma(t)$ is not always defined.
Nevertheless, at the level of $K$-theory, it is {\it a priori} possible to define $\Wind$ on $[\Gamma]_1$
simply by evaluating it on a representative on which the pointwise determinant is well defined
and depends continuously on $t$. For our purpose this approach is
not sufficient, however, as it is not clear how to construct
for a given $\Gamma$ such a representative. We will therefore have to make
recourse to a regularization of the determinant.

Let us now explain this regularization in the case that
$\Gamma(t)-1$ lies in the $p$-th Schatten ideal $\K_p(\HS)$ for some integer $p$, that is, $|\Gamma(t)-1|^p$ belongs to $\K_1(\HS)$.
We also denote by $\{e^{i\theta_j(t)}\}_j$ the set of eigenvalues of $\Gamma(t)$.
Then the regularized Fredholm determinant $\det_p$, defined by \cite[Eq.~(XI.4.5)]{GGK}
$$
{\det}_p\big(\Gamma(t)\big)
=\prod_j e^{i\theta_j(t)}\exp\left(\sum_{k=1}^{p-1}\frac{(-1)^k}{k}
(e^{i\theta_j(t)} -1)^k\right)
$$
is finite and non-zero.
Thus, if in addition we suppose that $t\mapsto \Gamma(t)-1$ is continuous in the
$p$-th Schatten norm, then the map $t\to {\det}_p\big(\Gamma(t)\big)$ is continuous and hence
the winding number of the map $\S\ni t\mapsto {\det}_p\big(\Gamma(t)\big)\in\C^*$ can be defined.
However, in order to get an analytic formula for this winding number, stronger conditions are necessary,
as explicitly required in the following statements:

\begin{Lemma}\label{lem_reg_1}
Let $I\subset \S$ be an open arc of the unit circle, and assume that the map $I \ni t\mapsto \Gamma(t)-1 \in \K_p(\HS)$ is continuous in norm of $\K_p(\HS)$
and is continuously differentiable in norm of $\K(\HS)$. Then the map $I \ni t \mapsto \det_{p+1}\big(\Gamma(t)\big)\in \C$
is continuously differentiable and the following equality holds for any $t\in I$:
\begin{equation}\label{aobtenir}
\Big(\ln\det_{p+1}\big(\Gamma(\cdot)\big)\Big)'(t)
= \tr\big[\big(1-\Gamma(t)\big)^{p}\Gamma(t)^*
\Gamma'(t)\big].
\end{equation}
Furthermore, if the map $I \ni t\mapsto \Gamma(t)-1 \in \K_p(\HS)$ is continuously differentiable in norm of $\K_p(\HS)$,
then the statement already holds for $p$ instead of $p+1$.
\end{Lemma}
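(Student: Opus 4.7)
The strategy is to exhibit $\det_{p+1}(\Gamma(t))$ as an ordinary Fredholm determinant $\det(1+B(t))$ with $B(t)\in\K_1(\HS)$, apply the classical differentiation formula for $\ln\det$, and reduce the resulting expression to the claimed one by an operator-series manipulation controlled by a scalar Taylor identity. Concretely, set $A(t):=\Gamma(t)-1$ and $R_p(z):=\sum_{k=1}^{p}(-z)^k/k$, and define
\[B(t):=(1+A(t))\,e^{R_p(A(t))}-1.\]
The crucial scalar fact is that the entire function $z\mapsto(1+z)e^{R_p(z)}-1$ vanishes to order $p+1$ at the origin, so it admits a power series $\sum_{n\geq p+1}c_n z^n$ with $|c_n|^{1/n}\to 0$, and hence $B(t)=\sum_{n\geq p+1}c_n A(t)^n$. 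Using H\"older for Schatten ideals together with the hypotheses (continuity of $A$ in $\K_p$-norm and continuous differentiability of $A$ in $\K$-norm), the first task is to verify that this series converges in $\K_1$-norm, is continuously differentiable in $\K_1$-norm as a function of $t$, and that, by the very definition of the regularized determinant, $\det_{p+1}(\Gamma(t))=\det(1+B(t))$.

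With these preparations the classical formula gives $\frac{d}{dt}\ln\det_{p+1}(\Gamma(t))=\tr[(1+B(t))^{-1}B'(t)]$, where $(1+B(t))^{-1}=e^{-R_p(A(t))}(1+A(t))^{-1}$ is a function of $A(t)$ alone. Differentiating the series term by term yields $B'(t)=\sum_{n\geq p+1}c_n\sum_{j=0}^{n-1}A^j A'A^{n-1-j}$, and each monomial is trace class by H\"older (the total $A$-degree is $n-1\geq p$). Since $A$ commutes with $(1+B)^{-1}$, trace cyclicity reduces every monomial for fixed $n$ to $\tr[A^{n-1}(1+B)^{-1}A']$, accumulating a factor $n$, so that
\[\tfrac{d}{dt}\ln\det_{p+1}(\Gamma(t))=\tr\Big[\Big(\sum_{n\geq p+1}n\,c_n A(t)^{n-1}\Big)(1+B(t))^{-1}A'(t)\Big].\]
The sum inside the trace equals $\frac{d}{dz}[(1+z)e^{R_p(z)}]\big|_{z=A(t)}=(-A(t))^p e^{R_p(A(t))}$, the scalar identity $1+(1+z)R_p'(z)=(-z)^p$ (a geometric-series computation) being the key step. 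After cancelling $e^{R_p(A)}e^{-R_p(A)}=1$, the right-hand side becomes $\tr[(-A(t))^p(1+A(t))^{-1}A'(t)]=\tr[(1-\Gamma(t))^p\Gamma(t)^*\Gamma'(t)]$, as required; continuity of this trace in $t$ follows from the continuity of each factor under the stated hypotheses.

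The main obstacle is the bookkeeping around the infinite series: convergence of $B(t)$ and $B'(t)$ in $\K_1$-norm must be secured using both the superexponential decay of $c_n$ (from the entireness of $(1+z)e^{R_p(z)}$) and the H\"older bounds from $A(t)\in\K_p$, while the interchange of the trace with the infinite sum must be justified at two separate points; a truncation-and-pass-to-the-limit argument, exploiting the continuity of $\tr$ on $\K_1$, handles this. The second statement is handled by the identical scheme with $R_{p-1}$ in place of $R_p$, exploiting the vanishing of order $p$ rather than $p+1$; the stronger hypothesis that $A'\in\K_p$ is precisely what is needed for the lower-degree monomials $A^jA'A^{k-1-j}$ (with total $A$-degree $k-1\geq p-1$) to be individually trace class via H\"older, and this is where the gain of one power in the regularization index originates.
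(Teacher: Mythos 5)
Your proposal is correct and takes essentially the same route as the paper: both rewrite $\det_{p+1}\big(\Gamma(t)\big)$ as an ordinary Fredholm determinant $\det\big(1+B(t)\big)$ with the very same operator $B(t)=\Gamma(t)\exp\big\{\sum_{j=1}^p\tfrac1j(1-\Gamma(t))^j\big\}-1$, exploit the order-$(p+1)$ vanishing at $0$ of the associated scalar function (the paper via Simon's entire-function factorization $B=A^{p+1}h(A)$, you via its explicit Taylor series) to secure trace-class membership and $C^1$-regularity in $\K_1(\HS)$ under the stated hypotheses, and then differentiate the ordinary determinant. Your explicit final algebra -- commutativity of $(1+B)^{-1}$ with $A$, trace cyclicity, and the identity $1+(1+z)R_p'(z)=(-z)^p$ -- is precisely the ``easy computation using the geometric series'' that the paper leaves to the reader, and your handling of the last assertion (replacing $p+1$ by $p$ under $\K_p$-norm differentiability) matches the paper's argument as well.
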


The proof of this statement is provided in Section \ref{subsec_app_regul}.
Based on \eqref{aobtenir}, it is natural to define
$$
\Wind(\Gamma):=\frac{1}{2\pi}\int_\S \tr\big[i\big(1-\Gamma(t)\big)^{p}\Gamma(t)^*
\Gamma'(t)\big] \d t
$$
whenever the integrant is well-defined and integrable.
However, such a definition is meaningful only if the resulting number does not depend on $p$, for sufficiently large $p$.
This is indeed the case, as shown in the next statement:

\begin{Lemma}\label{lem_reg_2}
Assume that the map $\S \ni t\mapsto \Gamma(t)-1 \in \K_p(\HS)$ is continuous in norm of $\K_p(\HS)$,
and that this map is continuously differentiable in norm of $\K(\HS)$, except on a finite subset of $\S$ (which can be void).
If the map $t\mapsto \tr\big[i\big(1-\Gamma(t)\big)^{p}\Gamma(t)^* \Gamma'(t)$ is integrable for some integer $p$, then for any integer $q> p$ one has:
\begin{equation*}
\frac{1}{2\pi} \int_\S \tr\big[i\big(1-\Gamma(t)\big)^{q}\Gamma(t)^*
\Gamma'(t)\big]\d t  = \frac{1}{2\pi} \int_\S \tr\big[i\big(1-\Gamma(t)\big)^{p}\Gamma(t)^*
\Gamma'(t)\big]\d t\ .
\end{equation*}
\end{Lemma}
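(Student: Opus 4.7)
My plan is to identify each of the two integrals with a fixed signed multiple of the winding number of a suitable regularized Fredholm determinant of $\Gamma$, and then to verify directly that passing from ${\det}_{p+1}$ to ${\det}_{q+1}$ multiplies the determinant by the exponential of a continuous scalar function and hence does not change the winding number.

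First, let $F\subset\S$ denote the finite set on which $\Gamma$ need not be $\K(\HS)$-differentiable. On each open arc of $\S\setminus F$ the hypotheses of Lemma \ref{lem_reg_1} are met at every level $m\geq p$ since $\K_p(\HS)\subset\K_m(\HS)$, which gives
$$\tr\bigl[(1-\Gamma(t))^m\Gamma(t)^*\Gamma'(t)\bigr] \;=\; \tfrac{\d}{\d t}\ln{\det}_{m+1}\bigl(\Gamma(t)\bigr), \qquad m\in\{p,q\}.$$
The continuity of $\Gamma(\cdot)-1$ in $\K_p(\HS)$-norm (hence in $\K_m(\HS)$-norm for every $m\geq p$) makes $t\mapsto{\det}_{m+1}(\Gamma(t))$ continuous and non-vanishing on all of $\S$, so it defines a closed loop in $\C^*$. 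Integrating the identity above over $\S\setminus F$ and telescoping across the points of $F$ using this continuity yields
$$\frac{1}{2\pi}\int_\S\tr\bigl[i(1-\Gamma(t))^m\Gamma(t)^*\Gamma'(t)\bigr]\,\d t \;=\; \varepsilon\cdot\Wind\bigl({\det}_{m+1}(\Gamma(\cdot))\bigr),$$
with a common sign $\varepsilon\in\{\pm 1\}$ fixed by the chosen orientation of $\S$.

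Second, I would exhibit the two winding numbers as equal. The defining formula of the regularized determinant gives
$${\det}_{m+1}(1+A) \;=\; {\det}_m(1+A)\,\exp\!\Bigl(\tfrac{(-1)^m}{m}\tr(A^m)\Bigr) \qquad (A\in\K_m(\HS)),$$
as one sees by comparing the products over eigenvalues of $A$. Iterating with $A=\Gamma(t)-1$ for $m=p+1,\ldots,q$ produces
$${\det}_{q+1}\bigl(\Gamma(t)\bigr) \;=\; {\det}_{p+1}\bigl(\Gamma(t)\bigr)\,e^{h(t)},$$
where $h(t):=\sum_{m=p+1}^{q}\tfrac{(-1)^m}{m}\tr\bigl((\Gamma(t)-1)^m\bigr)$ is continuous on $\S$ (the $m$-th power sends $\K_p(\HS)$ continuously into $\K_1(\HS)$ for $m\geq p$, and $\tr$ is continuous on $\K_1(\HS)$). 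The loop $\exp\circ h\colon\S\to\C^*$ lifts to the continuous loop $h$ in the simply connected space $\C$, so its winding number is zero, whence $\Wind({\det}_{q+1}(\Gamma))=\Wind({\det}_{p+1}(\Gamma))$; combined with the previous paragraph, this proves the lemma.

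The main obstacle I foresee is showing that integrability of the $p$-integrand entails integrability of the $q$-integrand, so that the asserted identity is genuinely an equality of finite numbers. I would handle this by factoring $(1-\Gamma(t))^q = (1-\Gamma(t))^{q-p}(1-\Gamma(t))^p$ and using compactness of $\S$ together with $\K(\HS)$-continuity of $\Gamma$ to bound $\|1-\Gamma(t)\|^{q-p}$ by a constant on $\S\setminus F$, thereby dominating the trace norm of the $q$-integrand by a uniform multiple of the trace norm of the $p$-integrand. A secondary technical point is the telescoping across the finite set $F$, which is routine given that ${\det}_{m+1}(\Gamma(\cdot))$ is continuous everywhere on $\S$ and that the integrand has an antiderivative on each component of $\S\setminus F$.
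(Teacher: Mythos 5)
Your overall route is correct in its main mechanism, and it is genuinely different in packaging from the paper's proof: you pass through the regularized determinants themselves, invoking Lemma \ref{lem_reg_1} at both levels $m\in\{p,q\}$ (legitimate, since $\K_p(\HS)\subset\K_m(\HS)$ with $\|\cdot\|_m\le\|\cdot\|_p$, so the hypotheses hold at every level $m\ge p$) to identify each integral, up to one common sign, with the winding number of the continuous non-vanishing loop $t\mapsto{\det}_{m+1}\big(\Gamma(t)\big)$, and you then compare the two loops through the exact ratio ${\det}_{q+1}\big(\Gamma(t)\big)={\det}_{p+1}\big(\Gamma(t)\big)\exp\big(h(t)\big)$ with $h(t)=\sum_{m=p+1}^{q}\frac{(-1)^m}{m}\tr\big[(\Gamma(t)-1)^m\big]$ continuous and single-valued on $\S$, so the two winding numbers coincide. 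The paper argues instead directly on the integrands: using the unitarity of $\Gamma(t)$ it writes $\tr\big[(1-\Gamma)^{q}\Gamma^*\Gamma'\big]=\tr\big[(1-\Gamma)^{q-1}\Gamma^*\Gamma'\big]-\tr\big[(1-\Gamma)^{q-1}\Gamma'\big]$ and identifies the subtracted term, up to the factor $\frac1q$ and a sign, with the derivative of the scalar function $t\mapsto\tr\big[(1-\Gamma(t))^{q}\big]$, which is continuous on all of $\S$, so its integral around the circle vanishes; iterating from $q$ down to $p$ gives the claim. At bottom the two mechanisms agree (your $h'$ is exactly the sum of the paper's correction terms), but the determinant/winding-number formulation is a valid alternative.

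The one step that fails as written is your integrability argument for the level-$q$ integrand, which you need both for the statement to make sense and for the fundamental-theorem/telescoping step at level $q$. The domination you propose controls the trace norm $\big\|(1-\Gamma(t))^{q}\Gamma(t)^*\Gamma'(t)\big\|_1$ by $\|1-\Gamma(t)\|^{q-p}$ times $\big\|(1-\Gamma(t))^{p}\Gamma(t)^*\Gamma'(t)\big\|_1$; that inequality is true, but the hypothesis of the lemma is integrability of the scalar $t\mapsto\tr\big[i(1-\Gamma(t))^{p}\Gamma(t)^*\Gamma'(t)\big]$, not of its trace norm. Since $|\tr X|\le\|X\|_1$ goes the wrong way, and there is no inequality of the form $|\tr(BX)|\le\|B\|\,|\tr X|$, integrability of the trace at level $p$ gives no control on the trace norm, and your bound does not yield integrability at level $q$ (indeed, the whole point of the regularization is that the trace can be far smaller than the trace norm). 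The repair stays inside your own framework: on each arc the difference of the two integrands equals $h'(t)=\sum_{m=p+1}^{q}(-1)^m\tr\big[(\Gamma(t)-1)^{m-1}\Gamma'(t)\big]$, and each term is, up to a constant, the derivative of the continuous periodic scalar function $t\mapsto\tr\big[(\Gamma(t)-1)^{m}\big]$; integrability at level $q$ and the vanishing of $\int_\S h'(t)\,\d t$ should be extracted from this identity together with the continuity of these functions on the whole circle, which is precisely the mechanism of the paper's proof, rather than from a norm domination.
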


The proof of this statement is again provided in Section \ref{subsec_app_regul}.
Clearly, Theorem \ref{Levnous} is an application of the previous lemma with $p=2$.

Before ending this section, let us add one illustrative example. In it, the problem does
not come from the computation of a determinant, but from an integrability condition. More precisely,
for any $a,b>0$ we consider $\varphi_{a,b}: [0,1]\to \R$ defined by
$$
\varphi_{a,b}(x)=x^a \sin\(\pi x^{-b}/2\), \qquad x\in [0,1].
$$
Let us also set $\Gamma_{a,b}:[0,2\pi]\to \T$ by
$$
\Gamma_{a,b}(x):=\e^{- 2\pi i \varphi_{a,b}(x/2\pi)}.
$$
Clearly, $\Gamma_{a,b}$ is continuous on $[0,2\pi]$ with $\Gamma_{a,b}(0)=\Gamma_{a,b}(2\pi)$, and thus can be considered as an element of $C(\S)$.
In addition, $\Gamma_{a,b}$ is unitary, and thus there exists an element $[\Gamma_{a,b}]_1$ in $K_1\big(C(\S)\big)$.
One easily observes that the equality $[\e^{-2\pi i\;\! {\rm id}}]_1 = [\Gamma_{a,b}]_1$ holds, meaning that the equivalence class
$[\Gamma_{a,b}]$ contains the simpler function $x\mapsto \e^{-2\pi i x}$.
However, the same equivalence class also contains some functions which are continuous but not differentiable
at a finite number of points, or even wilder continuous functions.

Now, the computation of the winding number of any of these functions
can be performed by a topological argument, and one obtains
$$
\Wind_t(\Gamma_{a,b}):=\varphi_{a,b}(1)-\varphi_{a,b}(0)=1.
$$
Here, we have added an index $t$ for emphasizing that this number is computed \emph{topologically}.
On the other hand, if one is interested in an explicit analytical formula for this winding number, one immediately faces some
troubles. Namely, let us first observe that for $x\neq 0$
$$
\varphi_{a,b}'(x)=ax^{a-1}\sin\(\pi x^{-b}/2\) - \frac{b\pi}{2}x^{a-b-1}\cos\(\pi x^{-b}/2\).
$$
Clearly, the first term is integrable for any $a,b>0$ while the second one is integrable only if $a>b$.
Thus, in such a case it is natural to set
$$
\Wind_a(\Gamma_{a,b}):=\frac{1}{2\pi}\int_0^{2\pi} i\Gamma_{a,b}(x)^* \Gamma_{a,b}'(x)\d x
=\frac{1}{2\pi}\int_0^{2\pi}\varphi_{a,b}'(x/2\pi)\d x
=\int_0^{1}\varphi_{a,b}'(x)\d x
$$
and this formula is well defined, even if one looks at each term of $\varphi_{a,b}'$ separately.
Note that in this formula, the index $a$ stands for \emph{analytic}.
Finally, in the case $b\geq a>0$ the previous formula is not well defined if one looks at both terms separately,
but one can always find $p\in \N$ such that $(p+1)a>b$. Then one can set
\begin{align*}
\Wind_r(\Gamma_{a,b})& :=\frac{1}{2\pi}\int_0^{2\pi} \(1-\Gamma_{a,b}(x)\)^{p} i\Gamma_{a,b}(x)^*\Gamma_{a,b}'(x)\d x \\
& = \frac{1}{2\pi}\int_0^{2\pi} \(1-\Gamma_{a,b}(x)\)^{p} \varphi_{a,b}'(x/2\pi)\d x \\
& = \int_0^{1} \(1-\Gamma_{a,b}(2\pi x)\)^{p} \varphi_{a,b}'(x)\d x ,
\end{align*}
and this formula is well defined, even if one looks at each term of $\varphi_{a,b}'$ separately.
Note that here the index $r$ stands for \emph{regularized}.
Clearly, the value which can be obtained from these formulas is always equal to $1$.

\section{Schr\"odinger operators on $\R^2$}\label{Sec_2D}
\setcounter{equation}{0}

In this section, we simply provide an explicit formula for the wave operators
in the context of Schr\"o\-din\-ger operators on $\R^2$. The statement is very similar
to the one presented in Section \ref{subsec_Wave3}, and its proof is based on the same scheme.
We refer to \cite{RT_R2} for a more detailed presentation of the result and for its
proof. Let us however mention that some technicalities have not been considered in $\R^2$,
and therefore our main result applies only in the absence of $0$-energy bound state
or $0$-energy resonance.

Let us be more precise about the framework. In the Hilbert space
$\ltwo(\R^2)$ we consider the Schr\"o\-din\-ger operator $H_0:=-\Delta$ and the perturbed operator $H:=-\Delta+V$,
with a potential $V\in\linf(\R^2;\R)$ decaying fast enough at infinity. In such a situation, it is well-known that
the wave operators $W_\pm$ for the pair $(H,H_0)$
exist and are asymptotically complete. As a consequence, the scattering operator $S:=W_+^* W_-$ is a unitary operator.

\begin{Theorem}\label{Java2}
Suppose that $V\in \linf(\R^2)$ is real and satisfies $|V(x)|\le{\rm Const.}\;\!(1+|x|)^{-\rho}$ with
$\rho>11$ for almost every $x\in\R^2$, and assume that $H$ has neither eigenvalues
nor resonances at $0$-energy. Then, one has in $\B\big(\ltwo(\R^2)\big)$ the equalities
\begin{equation*}
W_-=1+\12\big(1+\tanh(\pi A/2)\big)[S-1]+K
\qquad\hbox{and}\qquad
W_+=1+\12\big(1-\tanh(\pi A/2)\big)[S^*-1]+K',
\end{equation*}
with $A$ the generator of dilations in $\ltwo(\R^2)$ and $K,K'\in\K\big(\ltwo(\R^2)\big)$.
\end{Theorem}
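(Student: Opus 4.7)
The proof follows the same stationary-scattering scheme used to establish Theorem \ref{Java}, with the three-dimensional analytic ingredients replaced by their two-dimensional counterparts. The plan is first to rewrite $W_--1$ in the spectral representation of $H_0=-\Delta$ on $\R^2$ as a double integral involving the $T$-operator $T(\lambda+i0)=V(1+R_0(\lambda+i0)V)^{-1}$ and the limiting trace operators $\F_0(\lambda)$; then to justify the interchange of the limit $\varepsilon\searrow 0$ with the inner integral over $\mu$; then to recognize the resulting kernel as $-2\pi i$ times the product of a bounded function of $A$ with the on-shell scattering matrix; and finally to transfer this function of $A$ past the multiplication operator at the cost of a compact error.

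Concretely, I would set $\HS:=\ltwo(\S^{1})$ and define $\F_0:\ltwo(\R^2)\to\ltwo(\R_+;\HS)$ by
\begin{equation*}
\big([\F_0 f](\lambda)\big)(\omega)=\textstyle\frac{1}{\sqrt{2}}\;\![\F f](\sqrt{\lambda}\;\!\omega),
\qquad f\in\SS(\R^2),\ \lambda\in\R_+,\ \omega\in\S^{1},
\end{equation*}
which diagonalizes $H_0$ as multiplication by $L$ and whose fibered components $\F_0(\lambda)$ are built from the circular trace $\gamma(\lambda)$. The starting point, as in \eqref{start}, is the identity
\begin{equation*}
\langle\F_0(W_--1)\F_0^*\varphi,\psi\rangle_\Hrond
=-\int_\R\d\lambda\,\lim_{\varepsilon\searrow0}\int_0^\infty\d\mu\,
\big\langle\{\F_0\;\!T(\lambda-i\varepsilon)\;\!\F_0^*\;\!\delta_\varepsilon(L-\lambda)\varphi\}(\mu),
(\mu-\lambda-i\varepsilon)^{-1}\psi(\mu)\big\rangle_\HS.
\end{equation*}

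The main obstacle is the two-dimensional analogue of Lemmas \ref{lem1}--\ref{lem2} and Proposition \ref{prop_on_sigma}. The Jacobian in polar coordinates produces the weight $\lambda^{0}$ instead of $\lambda^{1/2}$, so one has to track the behaviour of $\F_0(\lambda)$ as $\lambda\searrow 0$ and $\lambda\to\infty$ in the weighted Sobolev scale $\H^{s}_{t}(\R^2)$. The low-energy delicacy is well known and is precisely the reason for the stronger decay hypothesis $\rho>11$ and for the assumption that $H$ has neither a $0$-energy bound state nor a $0$-energy resonance: under these conditions the limiting absorption principle of \cite{JK79}-type extends to $\lambda=0$, the operator $(1+R_0(\lambda+i0)V)^{-1}$ is norm continuous on all of $\bRp$, and the maps
\begin{equation*}
\R_+\ni\lambda\mapsto\F_0(\lambda)\in\B(\H^{0}_{t},\HS),
\qquad
\R_+\ni\lambda\mapsto T(\lambda+i0)\F_0(\lambda)^{*}\in\B(\HS,\H_{\rho-t})
\end{equation*}
are continuous and bounded for a suitable window of $t$, which is what is needed to define the operators $M$ and $B$ as in \eqref{defdeM}--\eqref{defdeB} and to justify the Fubini-type exchange.

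Once this is done, the remaining step is purely algebraic: the kernel $\frac{\mu^{0}\lambda^{0}}{\mu-\lambda+i0}$ obtained after the exchange of limits is, via a Mellin computation identical in spirit to the one behind Theorem \ref{BigMama}, the integral kernel of the bounded function $-2\pi i\cdot\ud(1-\tanh(\pi A))$ of the generator of dilations on $\R_+$. This produces, in the spectral representation,
\begin{equation*}
\F_0(W_--1)\F_0^*=-2\pi i\;\!M\big\{\ud(1-\tanh(\pi A))\otimes 1\big\}B.
\end{equation*}
A commutator lemma of Cordes type, exactly as in Lemma \ref{Lemma_compact}, allows one to move the function of $A$ to the outside at the cost of a compact remainder, and the factor $-2\pi i\;\!MB$ is then identified with $\int^\oplus_{\R_+}(S(\lambda)-1)\,\d\lambda$ by comparison with the standard stationary formula for the scattering matrix. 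Undoing the spectral transformation and using the factor $\12$ from the change of generator $\F_0 A\F_0^*=-2A\otimes 1$ yields $W_-=1+\12(1+\tanh(\pi A/2))[S-1]+K$ with $K$ compact; the formula for $W_+$ then follows from $W_+=W_-\;\!S^*$ as in the proof of Theorem \ref{Java}.
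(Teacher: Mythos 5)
Your proposal is correct and takes essentially the same route as the paper, which establishes Theorem \ref{Java2} by running the stationary scheme of Section \ref{subsec_Wave3} in dimension two (the detailed analysis being the content of \cite{RT_R2}): the constant spectral weight, the kernel $\frac{1}{\mu-\lambda+i0}$ yielding $-2\pi i\cdot\12\big(1-\tanh(\pi A)\big)$ on $\ltwo(\R_+)$, the relation $\F_0 A\F_0^*=-2A\otimes 1$ converting this into $\12\big(1+\tanh(\pi A/2)\big)$, and the hypotheses $\rho>11$ and absence of $0$-energy eigenvalue or resonance entering exactly where you invoke them (boundedness and continuity of the analogues of $M$ and $B$ down to $\lambda=0$). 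Only note a sign slip in your starting identity: for $W_-$ it should involve $T(\lambda+i\varepsilon)$ and $(\mu-\lambda+i\varepsilon)^{-1}$, consistently with the $+i0$ boundary values you correctly use in all subsequent steps.
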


We stress that the absence of eigenvalues or resonances at $0$-energy is generic.
Their presence leads to slightly more complicated expressions and this has not been considered
in \cite{RT_R2}. On the other hand, we note that no spherical symmetry is imposed on $V$.
Note also that in the mentioned reference, an additional formula for $W_\pm$ which does not involve any compact remainder
(as in Theorem \ref{BigMama}) has been exhibited.
For this model, we do not deduce any topological Levinson's theorem, since this has
already been performed for the $3$-dimensional case, and since the exceptional case
has not yet been fully investigated. Let us however mention that similar results
already exist in the literature, but that the approaches are completely different.
We refer to \cite{BGD88,EG12_0,EG12_1,JY02,Sch05,Yaj99} for more information on this model and for related results.

\section{A.-B. model and higher degree Levinson's theorem}\label{Sec_AB}
\setcounter{equation}{0}

In this section, we first introduce the Aharonov-Bohm model, and discuss some of the results obtained in \cite{PR}.
In order to extend the discussion about index theorems to index theorems for families, we then provide
some information on cyclic cohomology and explain how it can be applied to this model. This material mainly is borrowed from
\cite{KPR} to which we refer for more information.

\subsection{The Aharonov-Bohm model}\label{sec_AB_model}

Let us denote by $\H$ the Hilbert space $\ltwo(\R^2)$.
For any $\alpha \in (0,1)$, we set $A_\alpha: \R^2\setminus\{0\} \to \R^2$ by
\begin{equation*}
A_\alpha(x,y)= -\alpha \left(\frac{-y}{x^2+y^2},
\frac{x}{x^2+y^2}\right),
\end{equation*}
which formally corresponds to the magnetic field $B=\alpha\delta$ ($\delta$ is the Dirac
delta function), and consider the operator
\begin{equation*}
H_\alpha:=(-i\nabla -A_\alpha)^2,
\qquad \dom(H_\alpha)=C_{\rm c}^\infty\big(\R^2\setminus\{0\}\big)\ .
\end{equation*}
Here $C_{\rm c}^\infty(\Xi)$ denotes the set of smooth functions on $\Xi$ with compact support.
The closure of this operator in $\H$, which is denoted by the same symbol,
is symmetric and has deficiency indices $(2,2)$.

We briefly recall the parametrization of the self-adjoint extensions of $H_\alpha$ from \cite{PR}.
Some elements of the domain of the adjoint operator $H_\alpha^*$ admit singularities
at the origin. For dealing with them, one defines four linear functionals $\Phi_0$, $\Phi_{-1}$, $\Psi_0$, $\Psi_{-1}$
on $\dom(H_\alpha^*)$ such that for $\f\in\dom(H_\alpha^*)$ one has, with $\theta \in [0,2\pi)$ and $r \to 0_+$,
\[
2\pi \f(r\cos\theta,r\sin\theta)= \Phi_0(\f)r^{-\alpha}+\Psi_0(\f) r^\alpha
+e^{-i\theta} \Big(
\Phi_{-1}(\f)r^{\alpha-1}+\Psi_{-1}(\f) r^{1-\alpha}
\Big) +O(r).
\]
The family of all self-adjoint extensions of the operator $H_\alpha$ is then indexed by two matrices
$C,D \in \MM_2(\C)$ which satisfy the following conditions:
\begin{equation}\label{eq-mcd}
\text{(i) $CD^*$ is self-adjoint,\qquad  (ii) $\det(CC^* + DD^*)\neq 0$,}
\end{equation}
and the corresponding extensions $H^{\CD}_\alpha$ are the restrictions of $H_\alpha^*$
onto the functions $\f$ satisfying the boundary conditions
\[
C \left(\begin{matrix}
\Phi_0(\f)\\ \Phi_{-1}(\f)
\end{matrix}\right)
=2D \left(\begin{matrix}
\alpha \Psi_0(\f)\\  (1-\alpha)\Psi_{-1}(\f)
\end{matrix}\right).
\]
For simplicity, we call \emph{admissible} a pair of
matrices $(C,D)$  satisfying the conditions mentioned in \eqref{eq-mcd}.

\begin{Remark}\label{1to1}
The parametrization of the self-adjoint extensions of $H_\alpha$ with all admissible
pairs $(C,D)$ is very convenient but non-unique.
At a certain point, it will be useful to have a one-to-one parametrization of all
self-adjoint extensions.
So, let us consider $\U_2(\C)$ (the group of unitary $2\times 2$ matrices) and set
\begin{equation*}
C(U) := {\textstyle \frac{1}{2}}(1-U) \quad \hbox{ and }
\quad D(U) = {\textstyle \frac{i}{2}}(1+U).
\end{equation*}
It is easy to check that $C(U)$ and $D(U)$ satisfy both conditions \eqref{eq-mcd}.
In addition, two different elements $U,U'$ of $\U_2(\C)$ lead to two different self-adjoint
operators $H_\alpha^{C(U)\;\!D(U)}$ and $H_\alpha^{C(U')\;\!D(U')}$, {\it cf.}~\cite{Ha}.
Thus, without ambiguity we can write $H_\alpha^U$ for the operator $H_\alpha^{C(U)\;\!D(U)}$.
Moreover, the set $\{H_\alpha^U\mid U \in \U_2(\C)\}$ describes all
self-adjoint extensions of $H_\alpha$.
Let us also mention that the normalization of the above maps has been chosen such that
$H_\alpha^{-1}\equiv H_\alpha^{10}= H_\alpha^{\AB}$ which corresponds to the standard Aharonov-Bohm operator
studied in \cite{AB,Rui}.
\end{Remark}

For the spectral theory, let us mention that the essential spectrum of $H^{\CD}_\alpha$
is absolutely continuous and covers the positive half-line $[0,+\infty)$.
On the other hand, the discrete spectrum consists in at most two negative eigenvalues.
More precisely, the number of negative eigenvalues of $H^{\CD}_\alpha$ coincides with the number
of negative eigenvalues of the matrix $CD^*$.

\subsubsection{Wave and scattering operators}\label{sec_AB_wave}

One of the main results of \cite{PR} is an explicit description of the wave operators.
We shall recall this result below, but we first need
to introduce the decomposition of the Hilbert space $\H$ with respect to a special basis.
For any $m \in \Z$, let $\phi_m$ be the complex function defined by
$[0,2\pi)\ni \theta \mapsto \phi_m(\theta):= \frac{e^{im\theta}}{\sqrt{2\pi}}$.
One has then the canonical isomorphism
\begin{equation}\label{decomposition}
\H \cong \bigoplus_{m \in \Z} \H_r \otimes [\phi_m] \ ,
\end{equation}
where $\H_r:=\ltwo(\R_+, r\;\!\d r)$ and $[\phi_m]$ denotes the one dimensional space spanned by $\phi_m$.
For shortness, we write $\H_m$ for $\H_r \otimes [\phi_m]$, and often consider it as a subspace of $\H$.
Let us still set
\begin{equation}\label{eq_Hint}
\H_\2:=\H_0\oplus\H_{-1}
\end{equation}
which is clearly isomorphic to $\H_r\otimes \C^2$.

Let us also recall that the unitary dilation group
$\{U_t\}_{t \in \R}$ in $\H$ is defined on any $\f \in \H$ and $x \in \R^2$ by
$[U_t \f](x) = e^t \f(e^t x)$. Its self-adjoint generator is still denoted by $A$.
It is easily observed that this group as well as its generator leave each subspace $\H_m$ invariant.

Let us now consider the wave operators
\begin{equation*}
W^{\CD}_-\equiv W_-(H_\alpha^\CD,H_0):=s-\lim_{t\to - \infty}e^{itH^{\CD}_\alpha }\;\!e^{-itH_0 }\ .
\end{equation*}
where $H_0:=-\Delta$ denotes the Laplace operator on $\R^2$.
It is well-known that for any admissible pair $(C,D)$ the operator $W_-^{\CD}$ is reduced by
the decomposition $\H=\H_\2 \oplus \H_\2^\bot$ and that
$W_-^{\CD}|_{\H_\2^\bot} = W_-^{\AB}|_{\H_\2^\bot}$.
The restriction to $\H_\2^\bot$ is further reduced by the decomposition \eqref{decomposition}
and it is proved in \cite[Prop.~11]{PR} that the channel wave operators satisfy for each $m \in \Z$,
\begin{equation*}
W_{-,m}^{\AB} = \varphi_m^-(A)\ ,
\end{equation*}
with $\varphi_m^-$ explicitly given for $x \in \R$ by
\begin{equation*}
\varphi^-_m(x):=e^{i\delta_m^\alpha}\;\!
\frac{\Gamma\big(\frac{1}{2}(|m|+1+ix)\big)}{\Gamma\big(\frac{1}{2}(|m|+1-ix)\big)}
\;\!
\frac{\Gamma\big(\frac{1}{2}(|m+\alpha|+1-ix)\big)}{\Gamma\big(\frac{1}{2}(|m+\alpha|+1+ix)\big)}
\end{equation*}
and
\begin{equation*}
\delta_m^\alpha = \hbox{$\frac{1}{2}$}\pi\big(|m|-|m+\alpha|\big)
=\left\{\begin{array}{rl}
-\hbox{$\frac{1}{2}$}\pi\alpha & \hbox{if }\ m\geq 0 \\
\hbox{$\frac{1}{2}$}\pi\alpha & \hbox{if }\ m< 0
\end{array}\right.\ .
\end{equation*}
Note that here, $\Gamma$ corresponds to the usual Gamma function.
It is also proved in \cite[Thm.~12]{PR} that
\begin{equation*}
W_-^{\CD}|_{\H_\2} = \left(
\begin{matrix}\varphi^-_0(A) & 0 \\ 0 & \varphi^-_{-1}(A) \end{matrix}\right) +
\left(
\begin{matrix}\tilde{\varphi}_0(A) & 0 \\ 0 & \tilde{\varphi}_{-1}(A) \end{matrix}\right)\;\!
\widetilde{S}^{\CD}_\alpha\big(\sqrt{H_0}\big)
\end{equation*}
with $\tilde{\varphi}_m(x)$ given for $m \in \{0,-1\}$ by
\begin{eqnarray*}
\frac{1}{2\pi}\;\!e^{-i\pi|m|/2} \;\!e^{\pi x/2}\;\!
\frac{\Gamma\big(\frac{1}{2}(|m|+1+ix)\big)}{\Gamma
\big(\frac{1}{2}(|m|+1-ix)\big)}  \Gamma\big(\ud(1+|m+\alpha|-ix)\big)
\;\!\Gamma\big(\ud(1-|m+\alpha|-ix)\big)\ ,
\end{eqnarray*}
and with the function $\widetilde{S}^{\CD}_\alpha(\cdot)$ given
for $\lambda \in \R_+$ by
\begin{eqnarray*}
\widetilde{S}_\alpha^{\CD}(\lambda)
&:=& 2i\sin(\pi\alpha)
\left(\begin{matrix}
\frac{\Gamma(1-\alpha)\;\!e^{-i\pi\alpha/2}}{2^\alpha}\;\!\lambda^{\alpha} & 0 \\
0 & \frac{ \Gamma(\alpha)\;\!e^{-i\pi(1-\alpha)/2}}{2^{1-\alpha}}\;\!\lambda^{(1-\alpha)}
\end{matrix}\right)  \\
&&\cdot \left[ D\,
\left(\begin{matrix}
\frac{\Gamma(1-\alpha)^2 \;\!e^{ -i\pi\alpha}}{4^\alpha}\;\!\lambda^{2\alpha} & 0 \\
0& \frac{\Gamma(\alpha)^2\;\! e^{ -i\pi(1-\alpha)}}{4^{1-\alpha}}\;\!\lambda^{2(1-\alpha)}
\end{matrix}\right)
+\frac{\pi}{2\sin(\pi\alpha)}C\right]^{-1}\!\!\! D \\
&& \cdot
\left(\begin{matrix}
\frac{ \Gamma(1-\alpha)\;\!e^{-i\pi\alpha/2}}{2^\alpha}\;\!\lambda^{\alpha} & 0 \\
0 & -\frac{ \Gamma(\alpha)\;\!e^{-i\pi(1-\alpha)/2}}{2^{1-\alpha}}\;\!\lambda^{(1-\alpha)}
\end{matrix}\right)\ .
\end{eqnarray*}

Clearly, the functions $\varphi^-_m$ and $\tilde{\varphi}_m$ are continuous on $\R$. Furthermore,
these functions admit limits at $\pm \infty$: $\varphi^-_m(-\infty)=1$, $\varphi^-_m(+\infty)=e^{2i\delta^\alpha_m}$,
$\tilde{\varphi}_m(-\infty)=0$ and $\tilde{\varphi}_m(+\infty)=1$.
On the other hand, the relation between the usual scattering operator $S^{\CD}_\alpha:=\big(W^{\CD}_+\big)^* W^{\CD}_-$
and the function $\widetilde{S}^{\CD}_\alpha(\cdot)$ is provided by the formulas
\begin{equation*}
S_\alpha^{\CD}|_{\H_\2}=
S_\alpha^{\CD}(\sqrt{H_0})
\quad\hbox{with}\quad
S_\alpha^{\CD}(\lambda):=\left(\begin{matrix}
e^{-i\pi\alpha} & 0 \\
0 & e^{i\pi\alpha}
\end{matrix}\right)
+ \widetilde{S}_\alpha^{\CD}(\lambda)\ .
\end{equation*}

Let us now state a result which has been formulated in a more precise form in \cite[Prop.~14]{PR}.

\begin{Proposition}\label{propSurS}
The map
\begin{equation*}
\R_+\ni \lambda \mapsto S^{\CD}_\alpha(\lambda) \in \U_2(\C)
\end{equation*}
is continuous and has explicit asymptotic values for $\lambda=0$ and $\lambda = +\infty$
which depend on $C,D$ and $\alpha$.
\end{Proposition}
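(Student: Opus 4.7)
The strategy is to read off the claim directly from the explicit formula for $\widetilde S_\alpha^{\CD}(\lambda)$ displayed above, combined with the abstract fact that the scattering operator $S_\alpha^{\CD}$ is unitary and commutes with $H_0$, so that its spectral decomposition over $\R_+$ produces fibers $S_\alpha^{\CD}(\lambda)$ which are unitary for almost every $\lambda$.

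For the continuity, the only non-trivial point is the invertibility, for every $\lambda \in \R_+$, of the bracketed matrix
$$
M^{\CD}(\lambda) := D\,\mathrm{diag}\!\Big(c_1\lambda^{2\alpha},\, c_2\lambda^{2(1-\alpha)}\Big) + \tfrac{\pi}{2\sin(\pi\alpha)}\, C,
$$
where $c_1,c_2$ are the explicit Gamma-function constants appearing in the formula. First I would observe that $\lambda \mapsto M^{\CD}(\lambda)$ is manifestly continuous on $\R_+$ and that $\det M^{\CD}(\lambda) \neq 0$ for every $\lambda \in \R_+$: otherwise $\widetilde S_\alpha^{\CD}(\lambda)$ would diverge near some $\lambda_0$, contradicting the fact that it coincides a.e.\ with the bounded, a.e.\ unitary $S_\alpha^{\CD}(\lambda) - \mathrm{diag}(e^{-i\pi\alpha},e^{i\pi\alpha})$. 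Continuity of $S_\alpha^{\CD}(\lambda)$ on $\R_+$ then follows from the continuity of matrix inversion together with the continuity of the outer scalar factors $\lambda^\alpha$ and $\lambda^{1-\alpha}$.

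For the two asymptotic values, since $\alpha \in (0,1)$ the exponents $2\alpha$ and $2(1-\alpha)$ are both positive. As $\lambda \to 0_+$, the diagonal matrix multiplying $D$ in $M^{\CD}(\lambda)$ tends to zero, while the outer diagonal factors on either side of the inverse vanish like $\lambda^\alpha$ and $\lambda^{1-\alpha}$. When $C$ is invertible, $M^{\CD}(\lambda)^{-1} \to \tfrac{2\sin(\pi\alpha)}{\pi}C^{-1}$ and a short computation gives $\widetilde S_\alpha^{\CD}(\lambda) \to 0$, so $S_\alpha^{\CD}(0) = \mathrm{diag}(e^{-i\pi\alpha}, e^{i\pi\alpha})$. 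When $C$ is not invertible, the vanishing outer factors must be absorbed into the inverse by conjugating $M^{\CD}(\lambda)$ with an appropriate diagonal matrix, reducing the computation to that of the limit of $\bigl(\text{finite} + o(1)\bigr)^{-1}$; this is a finite case analysis controlled by $\ker C$. The limit $\lambda \to +\infty$ is handled symmetrically by factoring the diagonal $\lambda$-matrix out of the bracket, so that the limit is now governed by $D$ rather than $C$, with a parallel case analysis for $\ker D$.

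The main obstacle is not the continuity or the bare existence of the two limits, but the bookkeeping required in the degenerate cases where either $C$ or $D$ is singular, so that the naive limit of the inverse is undefined. Here the admissibility condition $\det(CC^* + DD^*) \neq 0$ ensures that $\ker C \cap \ker D = \{0\}$, which is precisely the information needed to choose a basis of $\C^2$ adapted to the degeneracies and to extract a finite, invertible limit in each of the two asymptotic regimes. Unitarity of the two limiting matrices is then automatic: $S_\alpha^{\CD}(\lambda)S_\alpha^{\CD}(\lambda)^* = 1$ holds for every $\lambda \in \R_+$ by the continuity argument above, and this relation passes to the limit.
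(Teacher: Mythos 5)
Your overall strategy coincides with the one behind the paper's statement: the paper itself gives no proof of Proposition \ref{propSurS} but defers to \cite[Prop.~14]{PR}, explicitly declining to reproduce the ``numerous cases'', and that reference proceeds exactly as you propose, by reading continuity and the two limits off the explicit formula for $\widetilde S^{\CD}_\alpha(\lambda)$ and performing a case analysis governed by the degeneracies of $C$ and $D$. In that sense your plan is the right one; the degenerate-case ``bookkeeping'' you postpone is precisely the content worked out there (and it genuinely matters: when $C$ or $D$ is singular the limits differ from $\mathrm{diag}(e^{-i\pi\alpha},e^{i\pi\alpha})$ and $\mathrm{diag}(e^{i\pi\alpha},e^{-i\pi\alpha})$, which is why the proposition only asserts that the asymptotic values ``depend on $C$, $D$ and $\alpha$''). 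Also, even in the invertible cases the limit $\lambda\to+\infty$ is not purely a matter of factoring out the large diagonal matrix: the outer factors grow like $\lambda^{\alpha}$ and $\lambda^{1-\alpha}$, so one must track the rate at which the off-diagonal entries of $[\,\cdot\,]^{-1}D$ decay; it works out, but it is part of the computation, not a symmetry.

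The one step whose justification does not hold up is the invertibility of $M^{\CD}(\lambda)$ for every $\lambda>0$. Your contradiction argument presumes that $\det M^{\CD}(\lambda_0)=0$ forces $\widetilde S^{\CD}_\alpha$ to blow up near $\lambda_0$; but the formula contains $M^{\CD}(\lambda)^{-1}D$, and when $D$ is singular the numerator $\mathrm{adj}\big(M^{\CD}(\lambda)\big)D$ can degenerate together with the determinant, so boundedness of the a.e.\ defined unitary fibers yields no contradiction --- and this happens exactly in the degenerate regime you flag elsewhere. The clean argument is algebraic and uses \emph{both} admissibility conditions: the diagonal matrix multiplying $D$ has entries proportional to $e^{-i\pi\alpha}$ and $e^{-i\pi(1-\alpha)}$, hence strictly negative imaginary part for $\alpha\in(0,1)$ and $\lambda>0$. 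If $M^{\CD}(\lambda)^*u=0$, set $x:=D^*u$; since $CD^*=DC^*$ is self-adjoint, $\langle u,DC^*u\rangle$ is real, and the equation forces $\langle x,(\Im\Lambda(\lambda))x\rangle=0$ with $\Im\Lambda(\lambda)$ negative definite, so $D^*u=0$, then $C^*u=0$, and finally $u=0$ because $\det(CC^*+DD^*)\neq0$. Relatedly, note that $\det(CC^*+DD^*)\neq0$ is equivalent to $\Ker C^*\cap\Ker D^*=\{0\}$, not to $\Ker C\cap\Ker D=\{0\}$; the latter is also true for admissible pairs, but its proof uses the self-adjointness of $CD^*$ as well. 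With the invertibility repaired in this way, your continuity argument and the scaling analysis of the limits (governed by $C$ at $\lambda=0$ and by $D$ at $\lambda=+\infty$) go through as planned.
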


The asymptotic values $S^{\CD}_\alpha(0)$ and $S^{\CD}_\alpha(+\infty)$
are explicitly provided in the statement of \cite[Prop.~14]{PR}, but numerous cases have to be considered.
For simplicity, we do not provide these details here.
By summarizing the information obtained so far, one infers that:

\begin{Theorem}
For any admissible pair $(C,D)$ the following equality holds:
\begin{equation}\label{yoyo}
W_-^{\CD}|_{\H_\2} = \left(
\begin{matrix}\varphi^-_0(A) & 0 \\ 0 & \varphi^-_{-1}(A) \end{matrix}\right) +
\left(
\begin{matrix}\tilde{\varphi}_0(A) & 0 \\ 0 & \tilde{\varphi}_{-1}(A) \end{matrix}\right)
\left[S^{\CD}_\alpha\big(\sqrt{H_0}\big)-\left(\begin{matrix}
e^{-i\pi\alpha} & 0 \\
0 & e^{i\pi\alpha}
\end{matrix}\right)\right],
\end{equation}
with $\varphi^-_0, \varphi^-_1,\tilde\varphi_0,\tilde\varphi_1\in C([-\infty,\infty])$
and with $S^{\CD}_\alpha\in C([0,+\infty])$.
\end{Theorem}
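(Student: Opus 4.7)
The plan is to assemble the theorem from pieces that have already appeared in the excerpt, so that the proof is mainly a bookkeeping exercise once the right substitution is made. First I would recall that \cite[Thm.~12]{PR} provides the restriction of the wave operator to $\H_\2$ in the form
\begin{equation*}
W_-^{\CD}|_{\H_\2} = \left(\begin{matrix}\varphi^-_0(A) & 0 \\ 0 & \varphi^-_{-1}(A) \end{matrix}\right) +
\left(\begin{matrix}\tilde{\varphi}_0(A) & 0 \\ 0 & \tilde{\varphi}_{-1}(A) \end{matrix}\right)\;\!
\widetilde{S}^{\CD}_\alpha\big(\sqrt{H_0}\big),
\end{equation*}
with $\widetilde{S}^{\CD}_\alpha$ the explicit $2\times2$ matrix-valued function recalled in the excerpt. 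This formula is the starting point, so the only thing to do to obtain \eqref{yoyo} is to trade $\widetilde{S}^{\CD}_\alpha$ for the genuine scattering matrix $S^{\CD}_\alpha$.

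Next I would invoke the identity
\begin{equation*}
\widetilde{S}^{\CD}_\alpha(\lambda) = S^{\CD}_\alpha(\lambda) - \left(\begin{matrix}
e^{-i\pi\alpha} & 0 \\
0 & e^{i\pi\alpha}
\end{matrix}\right),
\end{equation*}
which is precisely the defining relation between $S^{\CD}_\alpha(\cdot)$ and $\widetilde{S}^{\CD}_\alpha(\cdot)$ displayed just above Proposition \ref{propSurS}. Substituting this into the expression above and applying the functional calculus for $\sqrt{H_0}$ to the (bounded, diagonal) constant matrix yields exactly the formula \eqref{yoyo}.

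It remains to verify the continuity/boundary statements: that $\varphi^-_0,\varphi^-_{-1},\tilde\varphi_0,\tilde\varphi_{-1}\in C([-\infty,\infty])$ and that $S^{\CD}_\alpha\in C([0,+\infty])$. The first claim follows directly from the closed-form expressions recalled in the excerpt, since both functions are manufactured from ratios of Gamma functions of linear arguments in $ix$: their continuity on $\R$ is automatic, and the limits at $\pm\infty$ were listed explicitly ($\varphi^-_m(-\infty)=1$, $\varphi^-_m(+\infty)=e^{2i\delta_m^\alpha}$, $\tilde\varphi_m(-\infty)=0$, $\tilde\varphi_m(+\infty)=1$). The second claim is the content of Proposition \ref{propSurS}, which also provides the existence of the two endpoint limits $S^{\CD}_\alpha(0)$ and $S^{\CD}_\alpha(+\infty)$.

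No serious obstacle is expected: the whole argument is a one-line algebraic substitution followed by quoting the continuity properties already established. The only subtlety worth emphasizing in the write-up is that the matrix $\operatorname{diag}(e^{-i\pi\alpha},e^{i\pi\alpha})$ is precisely the free contribution $S^{\CD=0}_\alpha$ coming from the Aharonov--Bohm phase shift, so that the bracket in \eqref{yoyo} measures the deviation of $S^{\CD}_\alpha$ from this universal background; this observation is what makes the formula well adapted to the subsequent algebraic/topological analysis of the winding number carried out in Section \ref{sec_ped}.
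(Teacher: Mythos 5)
Your proposal is correct and matches the paper's own route: the theorem is obtained exactly by combining the formula of \cite[Thm.~12]{PR} with the defining relation $S^{\CD}_\alpha(\lambda)=\operatorname{diag}(e^{-i\pi\alpha},e^{i\pi\alpha})+\widetilde{S}^{\CD}_\alpha(\lambda)$, and then quoting the stated limits of $\varphi^-_m$, $\tilde\varphi_m$ at $\pm\infty$ together with Proposition \ref{propSurS} for the continuity of $S^{\CD}_\alpha$ on $[0,+\infty]$. Your closing remark that the subtracted diagonal matrix is the $D=0$ (pure Aharonov--Bohm) scattering matrix on $\H_\2$ is also accurate and consistent with the explicit formula for $\widetilde{S}^{\CD}_\alpha$, which vanishes when $D=0$.
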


Based on this result and on the content of Section \ref{sec_Kth}, one could easily deduce an index type theorem.
However, we prefer to come back to an \emph{ad hoc} approach, which looks more like the approach followed
for the baby model. Its interest is that individual contributions to the winding number can be computed,
and the importance of each of them is thus emphasized. A more conceptual (and shorter) proof
will be provided in Section \ref{sec_higher}.

\subsection{Levinson's theorem, the pedestrian approach}\label{sec_ped}

Let us start by considering again the expression \eqref{yoyo} for the operator $W_-^{\CD}|_{\H_\2}$.
Since the matrix-valued functions defining this operator have limits at $-\infty$ and $+\infty$, respectively at $0$ and $+\infty$,
one can define the quadruple $(\Gamma_1,\Gamma_2,\Gamma_3,\Gamma_4)$, with
$\Gamma_j$ given for $x \in \R$ and $\lambda\in \R_+$ by
\begin{eqnarray}
\nonumber \Gamma_1(x)\equiv \Gamma_1(C,D,\alpha,x)&:=&
\left(\begin{matrix}\varphi^-_0(x) & 0 \\ 0 & \varphi^-_{-1}(x) \end{matrix}\right) +
\left(\begin{matrix}\tilde{\varphi}_0(x) & 0 \\ 0 & \tilde{\varphi}_{-1}(x) \end{matrix}\right)
\widetilde{S}^{\CD}_\alpha(0)\ ,\\
\label{Gma2}\Gamma_2(\lambda)\equiv \Gamma_2(C,D,\alpha,\lambda)&:=&S^{\CD}_\alpha(\lambda)\ ,\\
\nonumber \Gamma_3(x)\equiv\Gamma_3(C,D,\alpha,x)&:=&
\left(\begin{matrix}\varphi^-_0(x) & 0 \\ 0 & \varphi^-_{-1}(x) \end{matrix}\right) +
\left(\begin{matrix}\tilde{\varphi}_0(x) & 0 \\ 0 & \tilde{\varphi}_{-1}(x) \end{matrix}\right)
\widetilde{S}^{\CD}_\alpha(+\infty)\ ,\\
\nonumber \Gamma_4(\lambda)\equiv\Gamma_4(C,D,\alpha,\lambda)&:=& 1.
\end{eqnarray}

Clearly, $\Gamma_1(\cdot)$ and $\Gamma_3(\cdot)$ are continuous functions on $[-\infty,\infty]$
with values in $\U_2(\C)$, and $\Gamma_2(\cdot)$ and $\Gamma_4(\cdot)$ are continuous functions on
$[0,\infty]$ with values in $\U_2(\C)$.
By mimicking the approach of Section \ref{sec_baby}, one sets
$\square = B_1\cup B_2 \cup B_3 \cup B_4$
with $B_1 = \{0\}\times[-\infty,+\infty]$, $B_2 =[0,+\infty]\times\{+\infty\}$, $B_3 = \{+\infty\}\times[-\infty,+\infty]$,
and $B_4=[0,+\infty]\times\{-\infty\}$,
and observes that the function $\Gamma_\alpha^{\CD}= (\Gamma_1,\Gamma_2,\Gamma_3,\Gamma_4)$
belongs to $C\big(\square; \U_2(\C)\big)$.
As a consequence, the winding number $\Wind\big(\Gamma_\alpha^{\CD}\big)$ based on the map
\begin{equation*}
\square \ni \xi \mapsto \det \big[\Gamma_\alpha^{\CD}(\xi)\big]\in \T
\end{equation*}
is well defined, and our aim is to relate it to the spectral properties of $H^{\CD}_\alpha$.

The following statement is our Levinson's type theorem for this model:

\begin{Theorem}\label{Thm_Lev_AB}
For any $\alpha\in (0,1)$ and any admissible pair $(C,D)$ one has
\begin{equation*}
\Wind \big(\Gamma_\alpha^{\CD}\big) = \# \sigma_p(H_\alpha^{\CD}).
\end{equation*}
\end{Theorem}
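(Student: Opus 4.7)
The plan is to recast Theorem \ref{Thm_Lev_AB} as an instance of the abstract topological Levinson's theorem (Theorem \ref{thm_Lev}) applied on the subspace $\H_\2 = \H_0 \oplus \H_{-1}$. The first observation is that every bound state of $H_\alpha^{\CD}$ lives in $\H_\2$: on $\H_\2^\bot$ the operator $H_\alpha^{\CD}$ coincides with the standard Aharonov--Bohm operator $H_\alpha^{\AB}$, which is nonnegative, while the (at most two) negative eigenvalues of $H_\alpha^{\CD}$ all arise from the boundary condition in the channels $m=0,-1$. In particular $W_-^{\CD}$ acts as a unitary on $\H_\2^\bot$ (namely as $W_-^{\AB}|_{\H_\2^\bot}$, a pure function of $A$), so only the restriction $W_-^{\CD}|_{\H_\2}$ matters, and this is an isometry from $\H_\2$ onto $\H_\2 \cap \H_{\rm p}(H_\alpha^{\CD})^\bot$.

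Under the identification $\H_\2 \cong \H_r \otimes \C^2$, I would next introduce the unital $C^*$-subalgebra $\EE \subset \B(\H_\2)$ defined as a $2\times 2$ matrix version of the leading algebra $\EE_{(L,A)}$ of Section \ref{subsec_example}, generated by bounded functions of $\sqrt{H_0}$ (with limits at $0$ and $+\infty$) and of $A$ (with limits at $\pm\infty$) on each channel. It fits into the short exact sequence
$$0 \longrightarrow \MM_2\big(\K(\H_r)\big) \longrightarrow \EE \stackrel{q\;\;\!}{\longrightarrow} \MM_2\big(C(\square)\big) \longrightarrow 0,$$
and the explicit formula \eqref{yoyo}, combined with the continuity of $\varphi^-_m$, $\tilde\varphi_m$ on $\bR$ and of $S_\alpha^{\CD}(\cdot)$ on $\bRp$ together with the limits recorded in Section \ref{sec_AB_wave} and Proposition \ref{propSurS}, shows that $W_-^{\CD}|_{\H_\2} \in \EE$ and that its image under $q$ is exactly the unitary element $\Gamma_\alpha^{\CD} = (\Gamma_1,\Gamma_2,\Gamma_3,\Gamma_4) \in \MM_2(C(\square))$.

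Proposition \ref{prop_top_lev} then applies directly. The isometry property gives $1_{\H_\2} - (W_-^{\CD}|_{\H_\2})^* W_-^{\CD}|_{\H_\2} = 0$, and since the bound states lie in $\H_\2$, relation \eqref{eq_range_W} yields $1_{\H_\2} - W_-^{\CD}|_{\H_\2}(W_-^{\CD}|_{\H_\2})^* = E_{\rm p}(H_\alpha^{\CD})$, a finite-rank projection. Hence
$$\ind\big[\Gamma_\alpha^{\CD}\big]_1 = -\big[E_{\rm p}(H_\alpha^{\CD})\big]_0 \in K_0\big(\MM_2(\K(\H_r))\big) \cong \Z.$$
Under the isomorphisms $K_1(\MM_2(C(\square))) \cong K_1(C(\S)) \cong \Z$ realised by $\Wind\circ\det$ and $K_0(\MM_2(\K)) \cong \Z$ realised by $\Tr$ (cf.\ Example \ref{ex_K_groups}), Theorem \ref{thm_Lev} then yields $\Wind(\Gamma_\alpha^{\CD}) = n\,\#\sigma_p(H_\alpha^{\CD})$ for a universal integer $n$ depending only on the orientation conventions.

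The main obstacle is pinning down $n$. I would do so by testing the identity on one explicit admissible pair for which both sides are directly computable: a convenient choice is a diagonal $(C,D)$ for which $CD^*$ has exactly one negative eigenvalue, which decouples the two channels and reduces the computation to two independent scalar windings of precisely the flavour already tabulated in Section \ref{sec_baby} for the baby model. Computing the four boundary contributions $w_j(\Gamma_\alpha^{\CD})$ using the exact values of $\varphi^-_m(\pm\infty)$, $\tilde\varphi_m(\pm\infty)$ and the asymptotic values of $\widetilde S_\alpha^{\CD}$ at $0$ and $+\infty$, while matching the orientation convention fixed in Section \ref{sec_baby}, is expected to give $n=+1$ and thereby complete the proof; the bookkeeping at the four corners of $\square$, where the Gamma-function factors and the limiting scattering matrices must agree, is the delicate step.
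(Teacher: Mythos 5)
Your argument is correct in outline, but it is not the proof the paper attaches to this statement: there, Theorem \ref{Thm_Lev_AB} is established by the ``pedestrian'' route, namely a case-by-case computation of the individual contributions $w_1,w_2,w_3$ (with $w_4$ trivial) for all families of admissible pairs $(C,D)$ and all $\alpha$, performed in \cite[Sec.~III]{PR} and summarized in the tables of Section \ref{sec_ped}. What you propose --- placing $W_-^{\CD}|_{\H_\2}$ in a $2\times 2$ matrix version of $\EE_{(L,A)}$, reading $\Gamma_\alpha^{\CD}$ off \eqref{yoyo} as its image in the quotient, invoking Proposition \ref{prop_top_lev} together with asymptotic completeness and the fact that all bound states lie in $\H_\2$, and then transporting the index map through $\Wind\circ\det$ and $\Tr$ --- is essentially the ``conceptual'' alternative proof the paper itself gives later (Theorem \ref{Ktheo_AB} in Section \ref{sec_higher}, based on \cite{KPR}). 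The trade-off is clear: the pedestrian proof produces the explicit tables, exhibiting each boundary contribution separately and showing that the scattering term alone is meaningless, while your K-theoretic route is much shorter and more robust, but it only gives the identity up to the universal integer $n$ attached to the extension. On that last point your plan is sound but remains to be executed: either carry out the winding computation for one decoupled diagonal pair with $CD^*$ having a negative eigenvalue (cases with $\#\sigma_p=0$, such as $D=0$ or $C=0$, give $0=n\cdot 0$ and fix nothing), or note that the extension is a matrix amplification of the one already calibrated on the baby model in Section \ref{subsec_Teddy}, so the same $n$ applies; as written, ``is expected to give $n=+1$'' is the only step left open.
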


The proof of this equality can be obtained by a case-by-case study.
It is a rather long computation which has been performed in \cite[Sec.~III]{PR} and we shall only recall the detailed results.
Note that one has to calculate separately the contribution to the winding number from the functions
$\Gamma_1$, $\Gamma_2$ and $\Gamma_3$, the contribution of $\Gamma_4$ being always trivial.
Below, the contribution to the winding of the function $\Gamma_j$ will be denoted by  $w_j(\Gamma_\alpha^{\CD})$.
Let us also stress that due to \eqref{Gma2} the contribution of $\Gamma_2$ corresponds to the contribution of the scattering operator.
It will be rather clear that a naive approach of Levinson's theorem involving only the contribution of the scattering operator would
lead to a completely wrong result.

We now list the results for the individual contributions. They clearly depend on $\alpha$, $C$ and $D$.
The various cases have been divided into subfamilies.

\begin{center}
\begin{tabular}{c|c|c|c|c|c|}
Conditions & $\#\sigma_p(H_\alpha^{\CD})$ & $w_1(\Gamma_\alpha^{\CD})$ & $w_2(\Gamma_\alpha^{\CD})$ & $w_3(\Gamma_\alpha^{\CD})$  & $\Wind(\Gamma_\alpha^{\CD})$
\\ \hline\hline
$ D=0 $&$ 0 $&$ 0 $&$ 0 $&$ 0 $&$ 0 $\\\hline
$C=0 $&$ 0 $&$ -1 $&$ 0 $&$
 1 $&$ 0$ \\ \hline
\end{tabular}
\end{center}

\vspace{5mm}

Now, if $\det(D)\neq 0$ and $\det(C)\ne 0$, we set $E:=D^{-1}C=:(e_{jk})_{j,k=1}^2$ and obtains:
\begin{center}
\begin{tabular}{c|c|c|c|c|c|}
Conditions & $\#\sigma_p(H_\alpha^{\CD})$ & $w_1(\Gamma_\alpha^{\CD})$ & $w_2(\Gamma_\alpha^{\CD})$ & $w_3(\Gamma_\alpha^{\CD})$  & $\Wind(\Gamma_\alpha^{\CD})$
\\ \hline\hline
$ e_{11} e_{22}\ge 0$, $\tr(E)>0$, $\det(E)>0$ &$ 0 $&$ 0 $&$ -1 $&$ 1 $&$ 0$ \\\hline
$ e_{11} e_{22}\ge 0$, $\tr(E)>0$, $\det(E)<0$ &$ 1$ &$ 0 $&$ 0 $&$ 1 $&$ 1 $\\\hline
$ e_{11} e_{22}\ge 0$, $\tr(E)<0$, $\det(E)>0$ &$ 2 $&$ 0 $&$ 1 $&$ 1 $&$ 2$ \\\hline
$ e_{11} e_{22}\ge 0$, $\tr(E)<0$, $\det(E)<0$ &$ 1 $&$ 0 $&$ 0 $&$ 1 $&$ 1$ \\\hline
$ e_{11}=e_{22}=0,\det(E)< 0$&$ 1 $&$ 0 $&$ 0 $&$ 1 $&$ 1$ \\\hline
$ e_{11}\;\! e_{22}<0 $&$ 1 $&$ 0 $&$ 0 $&$ 1 $&$ 1$ \\\hline
\end{tabular}
\end{center}

\vspace{5mm}

If $\det(D)\neq 0$, $\det(C)=0$ and if we still set $E:=D^{-1}C$ one has:
\begin{center}
\begin{tabular}{c|c|c|c|c|c|}
Conditions & $\#\sigma_p(H_\alpha^{\CD})$ & $w_1(\Gamma_\alpha^{\CD})$ & $w_2(\Gamma_\alpha^{\CD})$ & $w_3(\Gamma_\alpha^{\CD})$  & $\Wind(\Gamma_\alpha^{\CD})$
\\ \hline\hline
$ e_{11}=0, \tr(E)>0 $&$ 0 $&$ -\alpha $&
$ \alpha-1 $&$ 1 $&$ 0 $\\\hline
$e_{11}\;\!e_{22}\neq 0,\tr(E)>0,\alpha<1/2 $&$ 0 $&$ -\alpha $&
$ \alpha-1 $&$ 1 $&$ 0 $\\\hline
$ e_{11}>0, \tr(E)<0 $&$ 1 $&$ -\alpha $&$ \alpha $&$ 1 $&$ 1$ \\\hline
$ e_{11}\;\!e_{22}\neq0, \tr(E)<0,\alpha<1/2 $&$ 1 $&$ -\alpha $&$ \alpha $&$ 1 $&
$ 1$ \\\hline
$ e_{22}=0, \tr(E)>0 $&$ 0 $&$ \alpha -1$&$ -\alpha $&$ 1 $&$ 0$ \\\hline
$ e_{11}\;\!e_{22}\neq0, \tr(E)>0,\alpha>1/2 $&$ 0 $&$ \alpha-1$&$ -\alpha $&
$ 1 $&$ 0$ \\\hline
$ e_{22}=0, \tr(E)<0 $&$ 1 $&$ \alpha-1 $&$ 1-\alpha $&$ 1 $&$ 1 $\\\hline
$ e_{11}\;\!e_{22}\neq0, \tr(E)<0,\alpha>1/2 $&$ 1 $&$ \alpha-1 $&$ 1-\alpha $&
$ 1 $&$ 1 $\\\hline
$ e_{11}\;\!e_{22}\neq0,\tr(E)> 0 ,\alpha=1/2$&$ 0 $&$ -1/2 $&$ -1/2 $&$ 1 $&$ 0$ \\\hline
$ e_{11}\;\! e_{22}\neq 0,\tr(E)<0,\alpha=1/2 $&$ 1 $&$ -1/2 $&$ 1/2 $&$ 1 $&$ 1$ \\\hline
\end{tabular}
\end{center}

\vspace{5mm}

On the other hand, if $\dim[\Ker(D)]=1$, let us define the identification map $I:\C\to \C^2$ with $\Ran(I)=\Ker(D)^\bot$.
We then set
\begin{equation}\label{ell}
\ell:=(DI)^{-1}CI:\C\to \C
\end{equation}
which is in fact a real number because of the condition of admissibility for the pair $(C,D)$.

In the special case $\alpha=1/2$ one has:
\begin{center}
\begin{tabular}{c|c|c|c|c|c|}
Conditions & $\#\sigma_p(H_\alpha^{\CD})$ & $w_1(\Gamma_\alpha^{\CD})$ & $w_2(\Gamma_\alpha^{\CD})$ & $w_3(\Gamma_\alpha^{\CD})$  & $\Wind(\Gamma_\alpha^{\CD})$
\\ \hline\hline
$ \ell>0 $&$ 0 $&$ 0 $&$ -1/2 $&$ 1/2 $&$ 0$ \\\hline
$ \ell=0 $&$ 0 $&$ -1/2 $&$ 0 $&$ 1/2 $&$ 0$ \\\hline
$ \ell<0 $&$ 1 $&$ 0 $&$ 1/2 $&$ 1/2$&$ 1$ \\\hline
\end{tabular}
\end{center}

\vspace{5mm}

If $\dim[\Ker(D)]=1$, $\alpha<1/2$ and if ${}^t(p_1,p_2)\in \Ker(D)$ with $p_1^2+p_2^2=1$ one obtains with $\ell$ defined in \eqref{ell}:
\begin{center}
\begin{tabular}{c|c|c|c|c|c|}
Conditions & $\#\sigma_p(H_\alpha^{\CD})$ & $w_1(\Gamma_\alpha^{\CD})$ & $w_2(\Gamma_\alpha^{\CD})$ & $w_3(\Gamma_\alpha^{\CD})$  & $\Wind(\Gamma_\alpha^{\CD})$
\\ \hline\hline
$ \ell<0,p_1\neq 0 $&$ 1 $&$ 0 $&$ \alpha $&$ 1-\alpha $&$ 1$ \\\hline
$ \ell<0,p_1=0 $&$ 1 $&$ 0 $&$ 1-\alpha $&$ \alpha $&$ 1$ \\\hline
$ \ell>0,p_1\neq 0 $&$ 0 $&$ 0 $&$ \alpha-1 $&$ 1-\alpha$&$ 0$ \\\hline
$ \ell>0,p_1= 0 $&$ 0 $&$ 0 $&$ -\alpha $&$ \alpha $&$ 0$ \\\hline
$ \ell=0,p_1\;\!p_2\neq 0 $&$ 0 $&$ -\alpha $&$ 2\alpha-1 $&$ 1-\alpha $&$ 0$ \\\hline
$ \ell=0,p_1= 0 $&$ 0 $&$ -\alpha $&$ 0 $&$ \alpha$&$ 0$ \\\hline
$ \ell=0,p_2= 0 $&$ 0 $&$ \alpha-1 $&$ 0 $&$ 1-\alpha$&$ 0$ \\\hline
\end{tabular}
\end{center}

\vspace{5mm}

Finally, if $\dim[\Ker(D)]=1$, $\alpha>1/2$ and ${}^t(p_1,p_2)\in \Ker(D)$ with $p_1^2+p_2^2=1$ one has with $\ell$ defined in \eqref{ell}:
\begin{center}
\begin{tabular}{c|c|c|c|c|c|}
Conditions & $\#\sigma_p(H_\alpha^{\CD})$ & $w_1(\Gamma_\alpha^{\CD})$ & $w_2(\Gamma_\alpha^{\CD})$ & $w_3(\Gamma_\alpha^{\CD})$  & $\Wind(\Gamma_\alpha^{\CD})$
\\ \hline\hline
$ \ell<0,p_2\neq 0 $&$ 1 $&$ 0 $&$ 1-\alpha $&$ \alpha $&$ 1$ \\\hline
$ \ell<0,p_2=0 $&$ 1 $&$ 0 $&$ \alpha $&$ 1-\alpha $&$ 1$ \\\hline
$ \ell>0,p_2\neq 0 $&$ 0 $&$ 0 $&$ -\alpha $&$ \alpha$&$ 0$ \\\hline
$ \ell>0,p_2= 0 $&$ 0 $&$ 0 $&$ \alpha-1 $&$ 1-\alpha $&$ 0$ \\\hline
$ \ell=0,p_1\;\!p_2\neq 0 $&$ 0 $&$ \alpha-1 $&$ 1-2\alpha $&$ \alpha $&$ 0$ \\\hline
$ \ell=0,p_1= 0 $&$ 0 $&$ -\alpha $&$ 0 $&$ \alpha$&$ 0$ \\\hline
$ \ell=0,p_2= 0 $&$ 0 $&$ \alpha-1 $&$ 0 $&$ 1-\alpha$&$ 0$ \\\hline
\end{tabular}
\end{center}

\vspace{5mm}

Once again, by looking at these tables, it clearly appears that singling out the contribution due to the scattering operator
has no meaning. An index theorem can be obtained only if the three contributions are considered on an equal footing.

\subsection{Cyclic cohomology, $n$-traces and Connes' pairing}\label{sec_cycle}

In this section we extend the framework which led to our abstract Levinson's theorem, namely to Theorem \ref{thm_Lev}.
In fact, this statement can then be seen as a special case of a more general result.
For this part of the manuscript, we refer to \cite{KPR} and \cite[Sec.~III]{Connes}, or to the short surveys presented in \cite[Sec.~5]{KRS}
and in \cite[Sec.~4 \& 5]{KS}.

Given a complex algebra $\B$ and any $n \in \N\cup\{0\}$, let $C^n_\lambda(\B)$ be the set of $(n+1)$-linear functional on $\B$
which are cyclic in the sense that any $\eta \in C^n_\lambda(\B)$ satisfies for each $w_0,\dots,w_n\in \B$:
\begin{equation*}
\eta(w_1,\dots,w_n,w_0)=(-1)^n \eta(w_0,\dots,w_n)\ .
\end{equation*}
Then, let $\b: C^n_\lambda(\B) \to C^{n+1}_\lambda(\B)$ be the Hochschild coboundary map defined for $w_0,\dots,w_{n+1}\in \B$ by
\begin{equation*}
[\b \eta](w_0,\dots,w_{n+1}):=
\sum_{j=0}^n (-1)^j \eta(w_0,\dots,w_jw_{j+1},\dots ,w_{n+1}) +
(-1)^{n+1}\eta(w_{n+1}w_0,\dots,w_n)\ .
\end{equation*}
An element $\eta \in C^n_\lambda(\B)$ satisfying $\b\eta=0$ is called a cyclic $n$-cocyle, and the cyclic cohomology $HC(\B)$ of $\B$ is the cohomology of the complex
\begin{equation*}
0\to C^0_\lambda(\B)\to \dots \to C^n_\lambda(\B) \stackrel{\b}{\to}
C^{n+1}_\lambda(\B) \to \dots \ .
\end{equation*}

A convenient way of looking at cyclic $n$-cocycles is in terms of characters of a graded
differential algebra over $\B$. So, let us first recall that a graded differential
algebra $(\A,\dd)$ is a graded algebra $\A$ together with a map $\dd:\A\to \A$ of degree $+1$.
More precisely, $\A:=\oplus_{j=0}^\infty \A_j$ with each $\A_j$ an algebra over $\C$
satisfying the property $\A_j \;\!\A_k \subset \A_{j+k}$, and $\dd$ is a graded
derivation satisfying $\dd^2=0$.
In particular, the derivation satisfies
$\dd(w_1w_2)=(\dd w_1)w_2 + (-1)^{\deg(w_1)}w_1 (\dd w_2)$, where $\deg(w_1)$
denotes the degree of the homogeneous element $w_1$.

A cycle $(\A,\dd,\int)$ of dimension $n$ is a graded differential algebra
$(\A,\dd)$, with $\A_j=0$ for $j>n$, endowed with a linear functional
$\int : \A_n \to \C$ satisfying $\int \dd w=0$ if $w \in \A_{n-1}$, and for
$w_j\in \A_j$, $w_k \in \A_k$ with $j+k=n$~:
$$
\int w_j w_k = (-1)^{jk}\int w_k w_j\ .
$$
Given an algebra $\B$, a cycle of dimension $n$ over $\B$ is a
cycle $(\A,\dd,\int)$ of dimension $n$ together with a homomorphism $\rho:\B \to \A_0$.
In the sequel, we will assume that this map is injective and hence identify $\B$ with a subalgebra of $\A_0$ (and do not write $\rho$ anymore).
Now, if $w_0, \dots, w_n$ are $n+1$ elements of $\B$, one can define the character $\eta(w_0,\dots,w_n) \in \C$ by the formula:
\begin{equation}\label{defeta}
\eta(w_0,\dots,w_n):=\int w_0\;\!(\dd w_1)\dots (\dd w_n)\ .
\end{equation}
As shown in \cite[Prop.~III.1.4]{Connes}, the map $\eta: \B^{n+1}\to \C$ is a cyclic $(n+1)$-linear functional on $\B$ satisfying
$\b\eta=0$, {\it i.e.} $\eta$ is a cyclic $n$-cocycle.
Conversely, any cyclic $n$-cocycle arises as the character of a cycle of dimension $n$ over $\B$.
Let us also mention that a third description of any cyclic $n$-cocycle is presented
in \cite[Sec.~III.1.$\alpha$]{Connes} in terms of the universal graded differential algebra associated with $\B$.

We can now introduce the precise definition of a $n$-trace over a Banach algebra.
Recall that for an algebra $\B$ that is not necessarily unital, we denote by
$\B^+$ the canonical algebra obtained by adding a unit to $\B$.

\begin{Definition}\label{ntrace}
A $n$-trace on a Banach algebra $\B$ is the character of a cycle $(\A,\dd,\int)$ of dimension $n$
over a dense subalgebra $\B'$ of $\B$ such that for all $w_1,\dots,w_n\in \B'$ and any
$x_1,\dots,x_n \in (\B')^+$ there exists a constant $c= c(w_1,\dots,w_n)$ such that
$$
\left|\int (x_1 \dd w_1)\dots (x_n \dd w_n)\right| \leq c \|x_1\|\dots \|x_n\|\ .
$$
\end{Definition}

\begin{Remark}
Typically, the elements of $\B'$ are suitably smooth elements of $\B$
on which the derivation $\dd$ is well defined and for which the r.h.s.~of \eqref{defeta}
is also well defined.
However, the action of the $n$-trace $\eta$ can sometimes be extended to
more general elements $(w_0, \dots,w_n) \in \B^{n+1}$ by a suitable reinterpretation
of the l.h.s.~of \eqref{defeta}.
\end{Remark}

The importance of $n$-traces relies on their duality relation with $K$-groups.
Recall first that $\MM_q(\B)\cong \MM_q(\C)\otimes \B$ and that $\tr$ denotes the standard
trace on matrices. Now, let $\B$ be a $C^*$-algebra and let $\eta_n$ be a $n$-trace
on $\B$ with $n \in \N$ even.
If $\B'$ is the dense subalgebra of $\B$ mentioned in Definition \ref{ntrace} and if
$p$ is a projection in $\MM_q(\B')$, then one sets
\begin{equation*}
\langle \eta_n,p\rangle := c_n\;\![\tr\otimes \eta_n](p,\dots,p) .
\end{equation*}
Similarly, if $\B$ is a unital $C^*$-algebra and if $\eta_n$ is a $n$-trace with $n \in \N$ odd,
then for any unitary $u$ in $\MM_q(\B')$  one sets
\begin{equation*}
\langle \eta_n,u\rangle := c_n \;\![\tr\otimes \eta_n](u^*,u,u^*,\dots,u)
\end{equation*}
the entries on the r.h.s.~alternating between $u$ and $u^*$. The constants $c_n$ are given by
\begin{equation}\label{eq_const_Connes}
c_{2k}
\;=\;
\frac{1}{(2\pi i)^k}\,\frac{1}{k!}
\mbox{ , }
\qquad
c_{2k+1}
\;=\;
\frac{1}{(2\pi i)^{k+1}}\,
\frac{1}{2^{2k+1}}\,
\frac{1}{(k+\frac{1}{2})(k-\frac{1}{2})
\cdots\frac{1}{2}}
\mbox{ . }
\end{equation}

There relations are referred to as Connes' pairing between $K$-theory and cyclic cohomology
of $\B$ because of the following property, see \cite[Thm.~2.7]{Connes86} for a precise
statement and for its proof:
In the above framework, the values $\langle \eta_n,p\rangle$ and $\langle \eta_n,u\rangle$
depend only on the $K_0$-class $[p]_0$ of $p$ and on the $K_1$-class $[u]_1$ of $u$,
respectively.

We now illustrate these notions by revisiting Example \ref{ex_K_groups}.

\begin{Example}\label{exam1}
If $\B=\K(\H)$, the algebra of compact operators on a Hilbert space $\H$,
then the linear functional $\int$ on $\B$ is given by the usual trace $\Tr$
on the set $\K_1(\H)$ of trace class elements of $\K(\H)$. Furthermore, since any projection
$p\in \K(\H)$ is trace class, it follows that $\langle \eta_0,p\rangle\equiv
\langle \Tr,p\rangle$ is well defined
for any such $p$ and that this expression gives the dimension of the projection $p$.
\end{Example}

For the next example, let us recall that $\det$ denotes the usual determinant of elements of $\MM_q(\C)$.

\begin{Example}\label{exam2}
If $\B=C(\S,\C)$, let us fix $\B':=C^1(\S,\C)$. We parameterize $\S$ by the real numbers modulo $2\pi$ using $\theta$ as local coordinate.
As usual, for any $w \in \B'$
(which corresponds to an homogeneous element of degree $0$), one sets
$[\dd w](\theta):=w'(\theta)\;\!\d \theta$ (which is now an homogeneous element of degree $1$).
Furthermore, we define the graded trace $\int v \;\!\d \theta :=\int_{0}^{2\pi} v(\theta)\;\!\d \theta$
for an arbitrary element $v \;\!\d \theta$ of degree $1$. This defines the $1$-trace $\eta_1$.
A unitary element in $u\in C^1\big(\S,\MM_q(\C)\big)\equiv \MM_q\big(C^1(\S;\C)\big)$ pairs as follows
\begin{equation}\label{wn1}
\langle \eta_1,u \rangle =  c_1[\tr\otimes\eta_1](u^*,u) := \frac{1}{2\pi i}
\;\!\int_{0}^{2\pi} \tr[u(\theta)^*\;\!u'(\theta)]\;\! \d \theta =
-\frac{1}{2\pi}
\;\!\int_{0}^{2\pi} \tr[iu(\theta)^*\;\!u'(\theta)]\;\! \d \theta\ .
\end{equation}
But this quantity has already been encountered at several places in this text
and corresponds to analytic expression for the computation of (minus)\footnote{Unfortunately,
due to our convention for the computation of the winding number, the expressions computed with the constants provided in \eqref{eq_const_Connes}
differ from our expressions by a minus sign.} the winding number
of the map $\theta \mapsto \det[u(\theta)]$.
Since this quantity is of topological nature, it only requires
that the map $\theta \mapsto u(\theta)$ is continuous.
Altogether, one has thus obtained that the pairing $\langle \eta_1,u \rangle$ in \eqref{wn1} is nothing but
the computation of (minus) the winding number of the map $\det[u]: \S \to \T$, valid for any unitary $u \in
C\big(\S,\MM_q(\C)\big)$. In other words, one has obtained that $\langle \eta_1,u \rangle=-\Wind(u)$.
\end{Example}

\subsection{Dual boundary maps}\label{sec_dual}

We have seen that a $n$-trace $\eta$ over $\B$ gives rise to a functional on $K_i(\B)$ for $i=0$ or $i=1$,
{\it i.e.}~the map $ \langle \eta,\cdot \rangle$ is an element of $Hom(K_i(\B),\C)$.
In that sense $n$-traces are dual to the elements of the $K$-groups. An important question is whether this dual relation
is functorial in the sense that morphisms between the $K$-groups of different algebras yield dual morphisms on higher traces.
Here we are in particular interested in a map on higher traces which is dual to the index map,
{\it i.e.}~a map $\#$ which assigns to an even trace $\eta$ an odd trace $\#\eta$ such that
\begin{equation}\label{comp}
\langle \eta,\ind (\cdot) \rangle = \langle \#\eta,\cdot \rangle.
\end{equation}
This situation gives rise to equalities between two numerical topological invariants.

Such an approach for relating two topological invariants has already been used at few occasions.
For example, our abstract Levinson's theorem (Theorem \ref{thm_Lev}) corresponds to a equality of the form \eqref{comp}
for a $0$-trace and a $1$-trace.
In addition, in the following section we shall develop such an equality for a $2$-trace and a $3$-trace.
On the other hand, let us mention that similar equalities have also been developed for the exponential
map in \eqref{comp} instead of the index map. In this framework, an equality involving a $0$-trace and a $1$-trace
has been put into evidence in \cite{Johannes}. It gives rise to a relation
between the pressure on the boundary of a quantum system and the integrated density of states.
Similarly, a relation involving $2$-trace and a $1$-trace was involved in the proof of the equality between
the bulk-Hall conductivity and the conductivity of the current along the edge of the sample, see \cite{KRS,KS}.

\subsection{Higher degree Levinson's theorem}\label{sec_higher}

In order to derive a higher degree Levinson's theorem, let us first introduce
the algebraic framework which will lead to a much shorter new proof of Theorem \ref{Thm_Lev_AB}.
The construction is similar to the one already proposed in Section \ref{subsec_Lev3}
for Schr\"odinger operators on $\R^3$.

We recall from Section \ref{sec_AB_model} that $\H$ denotes the Hilbert space $\ltwo(\R^2)$,
that $\H_\2$ has been introduced in \eqref{eq_Hint},  and let us set $\HS:=\ltwo(\S)$.
We also denote by $\F_0:\H\mapsto \ltwo(\R_+;\HS)$ the usual spectral representation of
the Laplace operator $H_0=-\Delta$ in $\H$.
Then, we can define
$$
\EE:=\big\{\F_0^* \big[\EE_{(L,A)}\otimes \K(\HS)\big] \F_0\big\} |_{\H_\2}\subset
\B\big(\H_\2\big)\equiv \B(\H_r)\otimes \MM_2(\C).
$$
Clearly, $\EE$ is made of continuous functions of $H_0$ having limits at $0$ and $+\infty$,
and of continuous function of the generator $A$ of dilations in $\ltwo(\R^2)$ having
limits at $-\infty$ and at $+\infty$. One can then consider the short exact sequence
$$
0 \to \K(\H_\2) \hookrightarrow \EE \stackrel{q\;\;\!}{\to} C\big(\square; \MM_2(\C)\big) \to 0,
$$
and infer the following result directly from the construction presented in Section \ref{sec_Kth}.
Note that this result corresponds to \cite[Thm.~13]{KPR} and provides an alternative
proof for Theorem \ref{Thm_Lev_AB}.

\begin{Theorem}\label{Ktheo_AB}
For any $\alpha \in (0,1)$ and any admissible pair $(C,D)$, one has
$W_-^{\CD}|_{\H_\2}\in \EE$. Furthermore,
$q\big(W_-^{\CD}|_{\H_\2}\big)=\Gamma^{\CD}_\alpha \in C\big(\square; \U_2(\C)\big)$
and the following equality holds
\begin{equation*}
\Wind \big(\Gamma_\alpha^{\CD}\big) = \# \sigma_p(H_\alpha^{\CD}).
\end{equation*}
\end{Theorem}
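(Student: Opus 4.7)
The plan is to deduce Theorem \ref{Ktheo_AB} as a direct application of the abstract topological Levinson's theorem (Theorem \ref{thm_Lev}), now transported to the algebra $\EE$ just built. Formula \eqref{yoyo} is the key analytic input: it simultaneously certifies membership of $W_-^{\CD}|_{\H_\2}$ in $\EE$ and determines its image under the quotient map $q$.

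First I would check membership. The first summand in \eqref{yoyo} is the diagonal matrix $\mathrm{diag}(\varphi_0^-(A),\varphi_{-1}^-(A))$, and since each $\varphi_m^-$ extends continuously to $[-\infty,+\infty]$ (by standard Gamma-function asymptotics on vertical lines), it lies in $\MM_2(\EE_{(H_0,A)})$. The second summand factors as a diagonal function of $A$ with entries in $C([-\infty,+\infty])$ multiplied by $S^{\CD}_\alpha(\sqrt{H_0})-\mathrm{diag}(e^{-i\pi\alpha},e^{i\pi\alpha})$, which by Proposition \ref{propSurS} is a continuous matrix-valued function of $\sqrt{H_0}$ admitting limits at $0$ and $+\infty$; hence this summand also lies in $\MM_2(\EE_{(H_0,A)})$. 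Composing with $\F_0$ and restricting to $\H_\2$ gives $W_-^{\CD}|_{\H_\2}\in\EE$. The image $q(W_-^{\CD}|_{\H_\2})$ is then read off by evaluating on the four sides of $\square$ using $\varphi_m^-(-\infty)=1$, $\tilde\varphi_m(-\infty)=0$, $\tilde\varphi_m(+\infty)=1$, together with the known boundary values of $S^{\CD}_\alpha$; the resulting quadruple coincides by construction with $(\Gamma_1,\Gamma_2,\Gamma_3,\Gamma_4)$ from Section \ref{sec_ped}, so $q(W_-^{\CD}|_{\H_\2}) = \Gamma^{\CD}_\alpha$.

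The index-theoretic step is then essentially automatic. Since $W_-^{\CD}$ is a partial isometry satisfying $(W_-^{\CD})^*W_-^{\CD}=1$ and $W_-^{\CD}(W_-^{\CD})^*=1-E_{\rm p}(H_\alpha^{\CD})$, and since all point spectrum of $H_\alpha^{\CD}$ sits inside $\H_\2$ (on $\H_\2^\bot$ the operator reduces to the pure Aharonov-Bohm operator on off-integer channels, which carries no eigenvalue), the restriction $W_-^{\CD}|_{\H_\2}$ is a partial isometry in $\EE$ with cokernel projection $E_{\rm p}(H_\alpha^{\CD})$. Proposition \ref{prop_top_lev} applied to the short exact sequence $0\to\K(\H_\2)\to\EE\to C(\square;\MM_2(\C))\to 0$ then gives
$$
\ind\bigl([\Gamma^{\CD}_\alpha]_1\bigr)= -[E_{\rm p}(H_\alpha^{\CD})]_0 \in K_0\bigl(\K(\H_\2)\bigr).
$$
Identifying $K_1(C(\square;\MM_2(\C)))\cong\Z$ via $\Wind\circ\det$ and $K_0(\K(\H_\2))\cong\Z$ via $\Tr$, the proportionality constant in Theorem \ref{thm_Lev} is preserved by matrix amplification and equals $-1$ by the baby-model computation of Section \ref{subsec_Teddy}; this yields $\Wind(\Gamma^{\CD}_\alpha)=\#\sigma_{\rm p}(H_\alpha^{\CD})$.

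The main obstacle lies in the analytic content of the first step: showing that the channel functions $\varphi_m^-,\tilde\varphi_m$ have the stated limits at $\pm\infty$ and that $\lambda\mapsto S^{\CD}_\alpha(\lambda)$ is uniformly continuous on $[0,+\infty]$ with well-defined endpoint values (Proposition \ref{propSurS}), so that \eqref{yoyo} fits cleanly into $\MM_2(\EE_{(H_0,A)})$. Once these facts are granted, the passage from the explicit formula to the index-theoretic identity is a formal consequence of the $K$-theoretic machinery already set up in Section \ref{sec_Kth}.
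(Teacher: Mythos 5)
Your proposal is correct and follows essentially the same route as the paper: use the explicit formula \eqref{yoyo} together with the limit properties of $\varphi^-_m$, $\tilde\varphi_m$ and Proposition \ref{propSurS} to place $W_-^{\CD}|_{\H_\2}$ in $\EE$ and read off $q\big(W_-^{\CD}|_{\H_\2}\big)=\Gamma_\alpha^{\CD}$, then apply the index-map machinery of Section \ref{sec_Kth} (Proposition \ref{prop_top_lev} with the identifications $\Wind\circ\det$ and $\Tr$, and the normalization $n=-1$ fixed by the baby model) to conclude $\Wind(\Gamma_\alpha^{\CD})=\sharp\sigma_{\rm p}(H_\alpha^{\CD})$. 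This is exactly how the paper deduces Theorem \ref{Ktheo_AB} from the construction of $\EE$ and the short exact sequence, as an alternative to the case-by-case computation of Section \ref{sec_ped}.
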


Let us stress that the previous statement corresponds to a pointwise Levinson's theorem
in the sense that it has been obtained for fixed $C,D$ and $\alpha$.
However, it clearly calls for making these parameters degrees of freedom and thus to include them into the description of the algebras.
In the context of our physical model this amounts to considering
families of self-adjoint extensions of $H_\alpha$.
For that purpose we use the one-to-one parametrization of these extensions with elements $U \in \U_2(\C)$
introduced in Remark \ref{1to1}. We denote the self-adjoint extension corresponding to $U \in \U_2(\C)$ by $H_\alpha^U$.

So, let us consider a smooth and compact orientable $n$-dimensional manifold $\X$ without boundary.
Subsequently, we will choose for $\X$ a two-dimensional submanifold of $\U_2(\C)\times (0,1)$.
Taking continuous functions over $\X$ we get a new short exact sequence
\begin{equation}\label{degresup}
0 \to C\big(\X;\K(\H_\2)\big)\stackrel{}\hookrightarrow C(\X;\EE) \stackrel{}\to C\big(\X; C\big(\square;\MM_2(\C)\big)\big)\to 0\ .
\end{equation}
Furthermore, recall that $\K(\H_\2)$ is endowed with a $0$-trace $\Tr$ and the algebra $C\big(\square;\MM_2(\C)\big)$ with
a $1$-trace $\Wind$.
There is a standard construction in cyclic cohomology, the cup
product, which provides us with
a suitable $n$-trace on the algebra $C\big(\X,\K(\H_\2)\big)$ and a corresponding $n+1$-trace
on the algebra $C\big(\X; C\big(\square;\MM_2(\C)\big)\big)$, see \cite[Sec.~III.1.$\alpha$]{Connes}.
We describe it here in terms of cycles.

Recall that any smooth and compact manifold $\Y$ of dimension $d$
naturally defines a structure of a graded differential algebra $(\A_\Y,\dd_\Y)$,
the algebra of its smooth differential $k$-forms.
If we assume in addition that $\Y$ is orientable so that we can choose
a global volume form,
then the linear form $\int_\Y$ can be defined by integrating the
$d$-forms over $\Y$.
In that case, the algebra $C(\Y)$ is naturally endowed with the $d$-trace defined
by the character of the cycle $(\A_\Y,\dd_\Y,\int_\Y)$ of dimension $d$
over the dense subalgebra $C^\infty(\Y)$.

For the algebra $C\big(\X;\K(\H_\2)\big)$, let us denote by $\K_1(\H_\2)$ the trace class elements of $\K(\H_\2)$.
Then, the natural graded differential algebra associated with
$C^\infty\big(\X,\K_1(\H_\2)\big)$ is given by $(\A_\X\otimes \K_1(\H_\2),\dd_\X)$.
The resulting $n$-trace on $C\big(\X;\K(\H_\2)\big)$ is then defined by the character
of the cycle $(\A_\X\otimes \K_1(\H_\2),\dd_\X,\int_\X\otimes \Tr)$ over the dense
subalgebra $C^\infty\big(\X,\K_1(\H_\2)\big)$ of $C\big(\X;\K(\H_\2)\big)$. We denote it by $\eta_\X$.

For the second algebra, let us observe that
$$
C\big(\X; C\big(\square;\MM_2(\C)\big)\big)\cong C\big(\X\times\S;\MM_2(\C)\big)
\cong C(\X \times \S) \otimes \MM_2(\C).
$$
Since $\X\times \S$ is a compact orientable manifold without
boundary, the above construction
applies also to $C\big(\X\times\S;\MM_2(\C)\big)$. More precisely,
the exterior derivation on $\X\times \S$ is the sum of  $\dd_\X$ and
$\dd_{\S}$ (the latter was denoted simply by $\dd$ in Example
\ref{exam2}). Furthermore, we consider the natural volume form on $\X\times \S$.
Note because of the factor $\MM_2(\C)$ the graded trace of the cycle
involves the usual matrix trace $\tr$.
Thus the resulting $n+1$-trace is the character of the cycle
$(\A_{\X\times \S}\otimes \MM_2(\C),\dd_\X+\dd_{\S},\int_{\X\times
\S}\otimes \tr)$. We denote it by $\#\eta_\X$.

Having these constructions at our disposal we can now state the main result of this section.
For the statement, we use the one-to-one parametrization of the extensions $H_\alpha^U$ of $H_\alpha$
introduced in Remark \ref{1to1}. We also consider a family
$\{W_-^{U,\alpha}\}_{(U,\alpha)\in \X} \subset \B(\H_\2)$, with
$W_-^{U,\alpha}:=W_-(H_\alpha^U,H_0)$,
parameterized by some compact orientable and boundaryless submanifold $\X$ of $\U_2(\C)\times (0,1)$.
This family defines several maps, namely
$$
\bW_-:\X\ni (U,\alpha)\mapsto W_-^{U,\alpha}\in \EE
$$
as well as
$$
\bG:\X\ni (U,\alpha)\mapsto \Gammaq^{U,\alpha}\in C\big(\square;\MM_2(\C)\big),
$$
with $\Gammaq^{U,\alpha}:= q\big(W_-^{U,\alpha}\big)$,
and also
$$
\bE:\X\ni (U,\alpha)\mapsto E_{\rm p}\big(H_\alpha^U\big).
$$

\begin{Theorem}[Higher degree Levinson's theorem]\label{thm-ENN}
Let $\X$ be a smooth, compact and orientable $n$-dimensional submanifold of $\U_2(\C)\times (0,1)$ without boundary.
Let us assume that the map $\bW_-:\X\to \EE$
is continuous. Then the maps $\bG$ and $\bE$ are continuous, and the following equality holds:
\begin{equation*}
\ind[\bG]_1 = -[\bE]_0
\end{equation*}
where $\ind$ is the index map from the $K_1$-group of the algebra $C\big(\X; C\big(\square;\MM_2(\C)\big)\big)$
to the $K_0$-group of the algebra $C\big(\X;\K(\H_\2)\big)$.
Furthermore, the numerical equality
\begin{equation}\label{eq23}
\big\langle \#\eta_{\X},[\bG]_1 \big\rangle =-\big \langle \eta_{\X},[\bE]_0 \big\rangle
\end{equation}
also holds.
\end{Theorem}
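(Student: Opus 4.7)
The plan is to reduce everything to the pointwise setting of Theorem \ref{Ktheo_AB} enhanced by a parameter space, and then to invoke the duality \eqref{comp} between the index map and the dual boundary map $\#$ on cyclic cohomology.

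First I would establish the continuity claims. The quotient morphism $q$ in \eqref{degresup} being a continuous $*$-morphism, $\bG=q\circ\bW_-$ is automatically continuous as soon as $\bW_-$ is. For $\bE$, the identity $1-W_-^{U,\alpha}(W_-^{U,\alpha})^{*}=E_{\rm p}(H_\alpha^U)$ from \eqref{eq_range_W} shows that $\bE$ is obtained from $\bW_-$ by a continuous combination of multiplication and adjoint, hence is continuous as a map into $C\bigl(\X;\K(\H_\2)\bigr)$ (and each fiber is a finite-rank projection, so it actually takes values in the trace-class fiber bundle, which will matter for the numerical pairing).

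Next I would prove the $K$-theoretic identity. By the pointwise content of Theorem \ref{Ktheo_AB}, each $W_-^{U,\alpha}$ is a partial isometry in $\EE$ whose image under $q$ is a unitary in $\MM_2(\C\text{-valued continuous functions on }\square)$. Hence $\bW_-$ is a partial isometry in $C(\X;\EE)$, and $\bG=q(\bW_-)$ is unitary in $C\bigl(\X;C(\square;\MM_2(\C))\bigr)$. Applying Proposition \ref{prop_top_lev} to the short exact sequence \eqref{degresup} and using $1-\bW_-^{*}\bW_-=0$ together with $1-\bW_-\bW_-^{*}=\bE$, one obtains
\begin{equation*}
\ind[\bG]_1 \;=\; [1-\bW_-^{*}\bW_-]_0 - [1-\bW_-\bW_-^{*}]_0 \;=\; -[\bE]_0 ,
\end{equation*}
which is the first claim.

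For the numerical equality I would invoke the duality \eqref{comp}: the $(n{+}1)$-trace $\#\eta_\X$ obtained by the cup product construction on $C\bigl(\X;C(\square;\MM_2(\C))\bigr)$ is, by construction, dual to the $n$-trace $\eta_\X$ on $C(\X;\K(\H_\2))$ under the index map associated to the sequence \eqref{degresup}. Concretely, $\eta_\X$ is the cup product of $\Tr$ with the de~Rham $n$-cocycle of $\X$, while $\#\eta_\X$ is the cup product of $\Wind$ (the $1$-trace on $C(\S;\MM_2(\C))$ of Example \ref{exam2}) with the same de~Rham $n$-cocycle; since the pointwise index pairing between $\Wind$ and $\Tr$ implements exactly the abstract index map of the pointwise sequence (Theorem \ref{thm_Lev}, with constant $n=-1$ as computed in Section \ref{subsec_Teddy}), the cup product with the common factor $(\A_\X,\dd_\X,\int_\X)$ transports this compatibility into \eqref{comp}. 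Combining \eqref{comp} with the $K$-theoretic identity just proved yields
\begin{equation*}
\bigl\langle \#\eta_{\X},[\bG]_1 \bigr\rangle \;=\; \bigl\langle \eta_{\X},\ind[\bG]_1 \bigr\rangle \;=\; -\bigl\langle \eta_{\X},[\bE]_0 \bigr\rangle ,
\end{equation*}
which is \eqref{eq23}.

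The main obstacle is the last step: one must justify that the cup product of the dual pair $(\Tr,\Wind)$ with the tautological $n$-cycle on $\X$ produces a \emph{dual pair} with respect to the index map of the parameterized exact sequence \eqref{degresup}. This is the functoriality-of-the-cup-product-under-boundary-maps statement alluded to in Section \ref{sec_dual}; in practice I would either cite the general result or verify it by explicitly writing both sides on a representative $p=1-\bW_-^{*}\bW_-$ versus $p=1-\bW_-\bW_-^{*}$ and checking that the constants $c_n,c_{n+1}$ from \eqref{eq_const_Connes} match up, using that on the pointwise fiber the constant is $-1$ (the factor appearing in Section \ref{subsec_Teddy}). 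Modulo this cohomological compatibility, nothing else in the proof requires any analysis beyond what has already been established in the preceding sections.
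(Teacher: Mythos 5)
Your proposal is correct and follows essentially the same route as the paper's proof (which is deferred to \cite[Thm.~15]{KPR}, based on \cite{KRS}): the continuity claims and the identity $\ind[\bG]_1=-[\bE]_0$ are obtained, exactly as you do, by viewing $\bW_-$ as a partial isometry in the unital algebra $C(\X;\EE)$ and applying Proposition \ref{prop_top_lev} to the parameterized exact sequence \eqref{degresup}, after which the numerical equality follows from the duality $\langle\eta,\ind(\cdot)\rangle=\langle\#\eta,\cdot\rangle$ for the cup-product traces $\eta_\X$ and $\#\eta_\X$. The only caveat is that this duality is the genuinely non-trivial input: it is the main theorem of \cite{KRS} (not a formal consequence of the pointwise pairing plus a check of the constants \eqref{eq_const_Connes}, nor automatic ``by construction''), so the correct move is to cite that result, as the paper does, rather than rely on the quick verification you sketch at the end.
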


The proof of this statement is provided in \cite[Thm.~15]{KPR} and is based on the earlier work \cite{KRS}.
Let us point out that r.h.s.~of \eqref{eq23} corresponds to the Chern number of the vector bundle given by the
eigenvectors of $H_\alpha^U$.
On the other hand, the l.h.s.~corresponds to a $(n+1)$-trace applied to $\bG$ which is constructed
from the scattering theory for the operator $H_\alpha^U$. For these reasons, such an equality has been named
\emph{a higher degree Levinson's theorem}.
In the next section we illustrate this equality by a special choice of the manifold $\X$.

\subsubsection{A non-trivial example}\label{sec_non_trivial}

Let us now choose a $2$-dimensional manifold $\X$ and show that the previous relation
between the corresponding $2$-trace and $3$-trace is not trivial.
More precisely, we shall choose a manifold $\X$ such that the r.h.s. of \eqref{eq23} is not equal to $0$.

For that purpose, let us fix two complex numbers  $\lambda_1,\lambda_2$ of modulus $1$
with $\Im \lambda_1<0< \Im \lambda_2$ and consider the set $\X \subset \U_2(\C)$ defined by :
\begin{equation*}
\X = \left\{ V
\left(\begin{smallmatrix}
\lambda_1 & 0 \\
0 & \lambda_2
\end{smallmatrix}\right)
 V^* \mid V\in \U_2(\C)\right\}.
\end{equation*}
Clearly, $\X$ is a two-dimensional smooth and compact manifold without boundary, which can be parameterized by
\begin{equation}\label{param}
\X= \left\{
\left(\begin{matrix}
\rho^2 \lambda_1 + (1-\rho^2) \lambda_2 &
\rho(1-\rho^2)^{1/2} \;\!e^{i\phi}(\lambda_1-\lambda_2) \\
\rho(1-\rho^2)^{1/2} \;\!e^{-i\phi}(\lambda_1-\lambda_2) &
(1-\rho^2) \lambda_1 + \rho^2 \lambda_2
\end{matrix}\right)
\mid \rho \in [0,1] \hbox{ and } \phi \in [0,2\pi)\right\}.
\end{equation}
Note that the $(\theta,\phi)$-parametrization of $\X$ is complete in the sense that it covers all the manifold
injectively away from a subset of codimension $1$, but it has coordinate singularities at $\rho\in \{0,1\}$.

By \cite[Lem.~16]{PR}, for each $U\equiv U(\rho,\phi)\in \X$ the operator $H^U_\alpha$ has a single negative eigenvalue.
It follows that the projection $E_{\rm p}(H_\alpha^U)$ is non-trivial for any $\alpha \in (0,1)$ and any $U\in \X$,
and thus the expression $\big \langle \eta_{\X},[\bE]_0 \big\rangle$ can be computed.
This rather lengthy computation has been performed in \cite[Sec.~V.D]{KPR}, and it turns out that the following
result has been found for this example:
$$
\big \langle \eta_{\X},[\bE]_0 \big\rangle = 1.
$$
As a corollary of Theorem \ref{thm-ENN} one can then deduce that:

\begin{Proposition}\label{propfinal}
Let $\lambda_1,\lambda_2$ be two complex numbers of modulus $1$
with $\Im \lambda_1<0< \Im \lambda_2$ and consider the set $\X \subset \U_2(\C)$ defined by \eqref{param}.
Then the map $\bW_-: \X \to \EE$ is continuous and the following equality holds:
\begin{equation*}
\frac{1}{24\pi^2} \int_{\X\times \square}\tr\big[\bG^* \; \big(\dd_{\X\times\square}\bG\big) \wedge
\big(\dd_{\X\times\square}\bG^*\big) \wedge \big(\dd_{\X\times\square}\bG\big) \big]
= 1.
\end{equation*}
\end{Proposition}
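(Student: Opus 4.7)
The plan is to derive Proposition \ref{propfinal} as a corollary of the higher degree Levinson's theorem (Theorem \ref{thm-ENN}), combined with the computation $\langle\eta_{\X},[\bE]_0\rangle=1$ already recorded in the text. The work thus splits into (a) verifying the hypothesis of Theorem \ref{thm-ENN}, namely the continuity of $\bW_-:\X\to\EE$, and (b) identifying the abstract pairing $\langle\#\eta_\X,[\bG]_1\rangle$ with the explicit Chern--Simons-type integral appearing in the statement.

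For step (a), I would use the explicit formula \eqref{yoyo} for $W_-^{\CD}|_{\H_\2}$. The bounded functions of $A$ occurring on the left factor, namely $\varphi_0^-(A)$, $\varphi_{-1}^-(A)$, $\tilde\varphi_0(A)$, $\tilde\varphi_{-1}(A)$, do not depend on the parameters at all and are fixed elements of the algebra. Only the matrix-valued function $S_\alpha^{\CD}(\sqrt{H_0})-\operatorname{diag}(e^{-i\pi\alpha},e^{i\pi\alpha})$ depends on $(U,\alpha)$, through $C=C(U)$ and $D=D(U)$. Inspecting the explicit formula for $\widetilde S_\alpha^{\CD}(\lambda)$ in Section \ref{sec_AB_wave}, one sees that it is a jointly continuous function of $(\lambda,U,\alpha)$ on $[0,+\infty]\times\X$, since on $\X$ both $\det(D(U))$ and the bracketed inversion remain bounded away from zero (this is where the one-parameter structure of $\X$, with prescribed non-real eigenvalues $\lambda_1,\lambda_2$, is needed to exclude a possible blow-up). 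Joint continuity in the sup-norm on $\lambda$, combined with functional calculus in $H_0$, gives continuity of $\bW_-$ from $\X$ into $\EE$; continuity of $\bG=q\circ\bW_-$ follows automatically.

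For step (b), I would unfold the cup-product construction of $\#\eta_\X$ exactly as in \cite[Sec.~III.1.$\alpha$]{Connes}. The cycle over $C(\X;C(\square;\MM_2(\C)))\cong C(\X\times\S;\MM_2(\C))$ is $(\A_{\X\times\S}\otimes\MM_2(\C),\dd_\X+\dd_\S,\int_{\X\times\S}\otimes\tr)$, and its character on a unitary $\bG$ is, by definition, the Connes pairing
\begin{equation*}
\langle\#\eta_\X,[\bG]_1\rangle
= c_3\,[\tr\otimes\eta_{\X\times\S}](\bG^*,\bG,\bG^*,\bG)
= c_3\int_{\X\times\square}\tr\bigl[\bG^*\,(\dd\bG)\wedge(\dd\bG^*)\wedge(\dd\bG)\bigr],
\end{equation*}
with $\dd=\dd_{\X\times\square}=\dd_\X+\dd_\S$ and $c_3$ the normalization constant from \eqref{eq_const_Connes}. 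Applying the equality \eqref{eq23} of Theorem \ref{thm-ENN} and the known value $\langle\eta_\X,[\bE]_0\rangle=1$ then yields the numerical identity of the proposition after inserting the explicit value of $c_3$, which provides precisely the prefactor $\tfrac{1}{24\pi^2}$ up to sign.

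The main obstacle I expect is bookkeeping of signs and orientations. The constants $c_n$ in \eqref{eq_const_Connes} introduce a factor $-1/(24\pi^2)$, whereas the right-hand side of \eqref{eq23} carries its own minus sign, and the winding-number convention used throughout the paper differs from the Connes convention (see the footnote in Example \ref{exam2}); these three sign discrepancies must cancel to reproduce the stated $+1$. A careful tracking of orientations on $\X$ (induced by the parametrization \eqref{param}) and on $\square\cong\S$ (the clockwise convention fixed in Section \ref{sec_baby}) is what pins down the final sign. Modulo this sign verification, every other ingredient is either a direct quotation from Theorem \ref{thm-ENN} or a standard identification of a cup-product $n$-trace with an integral of a differential form, so no new analytical input beyond (a) is required.
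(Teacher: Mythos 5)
Your proposal is correct and follows essentially the same route as the paper: the proposition is obtained as a corollary of Theorem \ref{thm-ENN} combined with the computed value $\langle\eta_{\X},[\bE]_0\rangle=1$ (the paper delegates both this computation and the continuity of $\bW_-$ to \cite{KPR}, while you sketch the continuity via \eqref{yoyo}), and the pairing $\langle\#\eta_\X,[\bG]_1\rangle$ is unfolded into the stated integral through the constant $c_3$. The sign issue you flag resolves itself immediately: $c_3=-\tfrac{1}{24\pi^2}$ from \eqref{eq_const_Connes} cancels against the minus sign in \eqref{eq23}, giving the stated $+1$ with no further convention adjustments.
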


\section{Appendix}\label{sec_App}
\setcounter{equation}{0}

\subsection{The baby model}\label{subsec_baby}

In this section, we provide the proofs on the baby model which have not been presented in Section \ref{sec_baby}.
The notations are directly borrowed from this section, but we shall mainly review, modify and extend
some results obtained in \cite[Sec.~3.1]{Yaf}.

First of all, it is shown in \cite[Sec.~3.1]{Yaf} that the wave operators $W_\pm^\alpha$ exist and are asymptotically complete.
Furthermore, rather explicit expressions for them are proposed in \cite[Eq.~3.1.15]{Yaf}.
Let us also mention that an expression for the scattering operator $S^\alpha$ is given in \cite[Sec.~3.1]{Yaf}, namely
\begin{equation*}
S^\alpha = \frac{\alpha +i\sqrt{H_0}}{\alpha-i\sqrt{H_0}}.
\end{equation*}

In the following lemma, we derive new expressions for the wave operators. They involve the scattering operator $S^\alpha$
as well as the Fourier sine and cosine transforms $\F_{\rm s}$ and $\F_{\rm c}$ defined for $x,k \in \R_+$ and any $f \in C_{\rm c}(\R_+)\subset \ltwo(\R_+)$ by
\begin{align}
\label{Fouriers} [\F_{\rm s} f](k)&:= (2/\pi)^{1/2} \int_0^\infty \sin(k x) f(x)\;\!\d x \\
\label{Fourierc} [\F_{\rm c} f](k)&:= (2/\pi)^{1/2} \int_0^\infty \cos(k x) f(x)\;\!\d x.
\end{align}

\begin{Lemma}\label{lem_baby_W}
The following equalities hold:
\begin{align*}
W_-^\alpha&= 1+ \12 \big(1-i\F_{\rm c}^*\F_{\rm s}\big) [S^\alpha-1],\\
W_+^\alpha &= 1+ \12 \big(1+i\F_{\rm c}^*\F_{\rm s}\big) \big[(S^\alpha)^*-1\big].
\end{align*}
\end{Lemma}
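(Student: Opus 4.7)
The plan is to derive the expression for $W_-^\alpha$ from the stationary representation of the wave operator in terms of the generalized eigenfunctions of $H^\alpha$, and then obtain the formula for $W_+^\alpha$ algebraically from the identity $W_+^\alpha = W_-^\alpha (S^\alpha)^*$. First I would recall from \cite[Sec.~3.1]{Yaf} that the standing-wave generalized eigenfunctions of $H^\alpha$ are $\psi_k^\alpha(x) = (\alpha^2+k^2)^{-1/2}[\alpha\sin(kx) + k\cos(kx)]$, which upon introducing $\theta(k) := \arctan(k/\alpha)$ can be rewritten as $\psi_k^\alpha(x) = \sin(kx-\theta(k))$; the associated scattering coefficient is $\e^{-2i\theta(k)} = (\alpha+ik)/(\alpha-ik) = S^\alpha(k)$. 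The ``incoming'' physical eigenfunction adapted to $W_-^\alpha$ is $\Phi_k^\alpha(x) := \e^{-i\theta(k)}\psi_k^\alpha(x) = \frac{1}{2i}[S^\alpha(k)\,\e^{ikx} - \e^{-ikx}]$, and the stationary formula (cf.~\cite[Eq.~3.1.15]{Yaf}) yields $(W_-^\alpha f)(x) = \sqrt{2/\pi}\int_0^\infty \Phi_k^\alpha(x)\,[\F_{\rm s} f](k)\,\d k$ for $f$ in a suitable dense set.

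Then I would compute $W_-^\alpha - 1$ from the elementary decomposition
\[
\Phi_k^\alpha(x) - \sin(kx) = \frac{S^\alpha(k)-1}{2i}\,\e^{ikx} = \frac{S^\alpha(k)-1}{2}\sin(kx) - \frac{i(S^\alpha(k)-1)}{2}\cos(kx).
\]
Inserting this into the stationary formula and splitting the integral between its $\sin$ and $\cos$ parts gives $(W_-^\alpha - 1)f = \frac{1}{2}\F_{\rm s}^*(S^\alpha(L)-1)\F_{\rm s} f - \frac{i}{2}\F_{\rm c}^*(S^\alpha(L)-1)\F_{\rm s} f$, where $L$ denotes multiplication by $k$ on $\ltwo(\R_+)$. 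Since $\F_{\rm s}$ diagonalizes $\sqrt{H_{\rm D}}$, one has $\F_{\rm s}^*(S^\alpha(L)-1)\F_{\rm s} = S^\alpha - 1$ and $\F_{\rm c}^*(S^\alpha(L)-1)\F_{\rm s} = \F_{\rm c}^*\F_{\rm s}(S^\alpha - 1)$, whence the asserted formula for $W_-^\alpha$ follows at once.

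The formula for $W_+^\alpha$ then follows from the identity $W_+^\alpha = W_-^\alpha (S^\alpha)^*$ (a consequence of $S^\alpha = (W_+^\alpha)^* W_-^\alpha$ together with $\Ran(W_+^\alpha) = \Ran(W_-^\alpha) = \H_{\rm p}(H^\alpha)^\bot$) and from the simplification $(S^\alpha - 1)(S^\alpha)^* = -\;\!((S^\alpha)^* - 1)$, via a short rearrangement. The main obstacle is not this bookkeeping but the precise justification of the stationary formula in the first step: identifying $\Phi_k^\alpha$ (with $S^\alpha(k)$ multiplying $\e^{ikx}$) as the correct eigenfunction corresponds to the ``$-i0$'' prescription in the limiting absorption principle, and this is exactly what distinguishes $W_-^\alpha$ from $W_+^\alpha$. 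I would handle this point by directly invoking \cite[Sec.~3.1]{Yaf} once the notational dictionary is aligned.
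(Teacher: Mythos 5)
Your proof is correct and follows essentially the same route as the paper: your stationary formula $(W_-^\alpha f)(x)=\sqrt{2/\pi}\int_0^\infty \Phi_k^\alpha(x)\,[\F_{\rm s}f](k)\,\d k$ with $\Phi_k^\alpha(x)=\frac{1}{2i}\big[S^\alpha(k)\e^{ikx}-\e^{-ikx}\big]$ is precisely the content of \cite[Eq.~3.1.15 \& 3.1.20]{Yaf} that the paper manipulates at the operator level (there written $W_-^\alpha=\frac{i}{2}\big(\Pi_+-S^*\Pi_-\big)^*\F_{\rm s}$), and your splitting of $\Phi_k^\alpha-\sin(k\cdot)$ into sine and cosine parts reproduces its computation; the only deviation is that you obtain $W_+^\alpha$ from the identity $W_+^\alpha=W_-^\alpha(S^\alpha)^*$ rather than redoing the ``similar computation'' with the $+$ eigenfunctions, a harmless and slightly slicker variant. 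One cosmetic remark: with $\theta(k)=\arctan(k/\alpha)$ one has $\psi_k^\alpha(x)=\sin\big(kx+\theta(k)\big)$ and $S^\alpha(k)=\e^{2i\theta(k)}$, so your two intermediate sign conventions are each off by a sign; they compensate exactly, and your expression for $\Phi_k^\alpha$ and everything downstream are unaffected.
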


\begin{proof}
We use the notations of \cite[Sec.~3.1]{Yaf} without further explanations.
For $W_-^\alpha$, it follows from \cite[Eq.~3.1.15 \& 3.1.20]{Yaf} that
\begin{align*}
W_-^\alpha&= iP\F_-^*\F_{\rm s}
= \i2 (2\F_-P)^*\F_{\rm s}
= \i2 \big(\Pi_+ -S^*\Pi_-\big)^* \F_{\rm s}\\
&= \i2 \big(2i\F_{\rm s} + \Pi_- -S^*\Pi_-\big)^* \F_{\rm s}
= \i2 \big(-2i\F_{\rm s}^* - \Pi_-^*(S-1)\big) \F_{\rm s} \\
&= 1- \i2 \Pi_-^* \F_{\rm s} (S^\alpha-1).
\end{align*}
Thus, one obtains
\begin{equation*}
W_-^\alpha
=1 - \i2 \big(\F_{\rm c}^*+i\F_{\rm s}^*\big) \F_{\rm s} [S^\alpha-1] =1+ \12 \big(1-i\F_{\rm c}^*\F_{\rm s}\big) [S^\alpha-1].
\end{equation*}

A similar computation leads to the mentioned result for $W_+^\alpha$.
\end{proof}

Now, we provide another expression for the operator $-i\F_{\rm c}^*\F_{\rm s}$. For that purpose, let $A$ denote the generator of dilations
in $\ltwo(\R_+)$.

\begin{Lemma}\label{lemmaA}
The following equality holds
\begin{equation*}
-i\F_{\rm c}^*\F_{\rm s} = \tanh(\pi A) -i \cosh(\pi A)^{-1}.
\end{equation*}
\end{Lemma}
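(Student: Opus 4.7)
The plan is to diagonalise both sides of the claimed identity by means of the Mellin transform $\M=\F\V$ from Section \ref{subsec_example}, which has the property $\M A \M^* = X$ (multiplication by the variable on $\ltwo(\R)$). Since multiplication by $\tanh(\pi\cdot)-i\cosh(\pi\cdot)^{-1}$ on $\ltwo(\R)$ corresponds via $\M^*$ to the operator $\tanh(\pi A)-i\cosh(\pi A)^{-1}$ on $\ltwo(\R_+)$, it suffices to show that $-i\F_{\rm c}^*\F_{\rm s}$ becomes the same multiplication operator after conjugating by $\M$.

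First I would check that $\F_{\rm c}^*\F_{\rm s}$ is indeed a function of $A$, so that its Mellin transform is a multiplication operator. For any $t\in\R$ a direct change of variable in \eqref{Fouriers}--\eqref{Fourierc} yields
\begin{equation*}
\F_{\rm s}\;\!U_t = U_{-t}\;\!\F_{\rm s},\qquad \F_{\rm c}\;\!U_t = U_{-t}\;\!\F_{\rm c},
\end{equation*}
hence $\F_{\rm s}\;\!A\;\!\F_{\rm s}^* = -A$ and $\F_{\rm c}\;\!A\;\!\F_{\rm c}^* = -A$. The two sign reversals cancel, so $\F_{\rm c}^*\F_{\rm s}$ commutes with $A$, and $\M\F_{\rm c}^*\F_{\rm s}\M^*$ is multiplication by some bounded measurable function $\sigma$ of the Mellin variable $s$.

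To identify $\sigma$ I would evaluate $\F_{\rm c}^*\F_{\rm s}$ on the (distributional) generalised eigenfunctions $\varphi_s(\lambda):=\lambda^{is-1/2}$ of $A$ with eigenvalue $s$. Using the classical Mellin integrals
\begin{equation*}
\int_0^\infty \sin(k\lambda)\;\!\lambda^{a-1}\;\!\d\lambda = k^{-a}\;\!\Gamma(a)\sin\bigl(\pi a/2\bigr),\qquad
\int_0^\infty \cos(k\lambda)\;\!\lambda^{a-1}\;\!\d\lambda = k^{-a}\;\!\Gamma(a)\cos\bigl(\pi a/2\bigr),
\end{equation*}
with $a=is+1/2$, together with $\cos(\pi/4+i\pi s/2)=\frac{1}{\sqrt 2}[\cosh(\pi s/2)-i\sinh(\pi s/2)]$ and the analogous expansion for $\sin$, one finds that $\F_{\rm s}\varphi_s$ and $\F_{\rm c}\varphi_s$ are explicit scalar multiples of $\varphi_{-s}$ (consistent with the intertwining property above). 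Composing and using the reflection formula $\Gamma(1/2+is)\Gamma(1/2-is)=\pi/\cosh(\pi s)$ together with $[\cosh(\pi s/2)+i\sinh(\pi s/2)]^{2}=1+i\sinh(\pi s)$, the scalar factors collapse to
\begin{equation*}
\F_{\rm c}^*\F_{\rm s}\;\!\varphi_s = \bigl(\cosh(\pi s)^{-1}+i\tanh(\pi s)\bigr)\;\!\varphi_s,
\end{equation*}
so that $\sigma(s)=\cosh(\pi s)^{-1}+i\tanh(\pi s)$ and hence $-i\F_{\rm c}^*\F_{\rm s}$ corresponds to multiplication by $\tanh(\pi s)-i\cosh(\pi s)^{-1}$, as desired.

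The main obstacle is the rigorous justification of this formal distributional computation, since $\varphi_s\notin\ltwo(\R_+)$. I would handle this either by testing $\F_{\rm c}^*\F_{\rm s}$ on a dense set of Schwartz-type vectors whose Mellin transform is a smooth compactly supported function of $s$ and using Fubini together with the absolute convergence of the above Mellin integrals on suitable vertical strips, or alternatively by working from the principal value kernel
\begin{equation*}
[\F_{\rm c}^*\F_{\rm s}\;\!f](x) = \frac{1}{\pi}\;\!\Pv\!\int_0^\infty\frac{2y}{y^2-x^2}\;\!f(y)\;\!\d y
\end{equation*}
derived from the identity $\cos(kx)\sin(ky)=\frac12[\sin(k(y+x))+\sin(k(y-x))]$, and computing its Mellin symbol directly. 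Once $\sigma$ is identified on a dense set, the identity extends to all of $\ltwo(\R_+)$ by boundedness of both sides.
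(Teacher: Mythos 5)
Your computation is correct and reaches the right symbol, but your route is genuinely different from the paper's. You diagonalise $A$ via the Mellin transform: after observing that $\F_{\rm s}$ and $\F_{\rm c}$ anticommute with the dilation group (so $\F_{\rm c}^*\F_{\rm s}$ commutes with $A$ and, $A$ having simple spectrum, must be a function of $A$), you read off the symbol by letting $\F_{\rm s}$ and $\F_{\rm c}$ act on the generalised eigenfunctions $\lambda^{is-1/2}$, using the classical Mellin integrals of $\sin$ and $\cos$ and the reflection formula $\Gamma(\tfrac12+is)\Gamma(\tfrac12-is)=\pi/\cosh(\pi s)$; I checked that the scalar factors do collapse to $\cosh(\pi s)^{-1}+i\tanh(\pi s)$, so the identity follows. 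The paper instead works entirely on the kernel side: it introduces the regularised kernel $I_\varepsilon(x,y)=(1/\pi)\big[\frac{x+y}{(x+y)^2+\varepsilon^2}-\frac{x-y}{(x-y)^2+\varepsilon^2}\big]$, shows $I_\varepsilon\to\F_{\rm c}^*\F_{\rm s}$ as $\varepsilon\searrow0$, matches the limiting kernel against the general kernel formula for $\varphi(A)$ to identify $\check\varphi(s)=\frac{1}{\sqrt{2\pi}}\big[\frac{1}{\cosh(s/2)}-\Pv\frac{1}{\sinh(s/2)}\big]$, and then invokes the known distributional Fourier transforms of $1/\cosh(\cdot/2)$ and $\Pv\,1/\sinh(\cdot/2)$. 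Your fallback option (computing the Mellin symbol of the principal-value kernel) is essentially the paper's argument. What your primary route buys is that it avoids knowing the two distributional Fourier transforms and the kernel formula for $\varphi(A)$, replacing them by standard Gamma-function identities; what the paper's route buys is that the regularisation is built in from the start, so no argument with non-$\ltwo$ eigenfunctions is needed.

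One caveat on your justification step: the Mellin integrals of $\sin$ and $\cos$ are never absolutely convergent at infinity on vertical strips with $\Re a>0$ — they converge only as improper (Abel-regularised) integrals — so the appeal to "absolute convergence on suitable vertical strips" does not work as stated. This is easily repaired by inserting a damping factor $\e^{-\varepsilon\lambda}$ (exactly the device the paper uses) or by integrating by parts before applying Fubini on your dense set of test vectors; with that correction the plan goes through.
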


\begin{proof}
This proof is inspired by the proof of \cite[Lem.~3]{KR1D}.
Let us first define for $x,y \in \R_+$ and $\varepsilon>0$ the kernel of the operator $I_\varepsilon$ by
\begin{equation*}
I_\varepsilon(x,y):=(1/\pi)\Big[\frac{x+y}{(x+y)^2+\varepsilon^2} - \frac{x-y}{(x-y)^2+\varepsilon^2}\Big].
\end{equation*}
Then, an easy computation shows that
$I_\varepsilon(x,y) = (2/\pi)\int_0^\infty \cos(xz)\sin(yz)\e^{-\varepsilon z} \d z$,
and an application of the theorems of Fubini and Lebesgues for $f \in C_{\rm c}(\R_+)$ leads to the equality
\begin{equation*}
\lim_{\varepsilon \searrow 0}I_\varepsilon f = \F_{\rm c}^*\F_{\rm s} f.
\end{equation*}

Now, by comparing the expression for $[I_\varepsilon f](x)$ with the following expression
\begin{equation*}
[\varphi(A)f](x)= \frac{1}{\sqrt{2\pi}}\int_0^\infty \check{\varphi}\Big(\ln\big(\frac{x}{y}\big)\Big)\;\!\Big(\frac{x}{y}\Big)^{1/2} f(y) \frac{\d y}{x},
\end{equation*}
valid for any essentially bounded function $\varphi$ on $\R$ whose inverse Fourier transform is a distribution on $\R$,
one obtains that
\begin{equation*}
\check{\varphi}(s) =\frac{1}{\sqrt{2\pi}}\Big[\frac{1}{\cosh(s/2)}-\Pv\frac{1}{\sinh(s/2)}\Big],
\end{equation*}
where $\Pv$ means principal value.
Finally, by using that the Fourier transform of the distribution $s \mapsto \Pv\frac{1}{\sinh(s/2)}$
is the function $-i\sqrt{2\pi}\tanh(\pi \cdot)$ and the one of $s \mapsto \frac{1}{\cosh(s/2)}$ is
the function $\sqrt{2\pi}\cosh(\pi \cdot)^{-1}$, one obtains that
\begin{equation*}
\varphi(A)=\cosh(\pi A)^{-1} + i\tanh(\pi A).
\end{equation*}
By replacing $\F_{\rm c}^*\F_{\rm s}$ with this expression, one directly obtains the stated result.
\end{proof}

\begin{Corollary}
The following equalities hold:
\begin{align*}
W_-^\alpha&=1+ \12 \big(1+\tanh(\pi A) -i \cosh(\pi A)^{-1}\big) \Big[\frac{\alpha +i\sqrt{H_0}}{\alpha-i\sqrt{H_0}}-1\Big],\\
W_+^\alpha &=1+\12 \big(1-\tanh(\pi A) +i \cosh(\pi A)^{-1}\big) \Big[\frac{\alpha -i\sqrt{H_0}}{\alpha+i\sqrt{H_0}}-1\Big].
\end{align*}
\end{Corollary}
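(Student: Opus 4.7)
The Corollary is essentially a direct synthesis of the two preceding lemmas together with the explicit formula for the scattering operator recorded just before Lemma~\ref{lem_baby_W}, namely $S^\alpha = \frac{\alpha+i\sqrt{H_0}}{\alpha-i\sqrt{H_0}}$ (and hence $(S^\alpha)^* = \frac{\alpha-i\sqrt{H_0}}{\alpha+i\sqrt{H_0}}$). So my plan is simply to substitute.

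The plan is to start from the two identities provided by Lemma~\ref{lem_baby_W},
\[
W_-^\alpha = 1 + \tfrac12\bigl(1 - i\F_{\rm c}^*\F_{\rm s}\bigr)[S^\alpha-1],
\qquad
W_+^\alpha = 1 + \tfrac12\bigl(1 + i\F_{\rm c}^*\F_{\rm s}\bigr)\bigl[(S^\alpha)^*-1\bigr],
\]
and rewrite the two combinations $1\mp i\F_{\rm c}^*\F_{\rm s}$ using Lemma~\ref{lemmaA}. That lemma gives $-i\F_{\rm c}^*\F_{\rm s} = \tanh(\pi A) - i\cosh(\pi A)^{-1}$ directly, so $1 - i\F_{\rm c}^*\F_{\rm s} = 1 + \tanh(\pi A) - i\cosh(\pi A)^{-1}$. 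For the ``$+$'' version one just negates both terms, yielding $1 + i\F_{\rm c}^*\F_{\rm s} = 1 - \tanh(\pi A) + i\cosh(\pi A)^{-1}$. Substituting the closed-form expression for $S^\alpha$ (and its adjoint) on the right-hand side then produces the two displayed formulas.

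There is no genuine obstacle here: the identity for $1+i\F_{\rm c}^*\F_{\rm s}$ requires only that one recognize that both $\tanh(\pi A)$ and $i\cosh(\pi A)^{-1}$ appear with a literal minus sign in Lemma~\ref{lemmaA}, so multiplying through by $-1$ gives exactly the complementary combination. The only thing worth a brief sanity check is consistency of the bracket $[S^\alpha-1]$ with the multiplication operator $\sqrt{H_0}$, but since $S^\alpha$ is itself a function of $H_0$ by the formula recalled above, the right-hand side is unambiguously $\frac{\alpha+i\sqrt{H_0}}{\alpha-i\sqrt{H_0}}-1$ interpreted via the Borel functional calculus for $H_{\rm D}$. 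Consequently the proof is just the chain of substitutions Lemma~\ref{lem_baby_W} $\Rightarrow$ Lemma~\ref{lemmaA} $\Rightarrow$ formula for $S^\alpha$, and the Corollary falls out on a single line for each of $W_\pm^\alpha$.
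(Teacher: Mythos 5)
Your proposal is correct and coincides with the paper's own (implicit) argument: the Corollary is obtained precisely by inserting the identity $-i\F_{\rm c}^*\F_{\rm s} = \tanh(\pi A)-i\cosh(\pi A)^{-1}$ of Lemma \ref{lemmaA} (and its negative for the $W_+^\alpha$ case) into the expressions of Lemma \ref{lem_baby_W}, together with the formula $S^\alpha=\frac{\alpha+i\sqrt{H_0}}{\alpha-i\sqrt{H_0}}$ and its adjoint. Nothing further is needed.
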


\subsection{Regularization}\label{subsec_app_regul}

Recall that $\HS$ stands for an arbitrary Hilbert space and that $\U(\HS)$ corresponds to the set of unitary operators on $\HS$.
Let $\Gamma$ be a map $\S \to \U(\HS)$ such that $\Gamma(t)-1 \in  \K(\HS)$ for all $t \in \S$.
For $p \in \N$ we set $\K_p(\HS)$ for the $p$-th Schatten ideal in $\K(\HS)$.

\begin{proof}[Proof of Lemma \ref{lem_reg_1}]
For simplicity, let us set $A(t):=1-\Gamma(t)$ for any $t \in I$ and recall from \cite[Eq.~(XI.2.11)]{GGK} that
$\det_{p+1}\big(\Gamma(t)\big) = \det\big(1+R_{p+1}(t)\big)$
with
\begin{equation*}
R_{p+1}(t):= \Gamma(t)\exp\Big\{\sum_{j=1}^p\frac{1}{j}A(t)^j\Big\} -1\ .
\end{equation*}
Then, for any $t,s \in I$ with $s\neq t$ one has
\begin{eqnarray*}
\frac{\det_{p+1}\big(\Gamma(s)\big)}{\det_{p+1}
\big(\Gamma(t)\big)}
&=&\frac{\det\big(1+R_{p+1}(s)\big)}{\det
\big(1+R_{p+1}(t)\big)} \\
&=& \frac{\det\big[\big(1+R_{p+1}(t)\big)\big(1+B_{p+1}(t,s)\big)\big]}{\det
\big(1+R_{p+1}(t)\big)}\\
&=& \det\big(1+B_{p+1}(t,s)\big)
\end{eqnarray*}
with
$B_{p+1}(t,s) = \big(1+R_{p+1}(t)\big)^{-1} \big(R_{p+1}(s)-R_{p+1}(t)\big)$.
Note that $1+R_{p+1}(t)$ is invertible in $\B(\HS)$ because $\det_{p+1}\big(\Gamma(t)\big)$ is non-zero.
With these information let us observe that
\begin{eqnarray}\label{presque2}
\frac{\frac{
\det_{p+1}(\Gamma(s))- \det_{p+1}(\Gamma(t))}{|s-t|}
}{\det_{p+1}(\Gamma(t))}
= \frac{1}{|s-t|}
\big[\det\big(1+B_{p+1}(t,s)\big)-1\big]\ .
\end{eqnarray}
Thus, the statement will be obtained if the limit $s\to t$ of this expression exists and if this limit is equal to the r.h.s. of \eqref{aobtenir}.

Now, by taking into account the asymptotic development of $\det(1+\varepsilon X)$ for $\varepsilon$ small enough, one obtains that
\begin{eqnarray}\label{presque}
\nonumber &&\lim_{s \to t} \frac{1}{|s-t|}
\big[\det\big(1+B_{p+1}(t,s)\big)-1\big] \\
\nonumber &=& \lim_{s \to t} \tr\bigg[\frac{B_{p+1}(t,s)}{|s-t|}\bigg] \\
&=&
\lim_{s \to t} \tr\bigg[H_{p+1}(t)^{-1}\frac{H_{p+1}(s)-H_{p+1}(t)}{|s-t|}\bigg]
\end{eqnarray}
with $H_{p+1}(t):=\big(1-A(t)\big)\exp\big\{\sum_{j=1}^p\frac{1}{j}A(t)^j\big\}$.
Furthermore, it is known that the function $h$ defined for $z \in \C$ by $h(z):=z^{-(p+1)}(1-z)\exp\big\{\sum_{j=1}^p \frac{1}{j}z^j\big\}$ is an entire function,
see for example \cite[Lem.~6.1]{S1}.
Thus, from the equality
\begin{equation}\label{bellavista}
H_{p+1}(t) = A(t)^{p+1}h\big(A(t)\big)
\end{equation}
and from the hypotheses on $A(t)\equiv 1-\Gamma(t)$ it follows that the map $I \ni t \mapsto H_{p+1}(t)\in \K_1(\HS)$
is continuously differentiable in the norm of $\K_1(\HS)$. Thus, the limit \eqref{presque} exists,
or equivalently the limit \eqref{presque2} also exists.
Then, an easy computation using the geometric series leads to the expected result, {\it i.e.}~the limit
in \eqref{presque} is equal to the r.h.s.~of \eqref{aobtenir}.

Finally, for the last statement of the lemma, it is enough to observe from \eqref{bellavista} that
the map $I \ni t \mapsto H_{p}(t)\in \K_1(\HS)$ is continuously differentiable in the norm of $\K_1(\HS)$ if the map
$I \ni t\mapsto \Gamma(t)-1 \in \K_p(\HS)$ is continuously differentiable in norm of $\K_p(\HS)$.
Thus the entire proof holds already for $p$ instead of $p+1$.
\end{proof}

\begin{proof}[Proof of Lemma \ref{lem_reg_2}]
Let us denote by $\S_0$ the open subset of $\S$ (with full measure) such
that $\S_0\ni t \mapsto \Gamma(t) \in \K(\HS)$ is continuously differentiable.
One first observes that for any $t\in \S_0$ and $q>p$ one has
\begin{eqnarray*}
M_{q}(t)&:=&\tr\big[\big(1-\Gamma(t)\big)^{q}\Gamma(t)^*
\Gamma'(t)\big] \\
&=& \tr\Big[\big(1-\Gamma(t)\big)^{q-1}\Gamma(t)^*
\Gamma'(t)- \Gamma(t)\big(1-\Gamma(t)\big)^{q-1}\Gamma(t)^*
\Gamma'(t)\Big] \\
&=& M_{q-1}(t) -\tr\big[\big(1-\Gamma(t)\big)^{q-1}
\Gamma'(t)\big]
\end{eqnarray*}
where the unitarity of $\Gamma(t)$ has been used in the third  equality. Thus the statement will be proved by reiteration if one shows that
the map $\S_0\ni t\mapsto \tr\big[\big(1-\Gamma(t)\big)^{q-1} \Gamma'(t)\big]\in \K(\HS)$ is integrable, with
\begin{equation}\label{aintegrer}
\int_{\S_0} \tr\big[\big(1-\Gamma(t)\big)^{q-1}
\Gamma'(t)\big] \d t =0.
\end{equation}

For that purpose, let us set for simplicity $A(t):=1-\Gamma(t)$ and observe that for $t,s$ in the same arc of $\S_0$ and with $s\neq t$ one has
\begin{eqnarray*}
\tr[A(s)^q]-\tr[A(t)^q] =
\tr\big[A(s)^q-A(t)^q\big] =
\tr\Big[P_{q-1}\big(A(s), A(t)\big)\, \big(A(s)-A(t)\big) \Big]
\end{eqnarray*}
where $P_{q-1}\big(A(s), A(t)\big)$ is a polynomial of degree $q-1$ in the two non commutative variables $A(s)$ and $A(t)$.
Note that we were able to use the cyclicity because of the assumptions $q-1\geq p$ and $A(t)\in \K_p(\HS)$ for all $t \in \S$. Now, let us observe that
\begin{eqnarray*}
&&\bigg|\frac{1}{|s-t|}
\tr\Big[P_{q-1}\big(A(s), A(t)\big)\, \big(A(s)-A(t)\big)\Big]- \tr\Big[ P_{q-1}\big(A(t),A(t)\big)\,A'(t)\Big]\bigg|
\\
&\leq& \Big\|\frac{A(s)-A(t)}{|s-t|}\Big\|  \,\Big|
\tr\big[P_{q-1}\big(A(s), A(t)\big)-P_{q-1}\big(A(t), A(t)\big)\big]
\Big| \\
&& + \Big\|\frac{A(s)-A(t)}{|s-t|}-A'(t)\Big\|
\,\Big|
\tr\big[P_{q-1}\big(A(t), A(t)\big)\big]\Big|\ .
\end{eqnarray*}
By assumptions, both terms vanish as $s \to t$. Furthermore, one observes that $P_{q-1}\big(A(t),A(t)\big)=q A(t)^{q-1}$.
Collecting these expressions one has shown that
\begin{equation*}
\lim_{s\to t}\frac{\tr[A(s)^q]-\tr[A(t)^q]}{|s-t|}- q\,\tr[A(t)^{q-1}A'(t)] =0\ ,
\end{equation*}
or in simpler terms $\frac{1}{q}\big(\tr[A(\cdot)^q]\big)'(t) = \tr[A(t)^{q-1}A'(t)]$. By inserting this equality into \eqref{aintegrer} and
by taking the continuity of $\S\ni t\mapsto \Gamma(t)$ into account, one directly obtains that this integral is equal to $0$, as expected.
\end{proof}



\end{document}